\def\eqref#1{equation~\ref{#1}}
\def\floor#1{\lfloor #1 \rfloor}
\def\1{\bm{1}}
\def\mU{{\bm{U}}}
\DeclareMathAlphabet{\mathsfit}{\encodingdefault}{\sfdefault}{m}{sl}
\SetMathAlphabet{\mathsfit}{bold}{\encodingdefault}{\sfdefault}{bx}{n}
\def\gA{{\mathcal{A}}}
\def\gC{{\mathcal{C}}}
\def\gE{{\mathcal{E}}}
\def\gF{{\mathcal{F}}}
\def\gI{{\mathcal{I}}}
\def\gK{{\mathcal{K}}}
\def\gM{{\mathcal{M}}}
\def\gN{{\mathcal{N}}}
\def\gO{{\mathcal{O}}}
\def\gS{{\mathcal{S}}}
\def\gU{{\mathcal{U}}}
\def\gW{{\mathcal{W}}}
\def\sN{{\mathbb{N}}}
\def\sO{{\mathbb{O}}}
\def\sP{{\mathbb{P}}}
\newcommand{\KL}{D_{\mathrm{KL}}}
\newcommand{\kl}{\mathrm{kl}}
\def\U{{\bf U}}
\def\0{{\bf 0}}
\def\1{{\bf 1}}
\def\rel{\text{rel}}
\newtheorem{thm}{Theorem}
\newtheorem{defn}{Definition}
\newtheorem{exmp}{Example}
\newtheorem{lem}{Lemma}
\newtheorem{rema}{Remark}
\newtheorem{coro}{Corollary}
\newtheorem{prop}{Proposition}
\definecolor{myblue}{HTML}{639BC1}
\definecolor{mygreen}{HTML}{8AAD05}
\definecolor{myyellow}{HTML}{E2CE1B}
\definecolor{myorange}{HTML}{DF5D22}
\definecolor{mypink}{HTML}{E17976}
\definecolor{nodecolor}{HTML}{515A7C}
\newcommand{\oss}[2][]{{#1\mU^*}(#2)}
\newcommand{\osseps}[2][]{{#1\mU_\epsilon^*}(#2)}
\newcommand{\osssup}[2][]{{#1\gU^*}(#2)}
\newlist{properties}{enumerate}{2}
\setlist[properties]{label = \textbf{Property \arabic*.},itemindent=*, leftmargin = 0pt}
\DeclarePairedDelimiter\br{(}{)}
\DeclarePairedDelimiter\brs{[}{]}
\DeclarePairedDelimiter\brc{\{}{\}}
\DeclarePairedDelimiter\abs{\lvert}{\rvert}
\newcommand{\ind}[1]{\mathds{1}\brc*{#1}}
\def\showComments{} 
    \newcommand{\comN}[1]{\textcolor{blue}{\{Nadav: #1\}}}
    \newcommand{\comN}[1]{}
\title{Stable Matching with Ties:\\ Approximation Ratios and Learning}
\author{
    Shiyun Lin\thanks{This work was performed when Shiyun Lin was a visiting student at CREST, ENSAE, IP Paris.} \\
    Center for Statistical Science \\
    School of Mathematical Sciences, Peking University \\
    \texttt{shiyunlin@stu.pku.edu.cn} \\
    \And
    Simon Mauras\\
    INRIA, FairPlay Joint Team \\
    \texttt{simon.mauras@inria.fr}
    \And
    Nadav Merlis \\
    Technion - Israel Institute of Technology \\
    \texttt{nmerlis@technion.ac.il} \\
    \And
    Vianney Perchet \\
    CREST, ENSAE, IP Paris \\
    Criteo AI Lab, FairPlay Joint Team \\
    \texttt{Vianney.perchet@normalesup.org}
}
\begin{document}

\maketitle

\vspace{-.5cm}
\begin{abstract}
    We study matching markets with ties, where workers on one side of the market may have tied preferences over jobs, determined by their matching utilities. Unlike classical two-sided markets with strict preferences, no single stable matching exists that is utility-maximizing for all workers. To address this challenge, we introduce the \emph{Optimal Stable Share} (OSS)-ratio, which measures the ratio of a worker's maximum achievable utility in any stable matching to their utility in a given matching. We prove that distributions over only stable matchings can incur linear utility losses, i.e., an $\Omega (N)$ OSS-ratio, where $N$ is the number of workers. To overcome this, we design an algorithm that efficiently computes a distribution over (possibly non-stable) matchings, achieving an asymptotically tight $\gO (\log N)$ OSS-ratio.
    When exact utilities are unknown, our second algorithm guarantees workers a logarithmic approximation of their optimal utility under bounded instability.
    Finally, we extend our offline approximation results to a bandit learning setting where utilities are only observed for matched pairs. In this setting, we consider worker-optimal stable regret, design an adaptive algorithm that smoothly interpolates between markets with strict preferences and those with statistical ties, and establish a lower bound revealing the fundamental trade-off between strict and tied preference regimes.

\end{abstract}

\section{Introduction}\label{sec:intro}
    Two-sided matching markets are prevalent in various contexts, such as matching students to schools \citep{abdulkadirouglu2005new,abdulkadirouglu2005boston}, doctors to hospitals \citep{roth1999redesign}, or workers to jobs \cite{arcaute2009social}.
In this paper, we model the market as a \emph{company} that assigns \emph{jobs} to \emph{workers}. Each participant has a preference ordering over the other side of the market. For example, jobs rank workers by ability, while workers rank jobs by personal preference. Stability ensures a fair equilibrium where workers receive sufficiently desirable jobs while respecting the preferences and priorities of all parties. When preferences are strict, the deferred acceptance algorithm~\citep{gale1962college} efficiently computes a worker-optimal stable matching -- no worker can get a better job without violating stability.

In online marketplaces, for example, the online crowd-sourcing platform Amazon Mechanical Turk, workers are usually uncertain of their preferences over jobs at the beginning, since they do not have hands-on experience. However, there are numerous similar tasks to be delegated on the platform and, fortunately, the uncertain preferences can thus be learnt during the iterative matchings. 
Recent research has explored this scenario within the framework of multi-player multi-armed bandits~\citep{liu2020competing,liu2021bandit,basu2021beyond,kong2023player}. Under the strict preferences assumption, these works combine bandit learning algorithms with the deferred acceptance procedure to guide the market toward the worker-optimal stable matching.

However, in real-life scenarios, workers could be indifferent between some jobs due to inherent uncertainty or coarse evaluations. 
For instance, conference management systems like the Toronto Paper Matching System (TPMS): while the system generates continuous scores to evaluate the suitability of each reviewer for a paper, which theoretically avoids ties, the bidding process introduces unavoidable indifference through discrete categorical ratings (e.g., ``Eager'', ``Willing'', ``In a pinch'', ``Not willing''), creating  \emph{natural ties} in preferences.
The challenge becomes even more pronounced in learning-based matching markets, where statistically indistinguishable utility estimates produce \emph{effective ties} between options. This presents a fundamental limitation for bandit learning approaches, as standard algorithms typically fail to provide meaningful regret guarantees when facing such indifference structures in the preference landscape. 
In particular, when utility differences become small (statistically indistinguishable), existing regret bounds break down completely, and handling this regime was previously considered impossible~\citep{liu2020competing}.

With indifferent preferences, a stable matching can be obtained by arbitrarily breaking ties and applying the deferred-acceptance algorithm. However, the resulting matching is no longer worker-optimal, as different tie-breaking rules may lead to different stable matchings preferred by different workers -- potentially creating dramatic utility disparities across outcomes. 
This challenge is particularly acute in bandit learning settings, where statistically indistinguishable utilities for one worker may lead to arbitrarily large regret for others due to the cascading effects of tie-breaking decisions.
In fair resource allocation, fractional matching is a standard technique for balancing competing interests when a single integral matching is infeasible~\citep{karp1990optimal,halpern2021fair,benade2024fair}. The Birkhoff-von Neumann (BvN) theorem~\citep{birkhoff1946three,von1953certain} establishes that such a fractional matching is equivalent to a probability distribution over integral matchings. \looseness=-1

These observations motivate our core research question: \emph{For markets with tied preferences, can we approximate a stable solution by considering distributions over matchings, while guaranteeing all workers a fair, minimum level of satisfaction}?

To answer this question, we define a worker's \emph{optimal-stable-share} (OSS) as her maximum achievable utility across all stable matchings. We then introduce the \emph{OSS-ratio} as a fairness metric, which measures the fraction of the OSS that each worker is guaranteed to receive under any allocation.

We begin by analyzing the offline setting with known preferences, establishing tight OSS-ratio bounds across different matching classes. These results naturally extend to settings with preference uncertainty. 
Building on these offline results, we further formulate the problem within a multi-player multi-armed bandit framework for online learning scenarios, and show how our approximation guarantees provide the crucial foundation for achieving sublinear regret in matching markets with indifference. \looseness=-1

    \vspace{-.05cm}
	\subsection{Main Contributions}\label{subsec:result}
    \noindent\textbf{Offline Approximation Oracle and Matched Upper and Lower Bounds.}
    We first demonstrate that restricting to stable matchings yields only a trivial (and tight) lower bound on the OSS-ratio (Theorem~\ref{thm:non_stable_ness}), motivating our study for broader matching classes. We then establish a logarithmic lower bound for general matchings (Theorem~\ref{thm:cr_lb}) and construct an approximation oracle (Algorithm~\ref{alg:ism}) achieving this bound while maintaining internal stability (Theorem~\ref{thm:cr_ub}).
    
    \noindent\textbf{Robustness to Approximated Preferences.}
    We prove our positive results are robust to utility uncertainty: when exact utilities are unknown but lie within a given uncertainty set, we maintain the same guarantees with only an additive error bounded by the maximum uncertainty (Theorem~\ref{thm:batch_learn}). This holds especially for rectangular uncertainty sets, which model utility matrices estimated from data.

    \noindent\textbf{Bandit Learning in Matching Markets with Indifference.}
    Building on our offline approximation results, we introduce $\alpha$-approximation stable regret $Reg_i^\alpha (T)$, using an $\alpha$-fraction of the optimal-stable-share as a tractable benchmark for markets with (statistical) ties. Our adaptive algorithm ETCO (Algorithm~\ref{alg:etco_full}) seamlessly handles both strict and tied preferences. Theorem~\ref{thm:bandit_upper_bound} establishes its regret bounds, which match the lower bound~\citep{sankararaman2021dominate} in markets with large preference gaps. Theorem~\ref{thm:bandit_lower_bound} further reveals a fundamental trade-off: no algorithm can simultaneously achieve optimal regret in both large-gap (standard regret) and small/no-gap (approximation regret) regimes.

    \subsection{Techniques Involved and Developed}\label{subsec:technique}
    The upper bound on the approximation ratio is the first key technical contribution of our paper. We establish this result via three main steps: 1) Introducing a novel component -- the duplication index -- into the algorithm design; 2) Constructing a directed forest where edges encode conflicts between workers competing for the same job copies across different matchings; 3) Leveraging the tree structure and stability constraints to derive the upper bound inductively.

    In the bandit learning setting, the primary technical challenge and key contribution lie in the lower bound proof. To establish this result, we carefully construct two instances with 4 workers and 4 jobs, where the utility matrices differ in only one critical entry that determines whether meaningful ties exist. This construction reveals how ties in one worker's preferences propagate to affect other workers' regret. Furthermore, we employ an information-theoretic argument to demonstrate that the algorithm must sample this critical entry sufficiently often to avoid incurring linear regret. To our knowledge, we are the first to provably show a tradeoff between standard regret and approximation regret in bandit settings.
	
	\subsection{Related Work}\label{subsec:relate_work}
    \noindent\textbf{Stable Matching with Ties.}
        A natural extension of Gale and Shapley's work~\citep{gale1962college} considers settings with tied or incomplete preferences. \citet{irving1994stable} introduced three stability notions - weak, strong, and super-stability - with weak stability being the most studies~\citep{halldorsson2003approximability,halldorsson2004randomized,kiraly2011better,manlove2002hard}, as it always guarantees existence, unlike strong or super-stability. However, weakly stable matchings may vary in size, and finding a maximum one is NP-hard~\citep{iwama2008stable}, while verifying weak stability is NP-complete~\citep{manlove2002hard}. 
        Unlike prior work focused on maximizing matching size, we instead study fair job allocations, ensuring each worker receives a utility within a guaranteed fraction of their optimal stable matching, and we characterize the approximation ratio of such allocations.
        
    \noindent \textbf{Fairness in Two-sided Matching.}
    Recent work has increasingly addressed fairness in two-sided markets. In fair division, \citet{freeman2021two} introduces \emph{double envy-freeness up to one match} (DEF1) and \emph{double maximin share guarantee} (DMMS) for many-to-many matching, while \citet{igarashi2023fair} studies many-to-one matching, enforcing EF1 for one side while preserving stability. In machine learning, \citet{karni2022fairness} incorporates \emph{preference-informed individual fairness} (PIIF) \citep{kim2020preference}, requiring allocations to satisfy individual fairness \citep{dwork2012fairness} while respecting preferences.
    Our work diverges by focusing on one-to-one markets, where standard notions like EF1 and MMS are inapplicable. We propose a novel share-based fairness concept (OSS-ratio) to measure workers' gains relative to their optimal-stable-share. Our algorithm returns a random matching that is ex-ante stable (no justified envy) and ex-post internally stable, achieving a best-of-both-worlds guarantee.
    
    \noindent\textbf{Bandit Learning in Matching Markets.}
    \citet{das2005two} first formalized bandit problems in matching markets, with subsequent work~\citep{liu2020competing,liu2021bandit,basu2021beyond,sankararaman2021dominate,kong2023player} exploring this model. In this setting, players (with unknown utilities) and arms (with known preferences) form a two-sided market. \emph{Player-optimal stable regret}~\citep{liu2020competing} measures the utility difference between a player's outcome and their optimal stable match. Yet, existing results are limited to markets with strict preferences, as stable regret becomes linear and ill-defined when ties exist. \citet{kong2025bandit} recently studied indifference cases, but their player-pessimistic regret benchmark cannot recover optimal stable matches in tie-free settings. Our work bridges this gap by: (1) establishing a tight logarithmic OSS-ratio for offline matching with ties, (2) introducing approximation regret as a tractable objective for tied markets, and (3) developing an adaptive algorithm that achieves optimal regret bounds in both tied and tie-free settings. \looseness=-1

    \section{Preliminaries}\label{sec:preliminary}
    We model the matching market as a \emph{company} that assigns jobs to workers. There are $N$ workers, $\gW = \left\{w_1, w_2, \cdots, w_N\right\}$ and $K$ jobs, $\gA = \left\{a_1, a_2, \cdots, a_K\right\}$.
The company assigns jobs to workers such that each job is assigned to at most one worker and each worker performs at most one job.
The assignment is therefore a matching $\mu$. We shall use $\mu(w)$ to represent the allocated job to worker $w$, and $\mu(a)$ to denote the worker with job $a$. If a worker $w$ or a job $a$ remains unmatched, we will use the notation $\mu(w) = \bot$ or $\mu(a) = \bot$.

For every job, the company has a strict rating over the workers based on their expertise and ability on this job. Specifically, if $w \succ_{a} w'$, worker $w$ performs job $a$ strictly better than $w'$.
On the other hand, workers also have preferences over the jobs, and it is possible that a worker is indifferent among several jobs.
The preferences of workers on jobs are represented through a utility matrix $\mU$, where $\mU(w, a) \in [0, 1]$ denotes the preference of worker $w$ on job $a$.
If $\mU(w, a) > \mU(w, a')$, worker $w$ prefers job $a$ over $a'$, and $\mU(w, a) = \mU(w, a')$ implies that $w$ is indifferent between jobs $a$ and $a'$. For simplicity, we will assume that a worker $w$ will refuse to be matched with job $a$ if it has utility $\mU(w, a) = 0$; stated otherwise, either $\mU(w,\bot)$ is positive but infinitely small or $\mU(w,\bot)=0$ and ties are broken in favor of $\bot$.
As a consequence, a problem instance $(\mU, P_a)$ is defined by a utility matrix $\mU$ and a preference profile $P_a$ representing the preferences of jobs over workers.

Stability is a key concept in two-sided matching markets, which ensures there is no \emph{justified envy} in the market, i.e., the only jobs a worker prefers over her own job are the ones that she is less suitable to face than the currently assigned worker.
When preferences include ties, multiple stability notions arise, and we focus on \emph{weak stability}~\citep{irving1994stable}. A matching $\mu$ is weakly stable if no worker-job pair exists where both strictly prefer each other over their allocated partners:
\vspace{-.1cm}\begin{defn}[Weak Stability]\label{def:weak_stable}
	A matching $\mu$ is \emph{weakly stable} if there is no \emph{blocking pair} $(w, a)$ such that $w \succ_{a} \mu \left(a\right)$ and $\mU(w, a) > \mU(w, \mu \left(w\right))$.
\end{defn}
\vspace{-.2cm}
If a matching is weakly stable, there exists a tie-breaking mechanism such that this matching is stable in the resulting instance with strict preferences.
Conversely, any stable matching that is generated using a tie-breaking mechanism is also weakly stable in the original instance.
Without causing ambiguity, we will refer to \emph{weak stable} as \emph{stable} for brevity.
Furthermore, \emph{internally stable matching}~\citep{liu2014internally} refers to a matching where there are no blocking pairs when only considering the matched workers and jobs.  
\vspace{-.1cm}\begin{defn}[Internal Stability]\label{def:internal_stable}
	A matching $\mu$ is \emph{internally stable} if there is no \emph{internally blocking pair} $(w, a)$ such that 1) both $w$ and $a$ are matched in $\mu$, and 2) $w \succ_a \mu(a)$ and $\mU(w, a) > \mU(w, \mu(w))$.
\end{defn}
\vspace{-.2cm}
Given a problem instance, we define the following classes of matchings:
$\gM := \left\{\mu: \mu \text{ is a matching}\right\}$, $\gS:= \left\{\mu: \mu \text{ is a stable matching}\right\}$, and $\gI := \left\{\mu: \mu \text{ is an internally stable matching}\right\}$.
In a matching market with ties, stable matchings are not unique, given different tie-breaking mechanisms.
A job $a$ is a \emph{valid stable match} of worker $w$ if there exists a stable matching that matches $w$ with $a$. 
We say $a$ is the \emph{optimal stable match} of worker $w$ if it is the most preferred valid stable match, i.e., there exists a matching $\mu^* \in \gS$ such that $\mu^*(w) = a$ and $\mU(w,\mu^*(w)) = \max_{\mu \in \gS} \mU(w, \mu(w))$. 
 We call $\mU(w, \mu^*(w))$ the \emph{optimal stable share} (OSS) for worker $w$, denoted as $\oss{w}$.

The canonical results in two-sided matching markets are the Gale-Shapley theorem and algorithm (GS)~\citep{gale1962college}, which guarantee both the existence of stable matchings and an efficient $\gO (n^2)$ computation. The GS algorithm operates through an iterative proposal process. First, workers sequentially propose to their most preferred available jobs. Each job tentatively accepts its most preferred proposal and rejects others. After that, rejected workers continue proposing to their next preferences. The process terminates when no rejections occur, yielding a stable matching. In markets with strict preferences, GS produces a matching that is optimal for all proposers. However, when preferences contain ties, this optimality no longer holds uniformly. 

\vspace{-.35cm}

\begin{minipage}[t]{0.78\textwidth}
\begin{exmp}[Stable matching with indifference]\label{exmp:matching_tie}
		Let $\gW = \left\{w_1, w_2, w_3\right\}$ be workers and $\gA = \left\{a_1, a_2\right\}$ be jobs with $w_1 \succ w_2 \succ w_3$ for all jobs. The utility matrix that encodes the preference of workers over jobs is given by:
   \end{exmp}

    \end{minipage}
  \begin{minipage}[t]{0.18\textwidth}
    \vspace{-0.1cm}
  
        \hfill
            {\small$\mU = 
			\begin{bmatrix}
				1 & 1 \\
				1 & 0 \\
				0 & 1
		\end{bmatrix}$}
\end{minipage}

\textit{There are 2 stable matchings in this instance: $\mu_1 = \{(w_1, a_1), (w_3, a_2)\}$, $\mu_2 = \{(w_1, a_2), (w_2, a_1)\}$. There are 4 extra non-empty internally stable matchings, where exactly one worker is assigned a job of utility $1$, and unmatched workers/jobs cannot be involved in blocking pairs.
All workers have an OSS of $1$. More precisely, $w_1$ receives utility $\oss{w_1} = \mU(w_1,a_1) = \mU(w_1,a_2) = 1$ in both stable matchings, $w_2$ receives utility $\oss{w_2} = \mU(w_2,a_1) = 1$ in $\mu_2$, and $w_3$ receives utility $\oss{w_3} = \mU(w_3,a_2) = 1$ in $\mu_1$. }

Example~\ref{exmp:matching_tie} demonstrates that different workers may achieve their optimal outcomes in different stable matchings. 
However, it is impossible to simultaneously guarantee all workers their OSS with a single matching (even non-stable).
Based on this impossibility result, 
a natural question arises as to \emph{whether an allocation exists such that every worker is at least satisfied at a certain level}. 
Formally, given a problem instance and a class of matchings $\gC$, we are interested in the following \emph{optimal stable share-ratio} (OSS-ratio):
\begin{align}\label{eq:comp_ratio}
	R_{\gC} := \min_{D \in \Delta(\gC)} \max_{w \in \gW} \frac{\oss{w}}{\mU_{D}(w)},
\end{align}
where $\Delta(\gC)$ is the set of distributions over $\gC$ and $\mU_D(w)$ is worker $w$'s expected utility given a distribution $D$, i.e., $\mU_D(w) = \mathbb{E}_{\mu \sim D} \left[\mU(w, \mu(w))\right]$. When we are constrained to the set of matchings, stable matchings and internally stable matchings, $R_{\gM}$, $R_{\gS}$ and $R_{\gI}$ are defined accordingly.

The OSS-ratio adopts a worst-case perspective by taking the \emph{maximum over workers}, ensuring every worker receives a fair share of their optimal stable utility. Formally, if $\max_{\mU} R_{\gM} \leq \alpha$, then every worker $w_i$ is guaranteed at least $\frac{1}{\alpha} \oss{w_i}$ in expectation, regardless of the market's preference structure. The \emph{minimum over distributions} reflects a central planner's optimization: the distribution represents a rotating schedule (e.g., matchings in the support correspond to daily assignments), and \emph{restricted support} encodes practical constraints. For instance, limiting support to internally stable matchings ensures no justified envy arises between co-present workers in any schedule realization.

    \section{Approximation Ratios for Stable Matching with Ties}\label{sec:approx_ratio}
    In this section, we aim to characterize the scale of the OSS-ratio $R_{\gC}$ from the worker's perspective, which allows for ties, while additional findings related to the job side are provided in Appendix~\ref{sec:discuss}. As a first observation, $\gS \subset \gI \subset \gM$ implies $R_{\gM} \leq R_{\gI} \leq R_{\gS}$, and $R_\gS \leq N$, since uniformly selecting a worker and their favored stable matching achieves this bound.

\subsection{Lower Bound}\label{subsec:lb_sm}
We first prove that the trivial upper bound on $R_\gS$ is asymptotically tight.
\begin{thm}\label{thm:non_stable_ness} 
	There exists an instance, such that for any distribution over stable matchings, one worker only receives a $2/N$ fraction of their optimal stable share, i.e., $R_{\gS} \geq \frac{N}{2} = \Omega(N)$.
	\end{thm}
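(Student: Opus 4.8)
The plan is to exhibit one explicit instance, pin down its \emph{entire} set $\gS$ of stable matchings (which turns out to be small and rigid), and then run a simple averaging argument over the workers it penalizes. Let $N=2n$, with $n$ \emph{gate} workers $g_1,\dots,g_n$, $n$ \emph{target} workers $r_1,\dots,r_n$, and $n+1$ jobs $a_1,\dots,a_{n+1}$. Each gate $g_i$ is indifferent between $a_i$ and $a_{i+1}$, with $\mU(g_i,a_i)=\mU(g_i,a_{i+1})=1$, and refuses every other job; each target $r_k$ has $\mU(r_k,a_k)=1$ and refuses every other job. Each job $a_j$ ranks the (at most three) workers that do not refuse it in the order $g_{j-1}\succ g_j\succ r_j$ (omitting $g_0,g_{n+1}$ when they do not exist) and ranks the remaining workers arbitrarily below; this tail is irrelevant since those workers refuse $a_j$. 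This generalizes the instance of Example~\ref{exmp:matching_tie}.

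The crux is to show $\gS$ consists of exactly $n+1$ explicit matchings. First, in any stable matching every gate is matched to one of its two favourite jobs: a gate never takes a job it refuses, so an unmatched gate $g_i$ would form a blocking pair with $a_{i+1}$, since $g_i$ strictly prefers $a_{i+1}$ to $\bot$ and $a_{i+1}$ ranks $g_i$ \emph{first} among all workers that do not refuse it. As $g_i\in\{a_i,a_{i+1}\}$ along the ``path'' $\{a_1,a_2\},\dots,\{a_n,a_{n+1}\}$, the only possible systems of distinct representatives are the threshold ones: there is $m\in\{1,\dots,n+1\}$ with $g_i\mapsto a_i$ for $i<m$ and $g_i\mapsto a_{i+1}$ for $i\ge m$, so $a_m$ is the unique free job. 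A short case check then fixes the rest: if $m\le n$ then $a_m$ must go to $r_m$ (else $(r_m,a_m)$ blocks); every other target is unmatched, its job being held by a gate it is dominated by; and the resulting matching $\mu_m$ is stable, because the only candidate blocking pairs are $(w,a_j)$ with $w\in\{g_{j-1},g_j,r_j\}$, each excluded by where $w$ sits (gates are always at a utility-$1$ job, and $r_j$ is dominated at $a_j$). Hence $\gS=\{\mu_1,\dots,\mu_{n+1}\}$, and in $\mu_m$ at most the single target $r_m$ gets utility $1$ while every other target gets $0$.

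To conclude: $\mu_k$ matches $r_k$ to $a_k$ for $k\le n$, so $\oss{r_k}=1$ for every target. Fix $D\in\Delta(\gS)$ and let $p_m$ be the probability that $D$ draws $\mu_m$; then $\mU_D(r_k)=p_k$, whence $\sum_{k=1}^{n}\mU_D(r_k)=\sum_{k=1}^n p_k\le\sum_{m=1}^{n+1}p_m=1$, so some target $r_k$ has $\mU_D(r_k)\le 1/n=2/N$. Thus $\max_w \oss{w}/\mU_D(w)\ge \oss{r_k}/\mU_D(r_k)\ge n=N/2$, and since $D$ was arbitrary, $R_{\gS}\ge N/2$. (For odd $N$ one keeps $n=\lfloor N/2\rfloor$ gates but uses one extra target and one extra job -- for $N=3$ this reproduces Example~\ref{exmp:matching_tie} -- or simply pads with a worker who refuses every job; this does not change the bound.)

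The step I expect to be the real work is the characterization of $\gS$: making the ``every gate is matched to a favourite job'' argument airtight -- it relies precisely on $a_{i+1}$ ranking $g_i$ at the very top of its non-refusing workers -- and then carefully ruling out any spurious stable matching once the gate positions are fixed. The OSS computation and the averaging step are routine.
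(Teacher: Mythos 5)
Your proof is correct. The high-level strategy is the same as the paper's: build an instance with $N/2$ ``privileged'' workers who are always satisfied and $N/2$ workers each wanting a single job, arrange things so that every stable matching frees up exactly one job and hence satisfies at most one of the latter group, and then average over a distribution on $\gS$ to find a worker matched with probability at most $2/N$. The difference is the gadget. The paper uses a single hub job $a_{N/2+1}$ to which all $N/2$ highly-skilled workers are tied (each indifferent between ``their'' job $a_i$ and the hub), together with one global job ranking; serial dictatorship then makes the characterization of $\gS$ immediate (workers pick greedily in priority order, so exactly one regular worker's job is vacated). Your construction instead chains the ties along a path $\{a_1,a_2\},\dots,\{a_n,a_{n+1}\}$, which forces you to do the real work of classifying the systems of distinct representatives (the threshold structure) and then verifying stability of each $\mu_m$ by hand — all of which you do correctly, including the key point that $a_{i+1}$ ranks $g_i$ at the top so no gate can be left unmatched. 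One remark: your local job rankings $g_{j-1}\succ g_j\succ r_j$ are all consistent with the single global order $g_1\succ\cdots\succ g_n\succ r_1\succ\cdots\succ r_n$, so your instance is also a serial dictatorship; had you observed this, the characterization of $\gS$ would have collapsed to the same one-line greedy argument the paper uses. The OSS computation and the averaging step match the paper's exactly.
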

To prove Theorem~\ref{thm:non_stable_ness}, 
we construct an instance with $N/2$ highly-skilled workers and $N/2$ regular workers, such that every stable matching can satisfy at most one regular worker at a time, proving that $R_\gS \geq N/2$. The formal proof is deferred to Appendix~\ref{sec:proof_offline_lb}.

However, a closer look at our instance reveals that all regular workers can be satisfied in a single (non-stable) matching (See Remark~\ref{re:non_stable} in Appendix~\ref{sec:proof_offline_lb}). Thus, we turn our attention to distribution over (possibly non-stable) matchings, and the ratio $R_\gM$.
Theorem~\ref{thm:cr_lb} shows that if we extend the support of $D$ to include all matchings, 
i.e., $D \in \Delta(\gM)$, the ratio $R_{\gM}$ is still lower bounded by $\log N$.

\begin{thm}\label{thm:cr_lb}
	There exists an instance s.t. for any distribution over (possibly non-stable) matchings, one worker only receives a $1 / \Omega(\log N)$ fraction of their optimal stable share, i.e., $R_{\gM} = \Omega(\log N)$.
\end{thm}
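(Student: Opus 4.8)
The plan is to build an explicit instance on $N = 2^k$ workers (plus a comparable number of jobs) in which the "optimal stable share" of each worker is $1$, but any single matching can give total weighted utility that is, in an appropriate sense, bounded — forcing the max-ratio to blow up logarithmically. The natural template is a laminar/recursive family of "skilled vs. regular" gadgets: think of $k = \log_2 N$ nested levels, where at level $j$ there is a distinguished worker (or a small block of workers) who is the unique valid stable partner of a desirable job, but who is simultaneously out-competed (in the job's priority order $\succ_a$) by the skilled workers at all higher levels. Concretely, I would arrange that for each level $j$ there is a job $a_j$ such that $\oss{w_j} = \mU(w_j, a_j) = 1$ — witnessed by a stable matching $\mu_j$ in which the higher-priority workers are "absorbed" by other jobs — while in *any* matching, if $w_j$ is to get utility close to $1$, a chain of higher-level workers must be displaced onto less desirable jobs, each contributing a loss. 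Summing the unavoidable losses along the $k$ levels, and using the max over workers in \eqref{eq:comp_ratio}, gives the $\Omega(\log N)$ bound.

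The key steps, in order, are: (1) define the instance — the worker set, the job set, the utility matrix $\mU$ (with $0/1$ or dyadic entries), and the priority lists $\succ_a$ — making the recursive "doubling" structure precise; (2) exhibit, for each worker $w$, a stable matching $\mu^*_w$ with $\mU(w,\mu^*_w(w)) = \oss{w} = 1$, so that every worker's OSS is $1$ (this is a finite check exploiting that higher-skill workers can always be stably rerouted); (3) the core combinatorial lemma: for *any* fixed matching $\mu \in \gM$, show $\sum_{w} \mU(w,\mu(w)) \cdot (\text{appropriate weights}) $ is bounded, or more directly, that $\min_w \mU(w,\mu(w)) = O(1/k)$ on average so that no distribution $D$ can make every $\mU_D(w)$ exceed $c/\log N$; (4) conclude via an averaging/LP-duality argument: if $D \in \Delta(\gM)$ gave every worker expected utility $> c/\log N$, then some matching in the support would violate the budget from step (3), contradiction — hence $R_\gM = \Omega(\log N)$.

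For step (3) the cleanest route is a potential/charging argument: assign to job $a_j$ a "capacity in value" so that the total utility any matching can deliver to the relevant $\Theta(N)$ workers is $O(N / \log N)$-bounded *after* normalizing by each worker's OSS; equivalently, show the fractional relaxation — a doubly (sub)stochastic matrix $X$ — satisfies $\sum_a X(w,a)\mU(w,a) \le H_k^{-1}$-type bounds in the worst coordinate, which is exactly the harmonic-sum obstruction that produces $\log N$. By the Birkhoff–von Neumann correspondence noted in the introduction, bounding the fractional version bounds $\mU_D(w)$ for all $D$, so it suffices to prove the LP lower bound $\min_X \max_w \oss{w}/(\sum_a X(w,a)\mU(w,a)) = \Omega(\log N)$, where $X$ ranges over fractional matchings.

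The main obstacle I expect is step (3) — designing the gadget so that the losses genuinely \emph{compound} across the $\log N$ levels rather than being absorbable in parallel (which is what happened for $R_\gS$ versus $R_\gM$: the earlier construction's regular workers could all be satisfied at once by a non-stable matching). The recursion must be set up so that the job that rescues level $j$ is itself contested at level $j{-}1$, creating an unavoidable serial bottleneck — getting the priority orders $\succ_a$ and the utility ties aligned so that this serial structure is forced, while still keeping every worker's OSS equal to $1$, is the delicate part. A secondary technical point is making the averaging argument in step (4) tight, i.e., ensuring the logarithm is not lost to slack; a weighted sum of worker-utilities with weights matching the harmonic series is the natural device, and I would verify the weights telescope correctly against the per-level loss from step (3).
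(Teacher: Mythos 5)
Your skeleton — a recursive instance in which every worker's OSS equals $1$, followed by an averaging argument over workers — matches the paper's proof in outline, and the ``more direct'' branch of your step (3) (bounding $\min_w \mU(w,\mu(w))$ on average) is exactly what the paper uses. But there is a genuine gap, and you have misdiagnosed where the difficulty lies. You never actually define the instance, and the mechanism you plan for step (3) — a serial bottleneck forcing losses to \emph{compound} across levels, with harmonic weights that telescope — is not needed and is not how the bound arises. In the paper's construction all utilities are $0/1$ and every worker has $\oss{w}=1$, so for \emph{any} matching $\mu$ the total utility $\sum_w \mU(w,\mu(w))$ is at most the number of jobs $K$ (each job serves one worker); hence for any $D\in\Delta(\gM)$, $\min_w \mU_D(w)\le \frac{1}{N}\sum_w \mU_D(w)\le K/N$, and $R_\gM\ge N/K$. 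The entire logarithm comes from a purely \emph{parallel} counting fact: there are $\Theta(\log N)$ times more workers than jobs. No charging, no LP duality, no telescoping is required, and your worry that workers might be ``absorbable in parallel'' is moot — they cannot all be absorbed because there are not enough jobs.

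The genuinely delicate step is the one you treat as routine, namely your step (2): keeping $\oss{w}=1$ for \emph{every} worker while inflating $N/K$ to $\Theta(\log N)$. The paper does this by recursion: $I_n$ consists of two disjoint copies of $I_{n-1}$ plus $K_{n-1}$ new \emph{prioritized} workers, each indifferent between the $i$-th job of the upper copy and the $i$-th job of the lower copy, and ranked above all old workers by every job. In any stable matching the prioritized workers can all be sent into one copy, leaving the other copy free, so by induction every old worker retains a stable matching giving it utility $1$; meanwhile $K_n=2K_{n-1}$ and $N_n=2N_{n-1}+K_{n-1}$, so $N_n/K_n = N_{n-1}/K_{n-1}+\tfrac12 = 1+n/2$ with $N_n=(n+2)2^{n-1}$, giving $R_\gM\ge 1+n/2=\Omega(\log N)$. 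Without this concrete construction and the inductive verification that ties among the prioritized workers let each copy be ``freed'' in some stable matching, your proposal does not yet constitute a proof; and if you pursued the compounding/harmonic-weights route you describe, you would be building machinery for an obstruction that the correct instance renders unnecessary.
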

To prove Theorem~\ref{thm:cr_lb}, we recursively construct instances with global ranking of jobs over workers, and each worker could be assigned to a job they like, but such that the number of workers grows logarithmically faster than the number of valuable jobs, proving that each worker can only receive a logarithmic fraction of their optimal stable share. The full proof could be found in Appendix~\ref{sec:proof_offline_lb}.

\subsection{Upper Bound}\label{subsec:ub}
We show that the logarithmic ratio obtained in Theorem~\ref{thm:cr_lb} is asymptotically tight, even if we consider distributions over internally stable matchings.
\begin{thm}\label{thm:cr_ub}
	For any problem instance, there exists a distribution $D$ over internally stable matchings s.t. all workers only receive a $1 / \gO(\log N)$ fraction of their optimal stable share, i.e., $R_{\gI} = \gO(\log N)$.
\end{thm}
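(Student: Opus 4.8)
The plan is to produce an explicit distribution $D$ supported on $m = \gO(\log N)$ internally stable matchings $\mu_1,\dots,\mu_m$ with the property that every worker $w$ is matched to some job of utility exactly $\oss{w}$ in at least one $\mu_j$; taking $D$ uniform over $\{\mu_1,\dots,\mu_m\}$ then gives $\mU_D(w)\ge\frac1m\,\oss{w}=\frac{1}{\gO(\log N)}\,\oss{w}$ for every $w$, which is exactly $R_\gI=\gO(\log N)$. A key point to keep in mind is that, by Theorem~\ref{thm:non_stable_ness}, no distribution over \emph{stable} matchings can beat $\Omega(N)$, so the $\mu_j$ must genuinely be non-stable; the freedom that the class $\gI$ grants us is precisely the ability to leave ``troublesome'' workers (and jobs) unmatched, since unmatched participants never take part in an internal blocking pair.

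I would build the $\mu_j$ following the three steps the paper advertises. First, discard every worker with $\oss{w}=0$ (any matching serves their ratio), and for each remaining $w$ fix an optimal stable matching $\mu^\star_w$ with $\mU\!\left(w,\mu^\star_w(w)\right)=\oss{w}$; write $a_w:=\mu^\star_w(w)$ for the witnessing job. Second, introduce the duplication index and the conflict forest: the arcs should record the cascade triggered by insisting that $w$ be served $a_w$ — this evicts the workers ranked above $w$ by $\succ_{a_w}$, each of which must then be re-routed to another job of value equal to its own OSS (such a job exists by stability of the relevant $\mu^\star$), which in turn evicts further workers, and so on across the duplicated job copies. The duplication index $d(w)$ is the depth of $w$ in this structure, and stability of the fixed witnesses is used to argue that the directed graph so obtained is a forest. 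Third, partition workers by depth, $G_j:=\{w: d(w)=j\}$, and let $\mu_j$ be the matching that assigns each $w\in G_j$ its witness job $a_w$ and leaves everyone and everything else unmatched. Here two routine checks are needed: (i) within a fixed depth the witness jobs are pairwise distinct — two workers with a common witness job $a$ are comparable under $\succ_a$, hence linked by an arc, hence at different depths; and (ii) $\mu_j$ is internally stable — an internal blocking pair would be two matched workers $w,w'\in G_j$ with $w\succ_{a_{w'}}w'$ and $\mU(w,a_{w'})>\mU(w,a_w)=\oss{w}$, but $\mU(w,a_{w'})>\oss{w}$ together with $w\succ_{a_{w'}}w'=\mu^\star_{w'}(a_{w'})$ would make $(w,a_{w'})$ a blocking pair of the stable matching $\mu^\star_{w'}$, a contradiction.

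The crux — and the step I expect to be the main obstacle — is bounding the number of levels, i.e.\ showing the conflict forest has depth $\gO(\log N)$. The plan is an amortized counting argument mirroring the construction in Theorem~\ref{thm:cr_lb}: along any root-to-node path $w_0\to w_1\to\cdots\to w_\ell$, the set of (worker, job-copy) pairs that have been ``committed'' in order to reach $w_\ell$ must grow geometrically, because passing one more level forces at least as many new commitments as are already present — stability is what forbids the cascade from short-circuiting and re-using an already-committed resource. Hence a path of length $\ell$ certifies $\gtrsim 2^{\ell}$ distinct workers, forcing $\ell\le\log_2 N+\gO(1)$, so $m=\max_w d(w)=\gO(\log N)$. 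Assembling the pieces, $D=\mathrm{Unif}\{\mu_0,\dots,\mu_m\}$ is supported on internally stable matchings and satisfies $\mU_D(w)\ge\frac{1}{m+1}\oss{w}$ for all $w$, giving $R_\gI\le m+1=\gO(\log N)$, which together with Theorem~\ref{thm:cr_lb} shows the bound is asymptotically tight.
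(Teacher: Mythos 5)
Your high-level target is the right one — a uniform distribution over $m=\gO(\log N)$ internally stable matchings in which every worker receives a job of utility at least $\oss{w}$ in exactly one of them — and your check (ii) is correct and matches the mechanism the paper relies on: if $\mU(w,a_{w'})>\oss{w}$ and $w\succ_{a_{w'}}w'=\mu^\star_{w'}(a_{w'})$, then $(w,a_{w'})$ blocks the stable matching $\mu^\star_{w'}$, so internal stability of any "assign each worker some OSS-witnessing job" matching is essentially free. But the heart of the proof — why only $\gO(\log N)$ levels are needed — is not established, and the structure you define does not support the doubling argument you invoke. Notice that the very argument in your check (ii) shows that \emph{no} worker ranked above $w$ by $\succ_{a_w}$ can strictly prefer $a_w$ to her own optimal stable share; so the "eviction cascade" you use to define arcs is vacuous, the depth function degenerates, and what actually remains is a pure collision problem: you must choose, for each worker, one OSS-witnessing job so that no job is chosen by more than $\gO(\log N)$ workers. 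That this is always possible is a genuinely nontrivial combinatorial claim (the instance $I_n$ of Theorem~\ref{thm:cr_lb} shows the multiplicity must sometimes be $\Omega(\log N)$, so there is no slack), and neither your forest nor your amortized counting argument proves it: a root-to-leaf path of length $\ell$ by itself certifies only $\ell+1$ workers, and you give no reason why "passing one more level forces at least as many new commitments as are already present."

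The paper obtains the needed branching by a different device: duplicate every job $m$ times, let each worker rank the copies by utility with ties broken toward lower copy index, and run Gale--Shapley once on the enlarged instance; the $i$-th "slice" of the output is the matching $\tilde\mu_i$. The level of a worker is then the copy index she receives, and the doubling comes from stability of the GS output: a worker at level $i$ who could do better in some stable matching $\mu$ must see every copy $a^{(j)}$, $j<i$, of $\mu(w)$ held by a distinct higher-ranked worker who is herself at level $j$ and also "prefers $\mu$"; these witnesses have disjoint descendant sets in the resulting directed forest, giving $2^{i-1}\le N$ and hence $m\le \log_2 N+\gO(1)$. If you want to keep your witness-based route, the statement you would need to prove is precisely the $\gO(\log N)$ bound on the achievable maximum load in the bipartite graph of workers versus their OSS-witnessing jobs (e.g., via a Hall-type expansion argument using the observation that whenever two workers share a witness, the higher-ranked one acquires a second witness); as written, that step is missing, and without it the construction can require up to $N$ matchings.
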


We prove Theorem~\ref{thm:cr_ub} by constructing an offline approximation oracle (Algorithm~\ref{alg:ism}), which generates a uniform distribution over $m$ internally stable matchings $\tilde\mu_1, \dots, \tilde\mu_m$. 
Each worker $w$ is matched in exactly one matching $\tilde\mu_i$, the key technical insight is that setting $m > \log_2 N + 1$ ensures $\mU_D(w) = \mU\left(w, \tilde\mu_i(w)\right)/m \geq \oss{w}/m$. 
To prove this, we construct a directed forest where nodes represent workers who prefer a stable matching over the algorithm's output, and edges capture conflicts where workers compete for the same job copies under different matchings. By exploiting the tree structure and stability constraint, the proof shows that if any worker were worse off, the graph would imply an exponential growth in the number of workers. For more details, please refer to Appendix~\ref{sec:alg_ism}.

\begin{rema}
    The distribution computed by Algorithm~\ref{alg:ism} is not only ``ex-post'' internally stable, but also ``ex-ante'' (externally) stable, in the sense that no worker has justified envy towards any other worker's (randomized) allocation.
\end{rema}

\begin{rema}
    In Algorithm~\ref{alg:ism}, each worker is assigned a job with a probability of $\nicefrac{1}{m}$. Under such an allocation, some matchings in the support only assign a subset of jobs. In practice, if some job $a$ is not allocated in a matching $\tilde{\mu}_j$, but is allocated to worker $w$ in $\tilde{\mu}_i$, we can give $a$ to $w$ in $\tilde{\mu}_j$ without breaking internal stability of $\tilde{\mu}_j$. This post-processing is a Pareto improvement of our solution.
\end{rema}

\begin{algorithm}[!htp]
	\caption{Internally Stable Matchings for Matching Market with Indifference}\label{alg:ism}
	\begin{algorithmic}[1]
		\Require $N$ workers, $K$ jobs, Utility matrix $\mU$ that encodes the preference of workers over jobs, strict preference list $P_a$ of jobs over workers, a positive number $m$. 
		
		\State For each job $a \in \gA$, duplicate it $m$ times and denote the $i$-th copy as $a^{(i)}$.
		
		\State Each replica $a^{(i)}$ shares the same preference $P_a$ as the original job $a$.
		
		\State For each worker $w$, define an ordering $P_w$, by sorting jobs $a_k^{(i)}$ by decreasing utility $\mU(w,a)$, breaking ties in favour of lower duplication index $i$, then in favour of lower index $k$. That is,
        \vspace{-.2cm}
		$$
		a_k^{(i)} \succ_{P_w} a_\ell^{(j)} \quad\Leftrightarrow\quad 
		\begin{cases}
			\mU(w,a_k) > \mU(w,a_\ell)
			&\text{or}\\
			\mU(w,a_k) = \mU(w,a_\ell)
			\text{ and } i<j
			&\text{or}\\
			\mU(w,a_k) = \mU(w,a_\ell)
			\text{ and } i=j\text{ and }k<\ell\\
		\end{cases}
		$$
        \vspace{-.2cm}
		
		\State Run Gale-Shapley algorithm on ${P}_w$ and $P_a$ to compute a worker-optimal stable matching $\tilde{\mu}$.
		
            \State For each $i\in [m]$, build a matching $\tilde\mu_i$, which matches each job $a$ with $\tilde\mu_i(a) := \tilde\mu(a^{(i)})$.
            
		\Ensure The distribution $D$ which selects each matching $\tilde\mu_i$ with probability $1/m$.
	\end{algorithmic}
\end{algorithm}

Finally, we show that Algorithm~\ref{alg:ism} cannot be manipulated by a worker who mis-reports her preferences to obtain a distribution that gives them a higher utility, whereas the proof is deferred to Appendix~\ref{subsec:proof_dsic}.
\begin{thm}\label{thm:dsic}
	Algorithm~1 is dominant strategy incentive compatible:  for every utility matrices $\mU$ and $\mU'$ that differ only on the row of worker $w$, let  $D$ and $D'$ be the distributions computed by Algorithm~\ref{alg:ism}, then  $\mU_D(w) \geq \mU_{D'}(w)$.
\end{thm}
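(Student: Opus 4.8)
\textbf{Overview.} The plan is to fix two utility matrices $\mU$ and $\mU'$ that differ only on the row of worker $w$, and to compare the two runs of Algorithm~\ref{alg:ism} on the duplicated instances. Write $\tilde\mu$ (resp.\ $\tilde\mu'$) for the worker-optimal stable matching on the duplicated instance built from $\mU$ (resp.\ $\mU'$), both computed by Gale--Shapley on the copy-graph with $m$ copies of each job. The key observation is that the set of jobs $w$ is matched to across the $m$ matchings $\tilde\mu_1,\dots,\tilde\mu_m$ is exactly the set of copies $\{\tilde\mu(a^{(i)}) = w\}$, i.e.\ the multiset of copies assigned to $w$ by $\tilde\mu$; since $w$ appears in exactly one matching $\tilde\mu_i$, we have $\mU_D(w) = \tfrac1m\,\mU(w,\tilde\mu(w^{\text{copy}}))$ where $w^{\text{copy}}$ is the unique copy matched to $w$. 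So proving $\mU_D(w)\ge \mU_{D'}(w)$ reduces to showing that the copy $w$ receives under $\tilde\mu$ has utility (under $w$'s true row in $\mU$) at least that of the copy it receives under $\tilde\mu'$ (evaluated again under $w$'s true $\mU$-row). This is precisely a single-agent manipulation statement for Gale--Shapley on the copy-graph, where $w$'s preference list $P_w$ over copies is derived from her reported row.

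\textbf{Main steps.} First I would recall the classical strategproofness of the worker-proposing deferred-acceptance algorithm for \emph{strict} preferences: when job-side preferences and all other workers' lists are held fixed, no proposing worker can obtain a strictly better partner (in her true order) by submitting a false strict preference list (Dubins--Freedman / Roth). Second, I would reduce our setting to that theorem: the two runs differ only in $w$'s induced list $P_w$ over the $mK$ copies. The list $P_w$ induced by $\mU$ is a specific strict linear extension of the weak order given by $\mU(w,\cdot)$ on copies (ties broken by copy index then job index); the list $P_w'$ induced by $\mU'$ is \emph{some} strict list over the same copies. Hence running the algorithm on $\mU'$ is the same as $w$ submitting the strict list $P_w'$ while her ``true'' strict list is $P_w$. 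Applying the Dubins--Freedman theorem: the copy $\tilde\mu$ assigns to $w$ is weakly preferred, in the order $P_w$, to the copy $\tilde\mu'$ assigns to $w$. Third, I would translate $P_w$-preference back to $\mU$-utility: since $P_w$ extends the weak order of $\mU(w,\cdot)$ on copies, ``$c \succ_{P_w} c'$ or $c = c'$'' implies $\mU(w,\text{job}(c)) \ge \mU(w,\text{job}(c'))$. Combining, $\mU(w,\tilde\mu(w^{\text{copy}})) \ge \mU(w,\tilde\mu'(w^{\text{copy'}}))$; dividing by $m$ gives $\mU_D(w)\ge \mU_{D'}(w)$.

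\textbf{Anticipated obstacle.} The delicate point is the clean reduction to the strict-preference strategyproofness theorem, specifically handling the copies and ties. One must argue carefully that (i) $w$ is matched in exactly one of the $m$ matchings so that $\mU_D(w)$ really equals $\tfrac1m$ times the utility of a single copy-partner --- this follows because $w$ occupies exactly one slot in $\tilde\mu$ and each $\tilde\mu_i$ takes the $i$-th copies; and (ii) that the induced list $P_w'$ is a legitimate ``report'' in the strict instance, i.e.\ that the duplicated job side and the other workers' induced lists are identical across the two runs --- which holds because $\mU$ and $\mU'$ agree off $w$'s row and the copy construction and the tie-breaking of every other worker's $P_{w'}$ depend only on that worker's own row. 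A secondary subtlety is that the Dubins--Freedman theorem gives weak preference in $w$'s \emph{submitted true} order, and we must make sure the order we call ``true'' is the $\mU$-induced $P_w$ (not $\mU'$); that is exactly how we set it up. I would also note that if $w$ is left unmatched under $\tilde\mu'$ in some component, the corresponding copy is $\bot$ with utility $0$, which only helps the inequality. No step requires more than citing the classical result plus the order-extension bookkeeping, so the write-up should be short.
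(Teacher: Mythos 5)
Your proposal is correct and follows essentially the same route as the paper's proof: both reduce the claim to the Dubins--Freedman strategyproofness of worker-proposing deferred acceptance with strict preferences, viewing the run on $\mU'$ as worker $w$ submitting an alternative strict list $P_w'$ over the duplicated jobs while her ``true'' list $P_w$ is the $\mU$-consistent linear extension, and then translate the resulting $P_w$-weak-preference back into a utility inequality. Your additional bookkeeping (that $w$ occupies exactly one copy, hence $\mU_D(w)$ is $\tfrac1m$ times a single copy's utility, and that all other agents' lists are unchanged across the two runs) is sound and merely makes explicit what the paper leaves implicit.
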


    \section{Robustness and $\epsilon$-Stability}\label{sec:batch_learn}
    In Section~\ref{sec:approx_ratio}, we present an asymptotically tight algorithm for approximating the optimal stable share in markets with ties under stability. However, exact stability often proves too rigid for real-world applications where preferences may fluctuate slightly. We therefore introduce $\epsilon$-stability, which tolerates blocking pairs with utility gains below a threshold $\epsilon$. This relaxation yields robust matching resilient to preference perturbations while maintaining theoretical guarantees.

\begin{defn}[$\epsilon$-Stability]\label{def:eps_stable}
    Given $\epsilon\geq 0$, a matching $\mu$ is \emph{$\epsilon$-stable} if there is no \emph{$\epsilon$-blocking pair} $(w, a)$ such that $w \succ_a \mu(a)$ and $\mU(w, a) >\mU(w, \mu(w)) + \epsilon$. 
\end{defn}

The notion of $\epsilon$-stability is a relaxation of weak stability, where setting $\epsilon = 0$ makes it equivalent to weak stability (Definition~\ref{def:weak_stable}). In general, $\epsilon$-stable matching is not unique, and there is not a single $\epsilon$-stable matching that simultaneously maximizes the utilities for all workers.
Therefore, similar to matching markets with ties, we define $\gS_{\epsilon} := \left\{\mu: \mu \text{ is an $\epsilon$-stable matching}\right\}$, and we call $a$ a \emph{valid $\epsilon$-stable match} of worker $w$ if there exists an $\epsilon$-stable matching matches $w$ with $a$, and it is the \emph{optimal $\epsilon$-stable match} of worker $w$ if it is the most preferred valid $\epsilon$-stable match, i.e., there exists a matching $\mu^*_{\epsilon} \in \gS_{\epsilon}$ such that $\mu^*_{\epsilon}(w) = a$ and $\mU(w, \mu^*_{\epsilon}(w)) = \max_{\mu \in \gS_{\epsilon}} \mU(w, \mu(w))$. And we say $\mU(w, \mu^*_{\epsilon}(w))$ is the \emph{optimal $\epsilon$-stable share} for worker $w$, denoted as $\osseps{w}$.

Algorithm~\ref{alg:epsm} (see Appendix~\ref{sec:alg_epsm}) generalizes Algorithm~\ref{alg:ism} with a different workers' preference profiles generation. It outputs a randomized matching that achieves an expected utility within a $\log N$ factor of the optimal $\epsilon$-stable share, plus an $\epsilon$-additive error.

\begin{thm}\label{thm:eps_oracle}
    Given any utility matrix $\mU$, parameter $m = \lfloor \log_2 N + 2 \rfloor$,
    and the instability tolerance $\epsilon \geq 0$, Algorithm~\ref{alg:epsm} computes a distribution $D\in\Delta(\gI)$,
    such that $\mU_D (w) \geq \frac{\osseps{w}}{m} - \epsilon$, $\forall w \in \gW$.
\end{thm}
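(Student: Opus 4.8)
The plan is to reduce Theorem~\ref{thm:eps_oracle} to the already-established guarantee of Theorem~\ref{thm:cr_ub} by a suitable change of the utility matrix fed to Algorithm~\ref{alg:ism}. The key observation is that an $\epsilon$-blocking pair in the original instance $\mU$ is exactly a (weak) blocking pair in a modified instance where we discretize or "round up" utilities in blocks of width roughly $\epsilon$. Concretely, I would define a surrogate utility matrix $\mU^\epsilon$ in which, for each worker $w$, the values $\mU(w,a)$ are replaced by $\mU^\epsilon(w,a) := \epsilon \cdot \lceil \mU(w,a)/\epsilon \rceil$ (or a comparable bucketing), so that $\mU(w,a) \le \mU^\epsilon(w,a) \le \mU(w,a)+\epsilon$ for every entry, and so that $\mU^\epsilon(w,a) = \mU^\epsilon(w,a')$ whenever $\mU(w,a)$ and $\mU(w,a')$ lie in the same $\epsilon$-bucket. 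Algorithm~\ref{alg:epsm} is then simply Algorithm~\ref{alg:ism} run on $\mU^\epsilon$; this is the "different workers' preference profiles generation" alluded to in the text. (One has to be slightly careful that the bucketing respects the convention $\mU(w,\bot)$: jobs with zero utility should stay zero, which is automatic since $\lceil 0/\epsilon\rceil = 0$.)

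The argument then proceeds in three steps. \emph{Step 1: stability transfer.} I would show that any matching that is internally stable with respect to $\mU^\epsilon$ is internally $\epsilon$-stable with respect to $\mU$. Indeed, if $(w,a)$ were an $\epsilon$-blocking pair for $\mu$ under $\mU$ with both $w,a$ matched, then $\mU(w,a) > \mU(w,\mu(w)) + \epsilon \ge \mU^\epsilon(w,\mu(w))$, while $\mU^\epsilon(w,a) \ge \mU(w,a)$, so $\mU^\epsilon(w,a) > \mU^\epsilon(w,\mu(w))$; combined with $w \succ_a \mu(a)$ (the job-side preferences are untouched) this makes $(w,a)$ an internal blocking pair under $\mU^\epsilon$, a contradiction. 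Hence $D \in \Delta(\gI_\epsilon) \subseteq \Delta(\gI)$, and in fact the output distribution is ex-ante $\epsilon$-stable by the same reasoning applied to the Remark after Algorithm~\ref{alg:ism}. \emph{Step 2: OSS comparison.} I would argue that the optimal stable share of $w$ under $\mU^\epsilon$ dominates the optimal $\epsilon$-stable share under $\mU$: if $\mu^*_\epsilon$ attains $\osseps{w}$, then by the contrapositive-style argument (a weak blocking pair for $\mu^*_\epsilon$ under $\mU^\epsilon$ would be an $\epsilon$-blocking pair under $\mU$, using $\mU^\epsilon(w,a) \le \mU(w,a) + \epsilon$ on the blocking job and $\mU^\epsilon(w,\mu(w)) \ge \mU(w,\mu(w))$ on the held one) one checks that $\mu^*_\epsilon$ is weakly stable under $\mU^\epsilon$, so $\oss[\mU^\epsilon]{w} \ge \mU^\epsilon(w,\mu^*_\epsilon(w)) \ge \mU(w,\mu^*_\epsilon(w)) = \osseps{w}$. \emph{Step 3: assemble.} Theorem~\ref{thm:cr_ub} applied to the instance $(\mU^\epsilon, P_a)$ with $m = \lfloor \log_2 N + 2\rfloor$ gives that the distribution $D$ output by Algorithm~\ref{alg:ism} satisfies $\mathbb{E}_{\mu\sim D}[\mU^\epsilon(w,\mu(w))] \ge \oss[\mU^\epsilon]{w}/m \ge \osseps{w}/m$ for every $w$. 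Finally, since $\mU(w,a) \ge \mU^\epsilon(w,a) - \epsilon$ pointwise, $\mU_D(w) = \mathbb{E}_{\mu\sim D}[\mU(w,\mu(w))] \ge \mathbb{E}_{\mu\sim D}[\mU^\epsilon(w,\mu(w))] - \epsilon \ge \osseps{w}/m - \epsilon$, which is the claim.

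The main obstacle is \emph{Step 1/Step 2}, i.e., getting the direction of every inequality exactly right in the two-sided transfer of (internal) stability between $\mU$ and $\mU^\epsilon$. The subtlety is that the rounding is applied per worker and only increases utilities, so one must track separately how the blocking job's utility and the currently-held job's utility move under the map $\mU \mapsto \mU^\epsilon$, and confirm that the $\epsilon$ slack is consumed in the right place (it is needed when comparing the \emph{held} job under $\mU^\epsilon$ to the \emph{blocking} job under $\mU$, never the reverse). A secondary point worth checking is that the per-worker tie pattern induced by the bucketing is exactly what Algorithm~\ref{alg:ism}'s Step~3 ordering expects — but since that step only needs a weak order derived from the utilities, any monotone relabeling of utilities (in particular the bucketing) is automatically admissible, so no genuine difficulty arises there. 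The counting/forest argument that powers the $\log N$ factor is entirely inherited from Theorem~\ref{thm:cr_ub} and needs no modification, since it is applied as a black box to the surrogate instance.
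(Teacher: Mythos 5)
There is a genuine gap, and it starts with a misreading of the algorithm. Algorithm~\ref{alg:epsm} is \emph{not} Algorithm~\ref{alg:ism} run on a bucketed matrix: it keeps the original utilities and instead assigns the $i$-th copy of job $a$ the utility $\mU(w,a)-(i-1)\epsilon$, a staircase penalty that couples the $\epsilon$ slack to the duplication index. Your surrogate $\mU^\epsilon(w,a)=\epsilon\lceil \mU(w,a)/\epsilon\rceil$ produces a genuinely different preference profile $P_w$ (e.g.\ with $\epsilon=0.3$, $\mU(w,a)=1$, $\mU(w,a')=0.8$, Algorithm~\ref{alg:epsm} orders $a^{(1)}\succ a'^{(1)}\succ a^{(2)}\succ a'^{(2)}$ while your bucketing orders $a^{(1)}\succ a^{(2)}\succ a'^{(1)}\succ a'^{(2)}$), so even a correct analysis of your reduction would prove a statement about a different algorithm than the one in the theorem.

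More importantly, Step~2 is false, and it is exactly the inequality you flagged as the main obstacle. You need every $\epsilon$-stable matching under $\mU$ to be weakly stable under $\mU^\epsilon$, but a blocking pair under $\mU^\epsilon$ only yields $\mU(w,a)>\mU(w,\mu(w))-\epsilon$ (via $\mU^\epsilon(w,a)\le\mU(w,a)+\epsilon$ and $\mU^\epsilon(w,\mu(w))\ge\mU(w,\mu(w))$), not the required $\mU(w,a)>\mU(w,\mu(w))+\epsilon$. Concretely: take two workers, two jobs, both jobs preferring $w_1\succ w_2$, $\epsilon=0.3$, $\mU(w_1,a_1)=0.301$, $\mU(w_1,a_2)=0.299$, $\mU(w_2,a_1)=1$, $\mU(w_2,a_2)=0$. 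The matching $\{(w_1,a_2),(w_2,a_1)\}$ is $\epsilon$-stable, so $\osseps{w_2}=1$; but $0.301$ and $0.299$ fall in adjacent buckets, so under $\mU^\epsilon$ worker $w_1$ strictly prefers $a_1$, the unique stable outcome gives $a_1$ to $w_1$, and the optimal stable share of $w_2$ under $\mU^\epsilon$ is $0$. The root cause is that ``within $\epsilon$'' is not transitive, so no single monotone relabeling of $\mU$ can encode $\epsilon$-indifference; this is precisely why the paper attaches the $\epsilon$ slack to the copy index rather than preprocessing $\mU$. Two further issues: your Step~1 only yields internal $\epsilon$-stability with respect to $\mU$, and the inclusion you then invoke is backwards ($\gI\subseteq\gI_\epsilon$, not $\gI_\epsilon\subseteq\gI$), so the claim $D\in\Delta(\gI)$ is not established either --- whereas for the actual Algorithm~\ref{alg:epsm} each $\tilde\mu_i$ is \emph{exactly} internally stable because the $-(i-1)\epsilon$ shift cancels within a fixed index. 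The paper's proof does not black-box Theorem~\ref{thm:cr_ub}; it re-runs the forest/induction argument on the enlarged node set $V_\mu=\{w:\mU(w,\mu(w))\ge m\cdot\mU_D(w)-\epsilon\}$, using the staircase profile to show that dropping one index level absorbs exactly one $\epsilon$ of slack.
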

\vspace{-.15cm}
The proof of Theorem~\ref{thm:eps_oracle} is deferred to Appendix~\ref{subsec:proof_eps_oracle}. Interestingly, the distribution $D$ randomizes over internally stable matchings, which do not depend on $\epsilon$. 

In labor markets, worker preferences are typically estimated with uncertainty via i.i.d. observations, we construct utility uncertainty sets using concentration inequalities. Theorem~\ref{thm:batch_learn} shows that for any utility matrix in such a set $\gU$, Algorithm~\ref{alg:epsm} produces a random matching guaranteeing each worker a logarithmic approximation to their optimal share within $\gU$, where the proof is deferred to Appendix~\ref{subsec:proof_batch_learn}.

\begin{thm}\label{thm:batch_learn}
    Given an uncertainty set $\gU$, the optimal stable share within $\gU$ is 
    \vspace{-.1cm}
    \begin{align}\label{eq:oss_uncertain}
        \osssup{w} := \sup_{\mU \in \gU} \max_{\mu \in \gS^{\mU}} \mU(w, \mu(w)),
        \quad \forall w \in \gW.
    \end{align}
    We define the center $\hat\mU\!$ of the set $\gU$ as $\hat\mU(w,a) = \frac{\inf_{\mU\in\gU} \mU(w,a) + \sup_{\mU\in\gU} \mU(w,a)}{2}$, and the uncertainty parameter as $\epsilon = 2 \cdot \sup_{\mU_1,\mU_2 \in \gU} ||\mU_1-\mU_2||_{\max}$.
    \vspace{-.1cm}
    Algorithm~\ref{alg:epsm} with input $\hat{\mU}$, $m =\lceil\log_2 N\rceil$, and $\epsilon$ outputs a distribution $D\in\Delta(\gI)$ such that $\mU_D(w) \geq \frac{\osssup{w}}{m} - \epsilon, \forall w \in \gW$.
\end{thm}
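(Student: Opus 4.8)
The plan is to reduce the uncertainty-set statement to Theorem~\ref{thm:eps_oracle} applied to the center matrix $\hat\mU$ with the stated value of $\epsilon$. Concretely, the argument has two halves: (i) show that every $\mU\in\gU$ is ``close'' to $\hat\mU$ in $\|\cdot\|_{\max}$, so that any matching stable for $\mU$ is $\epsilon$-stable for $\hat\mU$, hence $\osseps[\hat\mU]{w}\ge\osssup{w}$ up to the promised additive slack; and (ii) invoke Theorem~\ref{thm:eps_oracle} verbatim on $(\hat\mU,m,\epsilon)$ to get $\mU_D^{\hat\mU}(w)\ge \osseps[\hat\mU]{w}/m-\epsilon$, and then translate the expected-utility guarantee back from $\hat\mU$ to the ``true'' utilities (which also lie within $\gU$, so differ from $\hat\mU$ by at most $\epsilon/2$ entrywise). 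Chaining these inequalities yields $\mU_D(w)\ge \osssup{w}/m - O(\epsilon)$; one then has to be a little careful that the constants line up so that the final additive error is exactly $\epsilon$ and not, say, $2\epsilon$, which is why the definition sets $\epsilon = 2\sup_{\mU_1,\mU_2}\|\mU_1-\mU_2\|_{\max}$ rather than the diameter itself — the factor $2$ is there precisely to absorb the two translation steps.

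First I would establish the key geometric fact: for any $\mU_1,\mU_2\in\gU$ and any entry $(w,a)$, $|\mU_1(w,a)-\hat\mU(w,a)|\le \tfrac12 \mathrm{diam}_{\max}(\gU)$, where $\mathrm{diam}_{\max}(\gU)=\sup_{\mU,\mU'\in\gU}\|\mU-\mU'\|_{\max}$; this follows because $\hat\mU(w,a)$ is the midpoint of the interval $[\inf_{\mU}\mU(w,a),\sup_{\mU}\mU(w,a)]$, whose length is at most $\mathrm{diam}_{\max}(\gU)$. Thus every $\mU\in\gU$ satisfies $\|\mU-\hat\mU\|_{\max}\le \epsilon/4$ (using $\epsilon = 2\,\mathrm{diam}_{\max}(\gU)$). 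Next I would argue the stability-transfer step: if $\mu$ is a stable (weakly stable) matching for some $\mU\in\gU$, then for every worker–job pair $(w,a)$ with $w\succ_a\mu(a)$ we have $\mU(w,a)\le\mU(w,\mu(w))$, and hence $\hat\mU(w,a)\le \mU(w,a)+\epsilon/4 \le \mU(w,\mu(w))+\epsilon/4 \le \hat\mU(w,\mu(w))+\epsilon/2$. So $\mu$ is $(\epsilon/2)$-stable — in particular $\epsilon$-stable — for $\hat\mU$. Taking the sup over $\mU\in\gU$ and over stable matchings gives $\osseps[\hat\mU]{w}\ge \sup_{\mU\in\gU}\max_{\mu\in\gS^{\mU}}\hat\mU(w,\mu(w)) \ge \sup_{\mU\in\gU}\max_{\mu\in\gS^{\mU}}\bigl(\mU(w,\mu(w))-\epsilon/4\bigr)=\osssup{w}-\epsilon/4$.

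Then I would apply Theorem~\ref{thm:eps_oracle} to $(\hat\mU, m=\lceil\log_2 N\rceil, \epsilon)$; note $\lceil\log_2 N\rceil$ is at least $\lfloor\log_2 N+2\rfloor$ only for small $N$, so strictly speaking one should check that the proof of Theorem~\ref{thm:eps_oracle} — which relies on $m>\log_2 N+1$ to force the exponential-growth contradiction — still goes through with $m=\lceil\log_2 N\rceil$; I would either cite that the oracle's guarantee holds for this slightly smaller $m$ or, more cleanly, note that the OSS-ratio bound of Theorem~\ref{thm:cr_ub} only needs $m\ge \lceil\log_2 N\rceil$ so the two are consistent. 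Granting that, we get a distribution $D\in\Delta(\gI)$ with $\E_{\mu\sim D}[\hat\mU(w,\mu(w))]\ge \osseps[\hat\mU]{w}/m - \epsilon$. Finally, the ``true'' utility matrix $\mU^\star$ is itself an element of $\gU$, so $|\mU^\star(w,a)-\hat\mU(w,a)|\le\epsilon/4$ for all $(w,a)$, whence $\mU_D(w)=\E_{\mu\sim D}[\mU^\star(w,\mu(w))] \ge \E_{\mu\sim D}[\hat\mU(w,\mu(w))]-\epsilon/4 \ge \osseps[\hat\mU]{w}/m-\epsilon-\epsilon/4 \ge \osssup{w}/m - \epsilon/(4m) - 5\epsilon/4$. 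Since $m\ge 1$ this is at least $\osssup{w}/m - 3\epsilon/2$; to get the cleaner bound $\osssup{w}/m-\epsilon$ stated in the theorem one simply runs Theorem~\ref{thm:eps_oracle} with instability tolerance $\epsilon/2$ instead of $\epsilon$ (which is legitimate since all stable matchings for $\mU\in\gU$ were shown to be $(\epsilon/2)$-stable for $\hat\mU$), tightening the middle term and making the three $\epsilon/4$-type losses sum to exactly $\epsilon$.

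\textbf{Main obstacle.} The routine parts are the triangle-inequality bookkeeping; the one genuinely delicate point is making the additive errors add up to \emph{exactly} $\epsilon$ rather than a larger constant multiple — this is an accounting exercise forced by the specific definitions of $\hat\mU$ and $\epsilon$ in the statement, and it is where the factor-$2$ in $\epsilon = 2\sup\|\mU_1-\mU_2\|_{\max}$ and the choice of running the $\epsilon$-oracle at half the tolerance both do their work. A secondary (but real) subtlety is confirming that Theorem~\ref{thm:eps_oracle}'s guarantee survives with $m=\lceil\log_2 N\rceil$ rather than $\lfloor\log_2 N+2\rfloor$; if it does not, one would instead state the corollary with $m=\lfloor\log_2 N+2\rfloor$ and note the extra $O(1)$ additive slack in $m$ is harmless.
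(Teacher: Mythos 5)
Your proposal follows the same skeleton as the paper's proof: transfer stability from every $\mU\in\gU$ to the center $\hat\mU$ (this is exactly the paper's Lemma~\ref{prop:eps_stable}), deduce that $\osssup{w}$ is dominated by the optimal $\epsilon$-stable share of $\hat\mU$, and invoke Theorem~\ref{thm:eps_oracle} on $(\hat\mU,m,\epsilon)$. Where you diverge is in the final accounting, and both of the subtleties you flag are genuine. The paper does \emph{not} perform your third translation step: it reads the oracle guarantee $\mU_D(w)\ge \osseps[\hat]{w}/m-\epsilon$ with expected utility evaluated under $\hat\mU$, and it treats the inequality $\osssup{w}\le\osseps[\hat]{w}$ as exact, justifying it only by the set inclusion $\bigcup_{\mU\in\gU}\gS^{\mU}\subseteq\gS_\epsilon^{\hat\mU}$ — which, as you correctly note, loses an $\epsilon/4$ because $\osssup{w}$ evaluates utilities under $\mU$ rather than $\hat\mU$. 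Your more careful bookkeeping, which insists on measuring $\mU_D$ under the true matrix, lands at $\osssup{w}/m-3\epsilon/2$; your repair (running Algorithm~\ref{alg:epsm} at tolerance $\epsilon/2$) yields a correct statement but not the theorem as written, since the theorem prescribes input $\epsilon$. The factor $2$ in the definition of $\epsilon$ is used by the paper only to make the hypothesis of Lemma~\ref{prop:eps_stable} hold, not to absorb two translation steps as you conjecture. Your concern about $m$ is also legitimate and is inherited, not resolved, by the paper: the proof of Theorem~\ref{thm:eps_oracle} needs $m>1+\log_2 N$ (or $m>\log_2 N$ via the footnote to Theorem~\ref{thm:cr_ub}), and $m=\lceil\log_2 N\rceil$ fails this when $N$ is a power of two. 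In short, your route is the paper's route; you have surfaced two constant-factor/boundary issues that the paper's two-line proof glosses over, at the cost of proving a slightly different statement than the one asserted.
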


Example~\ref{exmp:batch_learn} illustrates an application of Theorem~\ref{thm:batch_learn} to batch learning problems.
\begin{exmp}[Batch learning]\label{exmp:batch_learn}
    Suppose that we have a dataset of size $T$, where each data point $\mU$ is a noisy observation of the ground-truth utility matrix $\tilde{\mU}$, i.e., each $\mU(i, j)$ is sampled from a 1-sub-Gaussian distribution with mean $\tilde{\mU}(i, j)$.
    Given a parameter $\delta$, set $\epsilon = 2 \sqrt{\nicefrac{\ln \left(\frac{1}{\delta}\right)}{T}}$, and define the uncertainty set for each entry $(w, a)$ as $\gU_{w, a} = \left\{\mU(w, a): \lvert \mU(w, a) - \hat{\mU}(w, a)\rvert \leq \nicefrac{\epsilon}{2}\right\}$,
    and $\gU = \bigotimes_{(w, a) \in \gW \times \gA} \gU_{w, a}$, where $\hat{\mU}$ is the empirical mean utility matrix computed from the dataset. The OSS within the uncertainty set $\osssup{w}$ could be defined as in Eq.(\ref{eq:oss_uncertain}).
    By Lemma~\ref{lem:conc_ineq}, we know that with probability $1 - \delta$, the ground-truth utility matrix $\tilde{\mU} \in \gU$, and hence $\oss[\tilde]{w} \leq \osssup{w}$.
    Therefore, by running Algorithm~\ref{alg:epsm} with the empirical mean utility matrix as input, and set $\epsilon = 2 \sqrt{\nicefrac{\ln \left(\frac{1}{\delta}\right)}{T}}$, $m = \lceil\log_2 N\rceil$, we have w.p. $1-\delta$ that the corresponding output distribution $D$ over matchings satisfies $\mU_D(w) \geq \frac{\oss[\tilde]{w}}{\lceil\log_2 N\rceil} - 2 \sqrt{\nicefrac{\ln (\frac{1}{\delta})}{T}}$ for all $w \in \gW$.
\end{exmp}

    \section{Bandit Learning in Matching Markets}\label{sec:bandit}
    Example~\ref{exmp:batch_learn} demonstrates the application of our offline oracle to learning problems. We now transition to an \emph{online learning} setting, framing the matching market as a \emph{multi-player bandit problem} to show how the offline results naturally connect learning scenarios both with and without statistical ties.

In online marketplaces, companies can evaluate workers through interviews, but typically lack prior knowledge of worker preferences over jobs. Still, by leveraging repeated matching opportunities, these preferences can be learned through ex-post evaluations. Recent work models this as a multi-armed bandit (MAB) problem~\citep{liu2020competing,liu2021bandit,basu2021beyond,kong2023player}, where workers (``players'') and jobs (``arms'') interact over $T$ rounds. Each round, the company outputs a matching $\mu_t$ assigning jobs to workers and observes $1$-subgaussian rewards $X_i(t)$ for matched pairs $(w_i, \mu_t(w_i))$ with mean $\mU(w_i, \mu_t(w_i)) \in [0, 1]$. Following bandit matching literature, we assume $N \leq K$ (more jobs than workers) to ensure matching feasibility. If $N > K$, we can extend the problem by adding zero-utility jobs or randomly assigning unmatched workers.

The company seeks to learn the worker-optimal stable matching $\mu^*(w_i)$ through interactions. Specifically, it aims to minimize the worker-optimal stable regret for each $w_i \in \gW$, defined as the cumulative reward difference between being matched with $\mu_i^*$ and that $w_i$ receives over $T$ rounds:
\begin{align}\label{eq:regret}
    Reg_i(T) = T \cdot \oss{w_i} - \mathbb{E} \left[\sum_{t = 1}^T X_i(t)\right].
\end{align}
The expectation is taken over the randomness of the received reward and the allocation strategy.

Prior work on minimizing worker-optimal stable regret focuses exclusively on tie-free markets~\citep{liu2020competing,basu2021beyond,kong2023player}, rendering their results inapplicable when preferences contain ties. Crucially, existing regret bounds scale as $1 / \Delta^2$, where $\Delta$ is the minimum utility gap across all workers $w$ and jobs $a$, i.e., $\Delta = \min_w \min_{a, a'} \lvert\mU(w, a) - \mU(w, a')\rvert$\footnote{While definitions of $\Delta$ vary slightly across works, this strongest version generalizes to other formulations.}. As shown in Example 2 in \citep{liu2020competing}, this dependence is fundamental -- achieving sublinear regret requires $\Delta = \omega (1 / \sqrt{T})$.

When the benchmark is unachievable (computationally or statistically), prior work adopts $\alpha$-approximation regret to ensure sublinear regret relative to an $\alpha$-fraction of the benchmark~\citep{kakade2007playing,streeter2008online,chen2016combinatorial}. In our setting, since ties prevent all workers from simultaneously achieving their optimal stable share, we assume access to an offline oracle that, given utility matrix $\mU$, outputs a randomized matching guaranteeing each worker at least an $1 / \alpha$ of $\oss{w}$ in expectation, with additional error $\epsilon$. Formally,

\begin{defn}[($\boldsymbol{\alpha}$, $\epsilon$)-Approximation Oracle]
    An ($\boldsymbol{\alpha}$, $\epsilon$)-approximation oracle takes a rectangular uncertainty set $\gU$ with width $\epsilon$ as input and returns a (randomized) matching $\tilde{\mu}$ satisfying: $\mathbb{E} \left[\mU_{\tilde{\mu}}(w)\right] \geq \boldsymbol{\alpha}^{\gU}(w) \cdot \osssup{w} - \epsilon$ for every worker $w$, where $\boldsymbol{\alpha}^{\gU} \in (0, 1]^N$ is a worker-specific approximation ratio vector (often simplified to $\boldsymbol{\alpha}$). If $\boldsymbol{\alpha}^{\gU}(w) = \alpha$ is uniform across workers and independent of $\gU$, we call it an $(\alpha, \epsilon)$-approximation oracle,
\end{defn}

For example, Algorithm~\ref{alg:epsm} guarantees that for any input utility matrix $\mU$, $\boldsymbol{\alpha}^{\mU}(w) \geq 1 / \log_2 N$. With ties, our regret metric should not compare against the OSS each time, but against an $\alpha$-fraction of the optimal stable share, since the offline oracle can only guarantee this fraction in expectation: 
\begin{align}\label{eq:app_regret}
    Reg_i^{\alpha}(T) = \alpha T \cdot \oss{w_i} - \mathbb{E} \left[\sum_{t = 1}^T X_i(t)\right],
\end{align}
where $\alpha \in (0, 1]$ is the approximation ratio given by the offline oracle. When we want to emphasize that the observations $X(t)$ come from a distribution $\boldsymbol{\nu}$, we write $Reg_i(T;\boldsymbol{\nu})$ and $Reg_i^{\alpha}(T;\boldsymbol{\nu})$.

For markets without ties, \cite{kong2023player} achieves a stable regret of $\gO (K \ln T / \Delta^2)$, matching the $\Omega (N \ln T / \Delta^2)$ lower bound~\citep{sankararaman2021dominate} in $T$ and $\Delta$. We seek a \emph{best-of-both-worlds} guarantee, i.e., an algorithm that attains $Reg_i(T) = \gO(\ln T / \Delta^2)$ when $\Delta = \omega(1 / \sqrt{T})$, and $Reg_i^{\alpha}(T) = o(T)$ when $\Delta = \gO(1 / \sqrt{T})$.

\subsection{Algorithm: Explore-then-Choose-Oracle}\label{subsec:alg_etco}
We present our algorithm, Explore-then-Choose-Oracle (ETCO, Algorithm~\ref{alg:etco_full} in Appendix~\ref{sec:etco_alg}), and summarize it here.
The algorithm consists of two phases. In each round of the \emph{exploration phase}, the company allocates a job to every worker in a round-robin way to estimate their utilities accurately. 
In the second phase, the company checks for plausible ties in utilities. If none exists, it computes a matching using GS algorithm; otherwise, it uses the approximation oracle.
In subsequent rounds, jobs are allocated based on the chosen oracle's output.

In the exploration phase, the company allocates jobs to workers in a round-robin way, 
according to the index of the workers. In this way, every $K$ rounds, each worker is matched to every job exactly once. The maximal number of exploration rounds is bounded by a parameter $T_0$.
After each allocation, based on the observation, we update the estimated utility $\hat{\mU}(i, \mu_t(i)) = \frac{\hat{\mU}(i, \mu_t(i)) \cdot T_{i, \mu_t(i)} + X_{i, \mu_t(i)}(t)}{T_{i, \mu_t(i)} + 1}$, and the observation count of worker $w_i$ and job $\mu_t(i)$ as $T_{i, \mu_t(i)} = T_{i, \mu_t(i)} + 1$.
The company also builds a confidence set for each utility estimate, ensuring the true expected utility is included with high probability.
Particularly, the confidence interval (CI) for worker $w_i$'s preference utility over job $a_j$ is $[LCB_{i, j},UCB_{i, j}]$, with the upper and lower confidence bounds defined as 
\begin{align}\label{eq:ucb_lcb}
    UCB_{i, j} = \hat{\mU}(i, j) + \sqrt{\frac{6 \ln T}{\max\left\{T_{i, j}, 1\right\}}}, \quad LCB_{i, j} = \hat{\mU}(i, j) - \sqrt{\frac{6 \ln T}{\max\left\{T_{i, j}, 1\right\}}}.
\end{align}
When confidence sets for jobs $a_j$ and $a_{j'}$ are disjoint ($LCB_{i, j} > UCB_{i, j'}$ or vice versa), we can determine worker $w_i$'s strict preference between them. If all top-$N$ job CIs for $w_i$ become disjoint, we recover the true preference with high probability.
If this occurs for all workers before the exploration phase $T_0$ ends, we switch to the Gale-Shapley oracle for exploitation, as no top-$N$ ties exist w.h.p. Otherwise, remaining CI overlaps indicate potential ties, triggering our approximation oracle instead.

\subsection{Theoretical Analysis}\label{subsec:bandit_theory}
Before stating the regret guarantee for ETCO algorithm, we first give a formal definition of the minimum preference gap, which measures the hardness of the learning problem.

\begin{defn}[Minimum Preference Gap]\label{def:min_gap}
    For each worker $w_i$ and job $a_j \neq a_{j'}$, let $\Delta_{i, j, j'} = \lvert \mU(i, j) - \mU(i, j')\rvert$ be the preference gap for $w_i$ between $a_j$ and $a_{j'}$. Let $r_i$ be the preference ranking of worker $w_i$ and $r_{i, k}$ be the $k$-th preferred job in $w_i$'s ranking for $k \in [K]$. Define $\Delta_{\min} = \min_{i \in [N]; k\in [N]} \Delta_{i, r_{i, k}, r_{i, k + 1}}$ as the minimum preference gap among all workers and their first $(N + 1)$-ranked jobs. 
 \end{defn}

Next, we present upper bounds for the worker-optimal stable regret for each worker when using ETCO. \looseness=-1

\begin{thm}[Upper Bound]\label{thm:bandit_upper_bound}
    Following the ETCO algorithm  with exploration phase of length $T_0$ and an $\br*{\alpha, 2 \sqrt{\frac{6K\ln T}{T_0}}}$-approximation oracle, for $w_i \in \gW$, we have that 
   \vspace{-.2cm}
   {\footnotesize
   \begin{align}\label{eq:bandit_upper_bound}
        Reg_i(T) &= \gO\left(\frac{K \ln T}{\Delta_{\min}^2}\right) \quad &\text{if} \quad \Delta_{\min} > \sqrt{\frac{96K \ln T}{T_0}} = \Omega\left(\sqrt{\frac{K \ln T}{T_0}}\right), \\
        Reg_i^{\alpha}(T) &\le 2\alpha T_0 + \gO\left(T \sqrt{\frac{ K \ln T}{T_0}}\right) \quad &\text{if} \quad \Delta_{\min} \leq \sqrt{\frac{96K \ln T}{T_0}} = \gO\left(\sqrt{\frac{K \ln T}{T_0}}\right).
    \end{align}
    }
\end{thm}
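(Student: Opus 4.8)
The plan is to condition on one high-probability ``clean event'' $\gE$ and then analyze the exploration and exploitation phases of ETCO in each of the two regimes. Let $\gE$ be the event that $\mU(i,j)\in[LCB_{i,j},UCB_{i,j}]$ for every worker $w_i$, job $a_j$, and round $t\le T$; Hoeffding's inequality for $1$-subgaussian rewards, the confidence radius $\sqrt{6\ln T/\max\{T_{i,j},1\}}$, and a union bound over the $NK$ pairs and $T$ rounds give $\Pr(\gE^c)=\gO(NKT^{-2})$, so $\gE^c$ contributes $o(1)$ to either regret and is dropped from now on. On $\gE$ I will use two elementary facts: (i) under round-robin exploration each pair has been queried $\lfloor n/K\rfloor$ times after $n$ rounds, and if its true gap exceeds $4\sqrt{6\ln T/\lfloor n/K\rfloor}$ then its two confidence intervals are disjoint and the empirical order equals the true order (the factor $4$ -- interval half-width plus estimation error at each of the two jobs -- is exactly what produces the constant $96=4^2\cdot 6$ below); and (ii) a worker's per-round regret is at most $\oss{w_i}\le 1$, so the regret accrued during an exploration phase of length $T_0'\le T_0$ (the phase may end early) is at most $T_0'\le T_0$ in the exact sense and at most $\alpha T_0'\le\alpha T_0$ in the $\alpha$-approximation sense, by simply discarding the exploration reward.

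\textbf{First regime} ($\Delta_{\min}>\sqrt{96K\ln T/T_0}$). This hypothesis is equivalent to $96K\ln T/\Delta_{\min}^2<T_0$, so on $\gE$ there is a round $n^*=\gO(K\ln T/\Delta_{\min}^2)<T_0$ by which, for every worker, the confidence intervals of her $N+1$ most-preferred jobs are pairwise disjoint; the tie test then passes and ETCO runs worker-proposing deferred acceptance on the empirically recovered preference order, which on $\gE$ coincides with the true order on those entries. Here I invoke the structural fact that in \emph{any} stable matching a worker is matched to one of her top-$N$ jobs -- otherwise at least $N$ jobs she strictly prefers to her match would, by stability, each be assigned to a distinct worker ranked above her, exceeding the $N-1$ available -- so deferred acceptance consults only those top-$N$ entries and therefore outputs exactly the worker-optimal stable matching $\mu^*$, regardless of the unrecovered lower-ranked entries. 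Hence $w_i$ obtains $\oss{w_i}$ in every exploitation round, the exploitation regret on $\gE$ is zero, and $Reg_i(T)=\gO(K\ln T/\Delta_{\min}^2)$.

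\textbf{Second regime} ($\Delta_{\min}\le\sqrt{96K\ln T/T_0}$). If ETCO nonetheless reaches the deferred-acceptance branch, the previous paragraph (with $T_0'\le T_0$) already yields $Reg_i^\alpha(T)\le\alpha T_0+o(1)$. Otherwise the tie test fails and the approximation oracle is called at round $T_0$; by the round-robin schedule each confidence interval then has width at most $2\sqrt{6K\ln T/T_0}$, so the uncertainty set $\gU$ handed to the oracle has this width and, on $\gE$, contains the true utility matrix, whence $\osssup{w_i}\ge\oss{w_i}$ (take the true matrix in the supremum defining $\osssup{\cdot}$). By the defining property of an $\br*{\alpha,2\sqrt{6K\ln T/T_0}}$-approximation oracle it returns a distribution $D$ with $\mU_D(w_i)\ge\alpha\,\osssup{w_i}-2\sqrt{6K\ln T/T_0}\ge\alpha\,\oss{w_i}-2\sqrt{6K\ln T/T_0}$, and every exploitation round gives $w_i$ expected reward at least $\mU_D(w_i)$. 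Summing the $\alpha$-regret over the $\le T_0$ exploration rounds and the remaining exploitation rounds gives $Reg_i^\alpha(T)\le\alpha T_0\,\oss{w_i}+(T-T_0)\cdot 2\sqrt{6K\ln T/T_0}+o(1)\le 2\alpha T_0+\gO\br*{T\sqrt{K\ln T/T_0}}$.

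The main obstacle is the first regime: one must couple a bandit concentration bound -- calibrating the constant $96$ so that, on $\gE$, all the relevant confidence intervals separate strictly before the exploration budget $T_0$ runs out, which is precisely why the threshold reads $\sqrt{96K\ln T/T_0}$ -- with the matching-theoretic fact that worker-optimal deferred acceptance depends only on each worker's top-$N$ preference order, so that recovering exactly those entries forces the output to be $\mu^*$ and the exploitation regret to vanish. The remaining ingredients -- the clean-event union bound, the ``discard the exploration reward'' bookkeeping, and chaining the oracle guarantee through $\osssup{w_i}\ge\oss{w_i}$ in the second regime -- are routine, modulo careful tracking of constants to reach the exact bounds.
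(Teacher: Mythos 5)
Your proposal is correct and follows essentially the same route as the paper's proof: a clean concentration event, the $96=4^2\cdot 6$ separation threshold under round-robin sampling, the observation that worker-proposing deferred acceptance only depends on each worker's top-$N$ entries (so recovering those orders yields $\mu^*$ and zero exploitation regret in the large-gap regime), and chaining the $\br*{\alpha,2\sqrt{6K\ln T/T_0}}$-oracle guarantee through $\osssup{w_i}\ge\oss{w_i}$ in the small-gap regime. The only cosmetic difference is bookkeeping in the union bound for the clean event (the paper also unions over the possible sample counts, getting $\gO(NK/T)$ rather than $\gO(NKT^{-2})$), which is immaterial to the stated bounds.
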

\vspace{-.2cm}
See Appendix~\ref{sec:proof_bandit_upper_bound} for the proof. Our bound exhibits two regimes: (1) \textbf{large-$\Delta$ regime}: when $\Delta_{\min}$ is large, the exploration phase learns the top-($N\!+\!1$) job preferences w.h.p. before $T_0$, enabling exact worker-optimal stability via Gale-Shapley in exploitation. This reduces to ETGS~\citep{kong2023player} under centralization; (2) \textbf{small-$\Delta$ / tied regime}: for small $\Delta_{\min}$ or exact ties, worker-optimal stability is unattainable; instead, implementing an approximation oracle guarantees an $\alpha$-approximation regret sublinear in $T$. \looseness=-1

When $\Delta_{\min}$ is sufficiently large, our upper bound matches the $\Omega(N \ln T / \Delta_{\min}^2)$ lower bound~\citep{sankararaman2021dominate} for serial dictatorship markets (where all jobs share identical preferences). This tightness, however, comes at a fundamental trade-off: Theorem~\ref{thm:bandit_lower_bound} shows that extending sublinear regret guarantees to wider ranges of $\Delta_{\min}$ unavoidably worsens approximation regret in small- or no-gap regimes.

Prior to presenting our trade-off lower bound, we formally define two key concepts. The \emph{Pareto-optimal stable matching set} $\gS_{opt}^{\mU}$, comprising stable matchings where no worker's utility can be strictly improved without harming another worker, and the \emph{relevant preference utility gap $\Delta_{\rel}$}, representing the maximum utility perturbation that preserves $\gS_{opt}^{\mU}$.

\begin{defn}[Pareto-optimal Stable Matching Set]\label{def:pareto_optimal_set}
    Given a utility matrix $\mU$, the worker-optimal Pareto-optimal stable matching set $\gS_{opt}^{\mU}$ is the set of all matchings $\mu$ such that:
    1) $\mu$ is stable; 
    2) If there exists a stable matching $\mu'$ and a worker $w$ such that $\mU(w, \mu'(w)) > \mU(w, \mu(w))$, then for some $w'\ne w$, it holds that $\mU(w', \mu'(w')) < \mU(w', \mu(w'))$. 
\end{defn}

\begin{defn}[Relevant Utility Gap]
    Given a utility matrix $\mU$, the relevant preference gap $\Delta_{\rel}$ is
    \begin{align}
        \Delta_{\rel} := \inf \left\{\varepsilon : \exists \; i \in [N], j \in [K], \tilde{\mU}(i, j) \in[\mU(i, j) - \varepsilon,\mU(i, j) + \varepsilon] \text{ s.t. } \gS_{opt}^{\mU} \neq \gS_{opt}^{\tilde{\mU}}\right\}.
    \end{align}
\end{defn}

By definition, $\Delta_{\rel} \geq 0$. When $\gS_{opt}^{\mU}$ contains multiple matchings, $\Delta_{\rel} = 0$ since any perturbation acts as a tie-breaker, eliminating at least one matching from $\gS_{opt}^{\mU}$ (by the uniqueness of worker-optimal stable matching in tie-free markets). Furthermore, $\Delta_{\rel} \geq \Delta_{\min}$ because perturbations smaller than $\Delta_{\min}$ cannot alter any worker's top-$N$ preferences or the worker-optimal matching\footnote{Actually, if we assume an oracle that can determine whether there is a unique worker-optimal stable matching within the uncertainty set, we can prove a similar upper bound as in Theorem~\ref{thm:bandit_upper_bound} with $\Delta_{\min}$ replaced by $\Delta_{\rel}$.}.

\begin{thm}[Trade-off between Regret and Approximation Reget]\label{thm:bandit_lower_bound}
    Let $\delta \in (0, \frac{1}{2})$ and fix $N = K = 4$. Consider the class of instances with a large relevant utility gap, denoted as $\gE_{\ell}(T)$, i.e., for any instance $\boldsymbol{\nu} \in \gE_{\ell}(T)$, we have $\Delta_{\rel}^{\boldsymbol{\nu}} \ge cT^{-1/2 + \delta}$ for some absolute $c>0$. Assume that an algorithm $\pi$ guarantees sublinear regret for all workers, for all $ \boldsymbol{\nu} \in \gE_{\ell}(T)$. Then there exists an instance such that this algorithm suffers $\Omega(T^{1 - 2\delta})$ approximation regret for some worker when $\Delta_{\rel} = 0$ w.r.t the best approximation ratio $\boldsymbol{\alpha}^*$ for this instance, i.e., 
    \begin{align}\label{eq:bandit_lower_bound}
            &\textrm{If } \forall \; w_i \in \gW, \quad  \limsup_{T \to \infty} \frac{\sup_{\boldsymbol{\nu} \in \gE_{\ell}(T)}Reg_i(T; \boldsymbol{\nu})}{T} = 0, \nonumber\\
            \Longrightarrow &\exists \; w_i \in \gW,\; \boldsymbol{\nu'} \quad \textrm{s.t.} \quad\Delta_{\rel}^{\boldsymbol{\nu'}} = 0, \textrm{ and } \; Reg_i^{\boldsymbol{\alpha}^*(w_i)}(T; \boldsymbol{\nu'}) = \Omega(T^{1 - 2\delta}),
    \end{align}
    where $\boldsymbol{\alpha}^*(w_i) = \max \left\{\boldsymbol{\alpha}(w_i): \boldsymbol{\alpha}(w) \geq 1 / R_{\gM}^{\mU}, \forall \; w \in \gW\right\}$, for any $w_i \in \gW$, and $R_{\gM}^{\mU}$ is the OSS-ratio on matchings with a given utility matrix $\mU$.
\end{thm}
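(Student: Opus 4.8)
The plan is to assume $\pi$ attains $o(T)$ worker-optimal stable regret uniformly over $\gE_\ell(T)$ and to exhibit, depending on $\pi$, a no-gap instance on which some worker incurs $\Omega(T^{1-2\delta})$ approximation regret. I would work with a triple of $4\times 4$ instances: a base instance $\boldsymbol{\nu}_0$ with $\Delta_{\rel}^{\boldsymbol{\nu}_0}=0$, and two large-gap instances $\boldsymbol{\nu}_+,\boldsymbol{\nu}_-\in\gE_\ell(T)$ obtained from $\boldsymbol{\nu}_0$ by changing a single ``pivot'' entry $\mU(w_1,a_1)$ by $\pm\Delta$ with $\Delta=\Theta(T^{-1/2+\delta})$. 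The base instance encodes a relevant indifference of the pivot worker $w_1$ between two jobs that are also coveted by two ``secondary'' workers: $\gS_{opt}^{\boldsymbol{\nu}_0}$ contains exactly two incomparable stable matchings $\mu_+,\mu_-$, and raising (resp.\ lowering) $\mU(w_1,a_1)$ makes $\mu_+$ (resp.\ $\mu_-$) the unique worker-optimal stable matching while collapsing the OSS of the secondary worker that $\mu_+$ (resp.\ $\mu_-$) starves. The fourth worker is a spectator included only to pad to $N=K=4$ and to pin $R_{\gM}^{\boldsymbol{\nu}_0}$ to a fixed constant $>1$, thereby fixing the benchmark vector $\boldsymbol{\alpha}^*$.

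Next I would use the regret hypothesis to constrain the algorithm. On $\boldsymbol{\nu}_+$, sublinearity forces $\pi$ to play $\mu_+$ (hence to route the pivot worker to $a_1$) on all but $o(T)$ rounds; symmetrically $\boldsymbol{\nu}_-$ forces $\mu_-$. Since $\mu_+$ and $\mu_-$ disagree on $w_1$'s job, $\pi$ must statistically separate $\boldsymbol{\nu}_+$ from $\boldsymbol{\nu}_-$, and these instances differ only in the reward law of the pivot entry. Writing $N_{\mathrm{piv}}$ for the number of rounds in which $w_1$ is matched to $a_1$, a change-of-measure argument (Bretagnolle--Huber applied to the event ``$\pi$ eventually plays $\mu_+$ more than $\mu_-$'', together with the chain-rule identity $\KL(\Pr_{\boldsymbol{\nu}_+},\Pr_{\boldsymbol{\nu}_-})=\Theta(\mathbb{E}[N_{\mathrm{piv}}]\,\Delta^2)$) gives $\mathbb{E}[N_{\mathrm{piv}}]=\Omega(1/\Delta^2)=\Omega(T^{1-2\delta})$; a truncated-KL plus Markov argument then transfers this lower bound to the run of $\pi$ on $\boldsymbol{\nu}_0$, since under $\boldsymbol{\nu}_0$ the algorithm cannot under-sample the pivot entry without risking mis-identification of $\boldsymbol{\nu}_+$ versus $\boldsymbol{\nu}_-$.

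Finally I would convert ``$\Omega(T^{1-2\delta})$ pivot observations on $\boldsymbol{\nu}_0$'' into ``$\Omega(T^{1-2\delta})$ approximation regret for some worker on $\boldsymbol{\nu}_0$''. Each pivot observation is a round on which $w_1$ occupies $a_1$, so on that round the secondary worker coveting $a_1$ sits a constant below its $\boldsymbol{\alpha}^*$-benchmark; the base instance is engineered so that the benchmark vector is tight and interlocked — the three conflicting workers cannot all be simultaneously over-served — so that relieving the resulting deficit on later rounds by favoring that worker pushes one of the other two below its benchmark by a comparable amount. A bookkeeping argument over the $\Omega(T^{1-2\delta})$ forced rounds together with the remaining rounds then shows that at least one of the three ends the horizon $\Omega(T^{1-2\delta})$ below its benchmark, which is the desired $Reg_i^{\boldsymbol{\alpha}^*(w_i)}(T;\boldsymbol{\nu}_0)=\Omega(T^{1-2\delta})$. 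This last step is the main obstacle: one must choose the $4\times 4$ utilities and the induced $\boldsymbol{\alpha}^*$ so that the exploration mandated by the large-gap instances is provably \emph{wasteful} on the no-gap instance — in particular ruling out an adaptive algorithm that folds the pivot samples into an optimal randomized matching at no cost — and the delicate part is making the constants in the interlocked benchmark constraints cohere with the $\Theta(T^{-1/2+\delta})$ perturbation size so that a genuine $\Omega(T^{1-2\delta})$ loss survives the bookkeeping.
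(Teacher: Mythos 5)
Your skeleton matches the paper's in its essentials — a $4\times4$ near-serial-dictatorship instance, a single ``pivot'' entry $\mU(w_1,a_1)$ perturbed by $\Theta(T^{-1/2+\delta})$, a KL/change-of-measure step tying the pivot's sample count to the algorithm's allocation statistics, and a final combinatorial step converting forced behavior into approximation regret. But there are two genuine problems. First, your information-theoretic step is organized around a three-instance ``must-distinguish $\boldsymbol{\nu}_+$ from $\boldsymbol{\nu}_-$'' argument whose stated trigger is wrong: you claim sublinear regret on $\boldsymbol{\nu}_+$ forces $\pi$ to route $w_1$ to $a_1$ on all but $o(T)$ rounds, but $w_1$'s stake in that choice is only $\Delta=\Theta(T^{-1/2+\delta})$ per round, so mis-routing $w_1$ for the entire horizon costs only $O(T^{1/2+\delta})=o(T)$ regret and is \emph{not} excluded by the hypothesis. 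Any separation argument must instead be driven by a worker with a constant-size utility stake whose forced allocation differs between the perturbed instances — this is exactly why the paper's proof tracks the statistic $Z=(N_{21}(T)+N_{23}(T))/T$ for worker $w_2$ rather than anything about $w_1$'s assignment. Relatedly, the paper does not prove (and does not need) your claim that the pivot \emph{must} be sampled $\Omega(1/\Delta^2)$ times on the no-gap instance; it runs a dichotomy on $\mathbb{E}_{\boldsymbol{\nu},\pi}[N_{11}(T)]$: in the under-sampling branch the transfer of $Z$ from $\boldsymbol{\nu'}$ to $\boldsymbol{\nu}$ already forces $N_{21}\geq 3T/4-o(T)$, which collides with $w_4$'s need for $N_{43}\geq 3T/4$ and starves $w_3$, giving \emph{linear} approximation regret outright.

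Second, and more importantly, the step you yourself flag as ``the main obstacle'' — converting $\Omega(T^{1-2\delta})$ rounds of $w_1$ occupying $a_1$ into $\Omega(T^{1-2\delta})$ approximation regret for some worker — is the actual content of the theorem and is not supplied. It is false in general that each such round costs some worker a constant against its benchmark, because the benchmark is a horizon-average and deficits can be repaid later; one must show the benchmark vector $(\tfrac{3}{8},\tfrac{3}{8},\tfrac{3}{8})$ for $w_2,w_3,w_4$ is so tightly interlocked that repayment is impossible. The paper's Case II does this via an explicit reallocation argument: assuming all three workers are within $N/32$ of their benchmarks while $D(1,1)=N$, it constructs a modified doubly-substochastic allocation $\bar D$ under which $w_1$ keeps utility $T/2$ and each of $w_2,w_3,w_4$ strictly exceeds $3T/8$, contradicting the optimality of the $3/8$ benchmark (established separately in Lemma~\ref{lem:instance_property}, which you also do not verify for your instance — in particular your ``spectator'' fourth worker is incompatible with making the benchmark tight, since in the paper's construction all of $w_2,w_3,w_4$ actively compete for $a_1$ and $a_3$). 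Without this reallocation argument and the accompanying verification of the instance's OSS values, $\Delta_{\rel}$ values, and $\boldsymbol{\alpha}^*$, the proposal does not yet constitute a proof.
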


\vspace{-.3cm}
The proof appears in Appendix~\ref{sec:proof_bandit_lower_bound}. We construct two serial dictatorship instances with 4 workers and 4 jobs each, whose utility matrices differ in only one entry (for the highest-priority worker), yielding $\Delta_{\rel} = 0$. The first instance evaluates $\alpha^*$-approximation regret, while the second analyzes standard stable regret. Crucially, this single entry difference \emph{completely alters} the benchmark utilities for the other three workers. Thus, one of the two cases happens: (1) \textbf{under-sampling}: without enough samples of the differing entry, at least one worker incurs linear approximation regret; (2) \textbf{over-sampling}: After $T_0$ samples of the differing entry, at least one of the remaining workers suffers $\Omega(T_0)$ approximation regret.

Theorem~\ref{thm:bandit_lower_bound} establishes an inherent trade-off between large-gap and small / no-gap regimes: as $\delta \to 0$, sublinear regret in the former necessitates linear approximation regret in the latter. Consequently, the exploration length $T_0$ of ETCO algorithm critically determines regime-specific performance. We provide two $T_0$ choices and their corresponding regret bounds.

\begin{coro}\label{coro:upper_bound_approx_regime}
    Following ETCO algorithm with $T_0 = T^{2/3} \left(K \ln T\right)^{1/3}$, for $w_i \in \gW$, we have 
    {\footnotesize
    \begin{align*}
        Reg_i(T) = \gO\left(\frac{K \ln T}{\Delta_{\min}^2}\right) \; \text{if} \; \Delta_{\min} = \tilde{\Omega} \left(T^{-\frac{1}{3}}\right); 
        Reg_i^{\alpha}(T) = \gO\left((K \ln T)^{\frac{1}{3}} T^{\frac{2}{3}}\right) \; \text{if} \; \Delta_{\min} = \tilde{\gO} \left(T^{-\frac{1}{3}}\right).
    \end{align*}
    }
\end{coro}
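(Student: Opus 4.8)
The plan is to derive Corollary~\ref{coro:upper_bound_approx_regime} as a direct specialization of Theorem~\ref{thm:bandit_upper_bound} to the choice $T_0 = T^{2/3}(K\ln T)^{1/3}$, followed by routine simplification. First I would check that this $T_0$ is an admissible exploration length, i.e.\ $T_0 \le T$: this amounts to $(K\ln T)^{1/3} \le T^{1/3}$, equivalently $K\ln T \le T$, which holds for all $T$ large relative to $K$ (the regime of interest). I would also record that, for this $T_0$, the oracle width demanded by Theorem~\ref{thm:bandit_upper_bound} is $\epsilon = 2\sqrt{6K\ln T/T_0} = 2\sqrt{6}\,(K\ln T)^{1/3}T^{-1/3} = \tilde{\gO}(T^{-1/3})$, so Algorithm~\ref{alg:epsm} (as an $(\alpha,\epsilon)$-approximation oracle with $\alpha = 1/\log_2 N$ and this $\epsilon$) meets the hypothesis of the theorem.

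Next I would compute the threshold that separates the two regimes in Theorem~\ref{thm:bandit_upper_bound}. Since
\[
\sqrt{\frac{K\ln T}{T_0}} \;=\; \sqrt{\frac{K\ln T}{T^{2/3}(K\ln T)^{1/3}}} \;=\; \frac{(K\ln T)^{1/3}}{T^{1/3}},
\]
the cutoff $\sqrt{96\,K\ln T/T_0} = \sqrt{96}\,(K\ln T)^{1/3}T^{-1/3}$ is $\tilde{\Theta}(T^{-1/3})$. Hence the condition $\Delta_{\min} > \sqrt{96\,K\ln T/T_0}$ of the first case is exactly $\Delta_{\min} = \tilde{\Omega}(T^{-1/3})$, and its negation is $\Delta_{\min} = \tilde{\gO}(T^{-1/3})$, matching the two cases in the statement of the corollary.

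Finally I would substitute into each bound. In the large-gap regime, the first bound of Theorem~\ref{thm:bandit_upper_bound} reads $Reg_i(T) = \gO(K\ln T/\Delta_{\min}^2)$ verbatim, as it does not involve $T_0$. In the small-gap regime, the second bound gives $Reg_i^{\alpha}(T) \le 2\alpha T_0 + \gO\!\big(T\sqrt{K\ln T/T_0}\big)$; here $2\alpha T_0 = 2\alpha\,T^{2/3}(K\ln T)^{1/3} = \gO\big((K\ln T)^{1/3}T^{2/3}\big)$ using $\alpha\le 1$, while $T\sqrt{K\ln T/T_0} = T\cdot(K\ln T)^{1/3}T^{-1/3} = (K\ln T)^{1/3}T^{2/3}$, so the sum is $\gO\big((K\ln T)^{1/3}T^{2/3}\big)$, as claimed. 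Both bounds hold uniformly over $w_i \in \gW$, inherited from Theorem~\ref{thm:bandit_upper_bound}.

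There is no genuine obstacle: the corollary is a substitute-and-simplify consequence of Theorem~\ref{thm:bandit_upper_bound}. The only points needing a line of care are (i) the admissibility check $T_0 \le T$ so the exploration phase is well-defined, and (ii) confirming the induced oracle width $\epsilon = \tilde{\gO}(T^{-1/3})$ is precisely the one an $(\alpha,\epsilon)$-oracle such as Algorithm~\ref{alg:epsm} supplies, so that the theorem applies; both are immediate.
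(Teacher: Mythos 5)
Your proposal is correct and matches the paper's intended derivation: the corollary is obtained by substituting $T_0 = T^{2/3}(K\ln T)^{1/3}$ into the two bounds of Theorem~\ref{thm:bandit_upper_bound}, noting that the regime threshold $\sqrt{96K\ln T/T_0}$ becomes $\tilde{\Theta}(T^{-1/3})$ and that both terms $2\alpha T_0$ and $T\sqrt{K\ln T/T_0}$ equal $\gO\big((K\ln T)^{1/3}T^{2/3}\big)$. The admissibility and oracle-width checks you add are sensible but routine, and your computation agrees with the paper's.
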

\vspace{-.2cm}
Choosing $T_0 = T^{2/3} \left(K \ln T\right)^{1/3}$ yields the optimal approximation regret upper bound in the small gap regime when implementing explore-then-commit type algorithms. 
However, this choice is not satisfiable when $\Delta_{\min} \in [\tilde{\Omega} \left(T^{-1/2}\right), \tilde{\gO} \left(T^{-1/3}\right)]$, since setting $T_0$ as such cannot guarantee detection of instances when $\Delta_{\min}$ falls in this intermediate regime. For these cases, we must resort to the approximation oracle during exploitation. Cruicially, since the oracle's solution differs by a constant factor from the Gale-Shapley optimal, each exploitation round incurs constant regret when measured against Eq.(\ref{eq:regret}), resulting in an overall linear regret. 

\begin{coro}\label{coro:upper_bound_one_regime}
    Following ETCO algorithm with $T_0 = \frac{T}{2 \ln T}$, for $w_i \in \gW$, we have 
    {\footnotesize
    \begin{align*}
        Reg_i(T) = \gO\left(\frac{K \ln T}{\Delta_{\min}^2}\right) \; \text{if} \; \Delta_{\min} = \tilde{\Omega} \left(T^{-\frac{1}{2}}\right); 
        Reg_i^{\alpha - \frac{1}{\ln T}}(T) = \gO\left(\sqrt{KT} \ln T\right) \; \text{if} \; \Delta_{\min} = \tilde{\gO} \left(T^{-\frac{1}{2}}\right).
    \end{align*}
    }
\end{coro}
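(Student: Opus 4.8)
The plan is to read off both bounds in Corollary~\ref{coro:upper_bound_one_regime} from Theorem~\ref{thm:bandit_upper_bound} by substituting the exploration length $T_0 = \frac{T}{2\ln T}$, with one small extra step for the tied regime. First I would evaluate the threshold separating the two cases of Theorem~\ref{thm:bandit_upper_bound}: with $T_0 = \frac{T}{2\ln T}$,
\[
\sqrt{\frac{96K\ln T}{T_0}} \;=\; \sqrt{\frac{192\,K\,(\ln T)^2}{T}} \;=\; \sqrt{192K}\,\frac{\ln T}{\sqrt{T}} \;=\; \tilde{\Theta}\!\left(T^{-1/2}\right),
\]
so the hypothesis $\Delta_{\min} > \sqrt{96K\ln T/T_0}$ is exactly $\Delta_{\min} = \tilde{\Omega}(T^{-1/2})$ and its negation is $\Delta_{\min} = \tilde{\gO}(T^{-1/2})$. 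In the large-gap case Theorem~\ref{thm:bandit_upper_bound} already gives $Reg_i(T) = \gO(K\ln T/\Delta_{\min}^2)$, which is the first claim, so nothing more is needed there.

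For the tied case one cannot simply plug $T_0$ into the stated bound $Reg_i^\alpha(T) \le 2\alpha T_0 + \gO(T\sqrt{K\ln T/T_0})$, since $2\alpha T_0 = \alpha T/\ln T = \Theta(T/\ln T)$ dwarfs the target $\gO(\sqrt{KT}\ln T)$. The remedy is to measure regret against the slightly weaker benchmark $\alpha - \frac{1}{\ln T}$ and re-run the exploration/exploitation accounting from the proof of Theorem~\ref{thm:bandit_upper_bound}: since $\Delta_{\min}$ is below the threshold, ETCO runs the $\le T_0$ exploration rounds and then invokes the $(\alpha,\epsilon)$-approximation oracle with $\epsilon = 2\sqrt{6K\ln T/T_0}$, so on the high-probability event that the true utility matrix stays inside the confidence set (the complement contributing only $\gO(1)$ to the regret) worker $w_i$ earns expected reward at least $\alpha\,\oss{w_i} - \epsilon$ in each of the $\ge T - T_0$ exploitation rounds and at least $0$ in each exploration round. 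Summing these,
\[
Reg_i^{\alpha - 1/\ln T}(T) \;=\; \Big(\alpha - \tfrac{1}{\ln T}\Big)T\,\oss{w_i} - \mathbb{E}\!\Big[\textstyle\sum_{t=1}^{T} X_i(t)\Big] \;\le\; \Big(\alpha T_0 - \tfrac{T}{\ln T}\Big)\oss{w_i} \;+\; (T - T_0)\,\epsilon \;+\; \gO(1),
\]
where the coefficient of $\oss{w_i}$ came from $\alpha T - (T-T_0)\alpha = \alpha T_0$ minus the $\frac{T}{\ln T}$ slack.

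Now $\alpha \le 1$ and $T_0 = \frac{T}{2\ln T}$ give $\alpha T_0 = \frac{\alpha T}{2\ln T} \le \frac{T}{\ln T}$, so the coefficient of $\oss{w_i}$ is nonpositive and, since $\oss{w_i} \ge 0$, that term drops; substituting $\epsilon$ back, $T\epsilon = 2\sqrt{12K}\,\sqrt{T}\,\ln T = \gO(\sqrt{KT}\,\ln T)$, which is the second claim. (If $\oss{w_i} = 0$ then $Reg_i^\beta(T) \le 0$ for every $\beta$, and for $T$ large enough $\alpha - \frac{1}{\ln T} > 0$, so the weakened benchmark is well defined.)

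I do not anticipate a genuine obstacle, as the corollary is a specialization of Theorem~\ref{thm:bandit_upper_bound}; the only point requiring care — rather than a new idea — is to not treat that theorem as a black box in the tied regime, but to notice that its proof bounds the exploration loss by $\gO(\alpha T_0\,\oss{w_i})$, and that with $T_0 = \frac{T}{2\ln T}$ this overhead is precisely absorbed by the $\frac{1}{\ln T}$ slackening of the approximation ratio, after which the rest is arithmetic in $T_0 = \frac{T}{2\ln T}$.
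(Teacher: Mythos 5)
Your proposal is correct and matches the derivation the paper intends: the large-gap bound is a direct substitution of $T_0=\frac{T}{2\ln T}$ into Theorem~\ref{thm:bandit_upper_bound}, and for the tied regime you correctly observe that the $2\alpha T_0$ exploration overhead cannot be absorbed as stated, but that the proof of Theorem~\ref{thm:bandit_upper_bound} actually bounds this overhead by $\alpha T_0\,\oss{w_i}$, which the $\frac{T}{\ln T}\oss{w_i}$ slack from weakening the benchmark to $\alpha-\frac{1}{\ln T}$ exactly cancels, leaving the $(T-T_0)\epsilon=\gO(\sqrt{KT}\ln T)$ term. The only nitpick is that the bad-event contribution is $\gO(NK)$ rather than $\gO(1)$, which is immaterial here and consistent with how the paper itself absorbs that term.
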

\vspace{-.2cm}
Choosing $T_0 = T / (2 \ln T)$ yields the optimal regret that matches the lower bound for any $\Delta_{\min} = \tilde{\Omega}\left(T^{-1/2}\right)$. However, for $\Delta_{\min} = \tilde{\gO}\left(T^{-1/2}\right)$, we can only guarantee sublinear approximation regret with an approximation ratio of $\alpha - 1/\ln T$ even when using an offline $\alpha$-approximation oracle.

\begin{rema}\label{re:bandit_lower_bound}
    The approximation regret lower bound in Theorem~\ref{thm:bandit_lower_bound} is both non-trivial and potentially of independent interest for bandit theory. While combinatorial bandits typically use approximation regret to circumvent computational limits (with statistical lower bounds focusing on 1-regret~\citep{combes2015combinatorial,kveton2015tight,merlis2020tight}), our result reveals a fundamental distinction: in matching markets, this approximation factor persists even given unlimited computational resources.
\end{rema}
	
    \section{Conclusion}\label{sec:conclusion}
    \vspace{-.2cm}
In this paper, we study stable matching with one-sided indifference, modeled as a company assigning workers to jobs. Using a utility matrix to encode workers' potentially tied preferences over jobs, we define the \emph{optimal stable share} (OSS) for each worker as the maximum utility achievable in any stable matching. To address fairness, we introduce the OSS-ratio, quantifying the fraction of the OSS a worker obtains under random matchings. 
We first analyze distributions over stable matchings, showing that a linear approximation to the OSS is trivial and asymptotically tight. For general matchings, we prove that no better than logarithmic approximation is possible. To achieve this bound, we propose a polynomial-time algorithm computing a distribution over \emph{internally stable matchings}, which is asymptotically optimal in OSS ratio and dominant strategy incentive-compatible.
Next, we extend our framework to settings where the utility matrix is uncertain but lies within a given uncertainty set. By incorporating $\epsilon$-stable matchings and relating them to perturbations of the utility matrix, we derive a logarithmic approximation with an additive $\epsilon$ error, matching the deterministic case. 
Finally, we explore online learning, where existing stable regret frameworks fail to handle tied preferences. Leveraging the OSS-ratio, we define $\alpha$-approximation stable regret and provide an algorithm whose upper bound matches the lower bound in the no-tied case. We further derive approximation regret bounds for small or no utility gaps and establish a fundamental trade-off between regret types, highlighting the need for careful exploration stopping time decisions.

Our work establishes the first instance-independent worker-optimal stable regret bound in bandit learning for matching markets, achieved through centralized job allocation. However, real-world marketplaces typically operate in decentralized settings where workers cannot directly coordinate. While the Gale-Shapley algorithm naturally decentralizes, extending our approximation guarantees to decentralized bandit learning remains an open challenge, which is an important direction for future research. 
Furthermore, exploring the application of our proposed algorithms to real-world datasets would be a valuable next step, as it would help address the practical challenge of stable matching when ties exist in preference rankings.
\looseness=-1

    \section*{Acknowledgements}\label{sec:acknowledgement}
    This project has received funding from the European Union’s Horizon 2020 research and innovation programme under the Marie Skłodowska-Curie grant agreement No 101034255. 
Shiyun Lin acknowledges the financial support from the China Scholarship Council (Grant No.202306010152).
Vianney Perchet's research was supported in part by the French National Research Agency (ANR) in the framework of the PEPR IA FOUNDRY project (ANR-23-PEIA-0003) and through the grant DOOM ANR-23-CE23-0002. It was also funded by the European Union (ERC, Ocean, 101071601). Views and opinions expressed are however those of the author(s) only and do not necessarily reflect those of the European Union or the European Research Council Executive Agency. Neither the European Union nor the granting authority can be held responsible for them.

    \bibliographystyle{plainnat}
    \bibliography{reference}    



    \clearpage
    \appendix    
    \section{Further Related Work}\label{sec:further_relate_work}
    \paragraph{Fractional Matchings} 
Aside from \emph{integral} matchings, \emph{fractional} matchings have also attracted research interests due to their practical implications. For instance, in our running example, consider a \emph{time-sharing} scenario~\citep{roth1993stable}, where each worker could spend five days a week at work. An integral matching requires every worker to work full-time on a single job, while fractional matchings allow them to switch among different jobs, making it natural in such situations. 
By the well-known Birkhoff-von-Neumann (BvN) theorem~\citep{birkhoff1946three, von1953certain}, a fractional matching could be written as a convex combination of several integral matchings. 

In the context of stable matching, fractional matching has also been studied. Considering purely ordinal preferences, several notions of stability have been proposed, such as \emph{strong stability}~\citep{roth1993stable}, \emph{ex-post stability}~\citep{roth1993stable}, and \emph{fractional stability}~\citep{vate1989linear}. In these works, the stable matching problem is formulated as a linear program, \citet{teo1998geometry} showed that any fractional solution in the stable matching polytope is a convex combination of integral stable matchings.
On the other hand, concerning purely cardinal preferences, \citet{anshelevich2013anarchy} proposes the notions of stability and $\varepsilon$-stability, while \citet{caragiannis2019stable} shows that the set of stable fractional matchings that satisfies the notion can be non-convex.

In this paper, we consider a two-sided market where the worker side has cardinal preferences while the job side has ordinal preferences. We do not concern the notions of fractional stable matchings, instead, we focus on finding a distribution over integral matchings such that it is fair in the sense that every worker could receive a certain fraction of its optimal stable share in expectation.

\paragraph{Fair Division}
Fair division is the problem of dividing a set of items among several people in a fair manner. \citet{steinhaus1949division} pioneers this line of research and defines a share-based notion, i.e., \emph{proportionality}, where each player gets a $1/N$ fraction of all items.  \citet{foley1966resource} and \citet{varian1973equity} define \emph{envy-freeness}, where no player prefers the bundle allocated to another player, and this notion is later generalized by \citet{weller1985fair}.
In two-sided matching markets, a stable matching eliminates \emph{justified envy}~\citep{abdulkadirouglu2003school}. 
Regarding the problem of sharing indivisible goods, share-based guarantees such as MMS~\citep{budish2011combinatorial} and envy-based guarantees such as EF1~\citep{lipton2004approximately, budish2011combinatorial} or EFX~\citep{caragiannis2019unreasonable} are proposed. Recent works have studied best-of-both-world fairness \citep{aziz2023best,babaioff2022best,feldman2023breaking}, providing random allocation with fairness guarantees both in expectation and for every realization. 

\paragraph{Approximation Regret in Bandit Learning}
In combinatorial bandit problems, \emph{approximation regret} is often considered instead of the standard regret~\citep{kakade2007playing,streeter2008online,gai2012combinatorial,chen2016combinatorial,merlis2019batch,perrault2022combinatorial}. The reason mainly lies in the complex reward structure and the computational intractability of the problem, i.e., rewards are often dependent on the combination of actions, leading to an exponentially large action space, which makes it computationally prohibitive to find the exact solution.

Besides computational intractability, there is another more fundamental reason for using the approximation regret framework in this paper. The stable regret is defined for each worker, which means the company aims to solve a multi-objective optimization problem while it couldn't satisfy everyone simultaneously. Consequently, the approximation regret serves as a compromise between fairness and efficiency.
    
    \section{Lower bounds on OSS-ratio}\label{sec:proof_offline_lb}
    \subsection{Lower Bound for Distributions over Weakly Stable Matchings}\label{subsec:cr_lb_stable}
The proof idea of Theorem~\ref{thm:non_stable_ness} could be illustrated through Figure~\ref{fig:non_stable_ness}.

    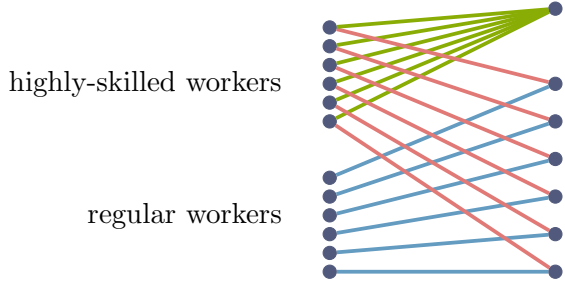
\begin{figure}[!htp]
    \centering
    \begin{tikzpicture}
    \tikzstyle{vertex}=[circle,draw=nodecolor,fill=nodecolor,minimum size=5pt,inner sep=0pt]
        \tikzstyle{yellowline} = [myyellow, line width = 0.5mm]
        \tikzstyle{greenline} = [mygreen, line width = 0.5mm]
        \tikzstyle{pinkline} = [mypink, line width = 0.5mm]
    \tikzstyle{blueline} = [myblue, line width = 0.5mm]            \node[vertex] (b) at (3, 3.5) {};
        \foreach \i in {0,1,...,5}{
        \node[vertex] (a\i) at (3, \i/2) {};
        }
        \foreach \i in {0,1,...,5}{
        \node[vertex] (w\i) at (0, \i/4) {};
        \draw[blueline] (w\i) -- (a\i);
        }

        \foreach \i in {0,1,...,5}{
        \node[vertex] (v\i) at (0, 2+\i/4) {};
        \draw[pinkline] (v\i) -- (a\i);
        \draw[greenline] (v\i) -- (b);
        }
        \node[anchor=east] at (-.5,2.5) {highly-skilled workers};
        \node[anchor=east] at (-.5,.75)  {regular workers};
    \end{tikzpicture}
    \caption{Lower bound on $R_\gS$. All jobs have the same ordering over workers, from top to bottom. Any stable matching can be obtained by letting the first worker pick a job, then the second, etc. Hence, each stable matching contains at most one blue edge.}
    \label{fig:non_stable_ness}
\end{figure}

\begin{proof}[Proof of Theorem~\ref{thm:non_stable_ness}]
	Assume that $N$ is even, let $\gW = \left\{w_1, w_2, \cdots, w_{N}\right\}$ and $\gA = \left\{a_1,  a_2, \cdots, a_{K}\right\}$ with $K = \frac{N}{2}+1$ and $w_1 \succ w_2 \succ \cdots \succ w_{N}$ for all the jobs. The utility matrix that encodes the preference of workers over jobs is as follows:

    \begin{align*}
    \mU = 
        \begin{pmatrix}
            1 & 0 & \cdots & 0 & 1 \\
            0 & 1 & \cdots & 0 & 1 \\
            \vdots & \vdots & \ddots & \vdots & \vdots \\
            0 & 0 & \cdots & 1 & 1 \\
            1 & 0 & \cdots & 0 & 0 \\
            0 & 1 & \cdots & 0 & 0 \\
            \vdots & \vdots & \ddots & \vdots & \vdots \\
            0 & 0 & \cdots & 1 & 0 \\
        \end{pmatrix}
        \begin{matrix}
           \left.\begin{matrix}\\\\\\\\\end{matrix}\right\}  \frac{N}{2} \\
           \left.\begin{matrix}\\\\\\\\\end{matrix}\right\} \frac{N}{2}  
        \end{matrix}.
    \end{align*}
	In any stable matching, every worker $w_i$ in $\left\{w_1, w_2, \cdots, w_{N / 2}\right\}$ must be assigned to $a_i$ or $a_{\frac{N}{2} + 1}$, leading to a utility of 1 for them. Without loss of generality, let $\mu_i$ be the matching such that $\mu(w_i) = a_{\frac{N}{2} + 1}$. Then in $\mu_i$, only worker $w_{\frac{N}{2} + i}$ would receive a utility of 1, by matching it to job $a_i$, while all workers in $\left\{w_{N / 2 + 1}, \cdots, w_{N / 2 + i - 1}, w_{N / 2 + i + 1}, \cdots, w_{N}\right\}$ would be unmatched and receive a utility of 0. Indeed, for any $j\ne i$, the unique optimal match of worker $w_{N/2+j}$ is already taken by worker $w_j$.
	
	For every stable matching, at most one of the workers in $\left\{w_{N / 2 + 1}, w_{N / 2 + 2}, \cdots, w_{N}\right\}$ could be assigned to their optimal match. Since there are $N / 2$ such workers, then for any distribution $D$, there must be at least one of the workers for which the probability to be optimally matched is smaller than $2 / N$, for this worker, it holds that $\frac{\oss{w}}{\mU_D(w)} \geq \frac{N}{2}$, which implies $R_{\gS} \geq N / 2$.
	
\end{proof} 

\begin{rema}\label{re:non_stable}
    Theorem~\ref{thm:non_stable_ness} shows that if we only consider random allocations of stable matchings, then in the worst case, workers could only expect $\gO(1/N)$ profit share compared to their benchmark. On the other hand, define 
\begin{align*}
	\mu_1 &= \left\{(w_i, a_i) : i \in \left\{1, 2, \cdots, N/2 \right\}\right\},\\
	\mu_2 &= \left\{(w_{i+N/2}, a_{i}) : i \in  \left\{1, 2, \cdots, N/2 \right\}\right\}.
\end{align*}
Here, $\mu_1$ is a stable matching while $\mu_2$ is non-stable.
We construct a distribution $D$ as follows:
\begin{align*}
	\mathbb{P}(D = \mu_1) = \frac{1}{2}, \quad \mathbb{P}(D = \mu_2) = \frac{1}{2},
\end{align*}
then all workers have $\frac{\oss{w}}{\mU_D(w)} = 2$.
This result implies that if we consider possibly non-stable matchings for the support of $D$, there is space for improvement on the OSS-ratio.
\end{rema}

\subsection{Lower Bound for Distributions over Matchings}\label{subsec:cr_lb}
The proof idea of Theorem~\ref{thm:cr_lb} could be illustrated through Figure~\ref{fig:cr_lb}.

\begin{figure}[t]
		\begin{subfigure}[t]{0.22\textwidth}
			\centering
			\begin{tikzpicture}
				\tikzstyle{vertex}=[circle,draw=nodecolor,fill=nodecolor,minimum size=5pt,inner sep=0pt]
				\tikzstyle{yellowline} = [myyellow, line width = 0.5mm]
				
				\node[vertex] (l1) at (0, 0) {};
				
				\node[vertex] (r1) at (3, 0) {};
				
				\draw[yellowline] (l1) -- (r1);
			\end{tikzpicture}
			\caption*{$I_0$}
		\end{subfigure}
		\hfill
		\begin{subfigure}[t]{0.22\textwidth}
			\centering
			\begin{tikzpicture}
				\tikzstyle{vertex}=[circle,draw=nodecolor,fill=nodecolor,minimum size=5pt,inner sep=0pt]
				\tikzstyle{yellowline} = [myyellow, line width = 0.5mm]
				\tikzstyle{greenline} = [mygreen, line width = 0.5mm]
				
				\node[vertex] (l1) at (0, 2) {};
				\node[vertex] (l2) at (0, 1) {};
				\node[vertex] (l3) at (0, 0) {};
				
				\node[vertex] (r1) at (3, 1) {};
				\node[vertex] (r2) at (3, 0) {};
				
				\draw[yellowline] (l2) -- (r1);
				\draw[yellowline] (l3) -- (r2);
				\draw[greenline] (l1) -- (r1);
				\draw[greenline] (l1) -- (r2);
			\end{tikzpicture}
			\caption*{$I_1$}
		\end{subfigure}
		\hfill
		\begin{subfigure}[t]{0.22\textwidth}
			\centering
			\begin{tikzpicture}
				\tikzstyle{vertex}=[circle,draw=nodecolor,fill=nodecolor,minimum size=5pt,inner sep=0pt]
				\tikzstyle{yellowline} = [myyellow, line width = 0.5mm]
				\tikzstyle{greenline} = [mygreen, line width = 0.5mm]
				\tikzstyle{blueline} = [myblue, line width = 0.5mm]
				
				\node[vertex] (l1) at (0, 3.15) {};
				\node[vertex] (l2) at (0, 2.7) {};
				\node[vertex] (l3) at (0, 2.25) {};
				\node[vertex] (l4) at (0, 1.8) {};
				\node[vertex] (l5) at (0, 1.35) {};
				\node[vertex] (l6) at (0, 0.9) {};
				\node[vertex] (l7) at (0, 0.45) {};
				\node[vertex] (l8) at (0, 0) {};
				
				\node[vertex] (r1) at (3, 1.8) {};
				\node[vertex] (r2) at (3, 1.35) {};
				\node[vertex] (r3) at (3, 0.45) {};
				\node[vertex] (r4) at (3, 0) {};
				
				\draw[yellowline] (l4) -- (r1);
				\draw[yellowline] (l5) -- (r2);
				\draw[yellowline] (l7) -- (r3);
				\draw[yellowline] (l8) -- (r4);
				\draw[greenline] (l3) -- (r1);
				\draw[greenline] (l3) -- (r2);
				\draw[greenline] (l6) -- (r3);
				\draw[greenline] (l6) -- (r4);
				\draw[blueline] (l1) -- (r1);
				\draw[blueline] (l1) -- (r3);
				\draw[blueline] (l2) -- (r2);
				\draw[blueline] (l2) -- (r4);

			\end{tikzpicture}
			\caption*{$I_2$}
		\end{subfigure}
		\hfill
		\begin{subfigure}[t]{0.22\textwidth}
			\centering
			\begin{tikzpicture}
				\tikzstyle{vertex}=[circle,draw=nodecolor,fill=nodecolor,minimum size=5pt,inner sep=0pt]
				\tikzstyle{pinkline} = [mypink, line width = 0.5mm]
				
				\node[vertex] (l1) at (0, 3.25) {};
				\node[vertex] (l2) at (0, 3) {};
				\node[vertex] (l3) at (0, 2.75) {};
				\node[vertex] (l4) at (0, 2.5) {};
				
				\node[vertex] (r1) at (3, 2) {};
				\node[vertex] (r2) at (3, 1.75) {};
				\node[vertex] (r3) at (3, 1.5) {};
				\node[vertex] (r4) at (3, 1.25) {};
				\node[vertex] (r5) at (3, 0.75) {};
				\node[vertex] (r6) at (3, 0.5) {};
				\node[vertex] (r7) at (3, 0.25) {};
				\node[vertex] (r8) at (3, 0) {};
				
				\draw[line width = 0.5mm] (-0.15, -0.15) rectangle (3.15, 0.9);
				\draw[line width = 0.5mm] (-0.15, 1.1) rectangle (3.15, 2.15);
				\node at (0.4, 0.35) {$I_{n - 1}$};
				\node at (0.4, 1.6) {$I_{n - 1}$};
				
				\draw[pinkline] (l1) -- (r1);
				\draw[pinkline] (l1) -- (r5);
				\draw[pinkline] (l2) -- (r2);
				\draw[pinkline] (l2) -- (r6);
				\draw[pinkline] (l3) -- (r3);
				\draw[pinkline] (l3) -- (r7);
				\draw[pinkline] (l4) -- (r4);
				\draw[pinkline] (l4) -- (r8);
			\end{tikzpicture}
			\caption*{$I_{n}$}
		\end{subfigure}
		\hfill
		\caption{Lower bound on $R_{\gM}$. In each example, left nodes represent workers while right nodes represent jobs. If there is an edge connecting a left node $w$ and a right node $a$, we have $\mU(w, a) = 1$, and $\mU(w, a) = 0$ otherwise. All the right nodes without edges connecting to them are hidden from the graph.}
		\label{fig:cr_lb}
	\end{figure}
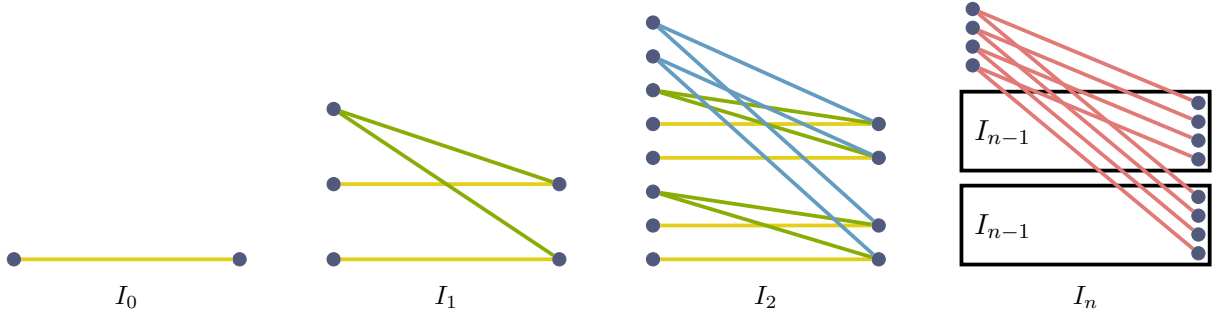

\begin{proof}[Proof of Theorem~\ref{thm:cr_lb}]
	Consider the following sequence of problem instances, as depicted in Figure~\ref{fig:cr_lb}. In each bipartite graph, left nodes represent the set of workers while the right nodes represent the set of jobs. Jobs share a global preference over the workers, with the topmost node being the most preferred and the preference decreasing from top to bottom. From the worker side, the utility matrix is binary. Specifically, for a worker-job pair $(w, a)$, $\mU(w, a) = 1$ if $(w, a)$ is connected, while $\mU(w, a) = 0$ otherwise. 
	For example, $I_1$ is the graph representation of the problem instance of Example~\ref{exmp:matching_tie}. 
	
	Instance $I_n$ is constructed recursively. Given $I_{n - 1}$, we first duplicate this instance, denoted the replications as \emph{upper class} and \emph{lower class}, respectively. Take $K_{n - 1}$ as the number of right nodes and $N_{n - 1}$ as the number of left nodes in $I_{n - 1}$, and denote these nodes as $\left\{a_1^u, a_2^u, \cdots, a_{K_{n - 1}}^u\right\}$, $\left\{w_1^u, w_2^u, \cdots w_{N_{n - 1}}^u\right\}$ and $\left\{a_1^{\ell}, a_2^{\ell}, \cdots, a_{K_{n - 1}}^{\ell}\right\}$, $\left\{w_1^{\ell}, w_2^{\ell}, \cdots w_{N_{n - 1}}^{\ell}\right\}$ for the upper and lower classes, respectively. Then, we introduce $K_{n - 1}$ \emph{prioritized workers} in $I_n$, who are uniformly more preferred by the jobs than the workers in the upper and lower classes. 
	In particular, denote the set of prioritized workers as $\left\{w_1, w_2, \cdots, w_{K_{n - 1}}\right\}$, we have 
	\begin{align*}
		w_1 \succ w_2 \succ \cdots \succ w_{K_{n - 1}} \succ w_1^u \succ w_2^u \succ \cdots \succ w_{N_{n - 1}}^u \succ w_1^{\ell} \succ w_2^{\ell} \succ \cdots \succ w_{N_{n - 1}}^{\ell}.
	\end{align*}
	And for each $w_i$, we have $\mU(w_i, a_{i}^{u}) = \mU(w_i, a_{i}^{\ell}) = 1$, and $\mU(w_i, a) = 0$ otherwise.
	We first prove by induction that the optimal-stable-share is $\oss{w} = 1$ for any worker $w$. For $I_0$, the unique matching is stable. Suppose that for any worker $w$ in $I_{n - 1}$, there exists at least one stable matching $\mu$ such that $\mU(w, \mu(w)) = 1$. 
	Then in $I_n$, all the prioritized workers could be matched in any stable matching. Furthermore, as long as they simultaneously choose to be matched to the jobs in the same class, the other class is free, and hence by induction assumption, every worker in that class gets a chance to be matched in at least one stable matching.
	Therefore, by breaking ties for the upper (lower) class, all workers in the lower (upper) class can be matched.
	
	On the other hand, given that $\oss{w} = 1$ for any $w$, the ratio $R_{\gM}$ is equal to $\min_D \max_w \frac{1}{\mU_D(w)}$ for these instances, 
    Then, proving a lower bound on this quantity is equivalent to establishing an upper bound on $\max_{D} \min_w \mU_D(w)$. In particular, we have $\max_D \min_w \mU_D(w) \leq \max_D \frac{\sum_w \mU_D(w)}{N}$, where $N$ is the number of left nodes. Now notice that $\sum_w \mU_D(w)$ is the expected size of the matching under distribution $D$, since the utility is 1 when a worker is matched and 0 otherwise. We have $\sum_w \mU_D(w) \leq K$ from the fact that the size of any matching is bounded by $K$. Combining the above derivation, we have 
    \begin{align*}
        R_{\gM} \geq \frac{N}{K}.
    \end{align*}
	
	Finally, in instance $I_n$, 
	by the recursive construction, we have
	\begin{align*}
		K_n &= 2 \cdot K_{n - 1}, \\
		N_n &= 2 \cdot N_{n - 1} + K_{n - 1}. 
	\end{align*}
	Solving the recursive equation with the initial condition $K_0 = N_0 = 1$, we know that there are $2^n$ right nodes and $N = (n + 2) 2^{n - 1}$ left nodes in $I_n$, which implies that $R_{\gM} \geq n / 2 + 1$. 
		We rewrite $N = (n+2)2^{n-1}$ to obtain $2^n = 2N/(n+2)$ and we deduce that
		$$
		N/n \leq 2^n \leq 2N.
		$$
		Taking a logarithm in the inequalities, we have
		$$
		\log_2 N - \log_2 n \leq n \leq 1 + \log_2 N. 
		$$
		And thus
		$$
		n \geq \log_2 N - \log_2 n \geq \log_2 N -  \log_2(1+\log_2 N).
		$$
		Therefore, $n = \Omega(\log N)$ and hence $R_{\gM}$ is $\Omega(\log N)$.
\end{proof}

    \section{Upper Bound on OSS-ratio}\label{sec:alg_ism}
    \subsection{Procedure Illustration of Algorithm~\ref{alg:ism}}\label{subsec:illustrate_ism}
We use Example~\ref{exmp:ism} to illustrate the procedure stated in Algorithm~\ref{alg:ism}.
\begin{exmp}\label{exmp:ism}
	Let $\gW = \left\{w_1, w_2, w_3\right\}$, $\gA = \left\{a_1, a_2, a_3\right\}$. We consider the following preference list $P_a$ of jobs over workers, and utility matrix $\mU$ that encodes the preference of workers over jobs:
    $$
	\begin{array}{cc}
		a_1&: w_2 \succ w_1 \succ w_3, \\
		a_2&: w_1 \succ w_3 \succ w_2, \\
		a_3&: w_1 \succ w_2 \succ w_3.
	\end{array}
    \hspace{2cm}
        \mU = \begin{bmatrix}
            1 & 1 & 0 \\
            0.5 & 0.1 & 0.1 \\
            0 & 0.8 & 0
        \end{bmatrix}.
    $$
	If $m = 2$, the preference profile $P_w$ generated from the algorithm is  
	\begin{align*}
		w_1&: a_1^{(1)} \succ a_2^{(1)} \succ a_1^{(2)} \succ a_2^{(2)} \succ a_3^{(1)} \succ a_3^{(2)}, \\
		w_2&: a_1^{(1)} \succ a_1^{(2)} \succ a_2^{(1)} \succ a_3^{(1)} \succ a_2^{(2)} \succ a_3^{(2)}, \\
		w_3&: a_2^{(1)} \succ a_2^{(2)} \succ a_1^{(1)} \succ a_3^{(1)} \succ a_1^{(2)} \succ a_3^{(2)}.
	\end{align*}
	Running Gale-Shapley algorithm on $P_w$ and $P_a$, the worker-optimal stable matching would be
	$\tilde\mu = \{(w_1, a_2^{(1)}), 
		(w_2, a_1^{(1)}), 
		(w_3, a_2^{(2)})\}$
    , and we can recover two internally stable matchings from $\tilde{\mu}$, i.e., $\tilde\mu_1 = \left\{(w_1, a_2), (w_2, a_1)\right\}$ and $\tilde\mu_2 = \left\{(w_3, a_2)\right\}$.
\end{exmp}

\subsection{Proof of Theorem~\ref{thm:cr_ub}}\label{subsec:proof_cr_ub}
In Algorithm~\ref{alg:ism}, 
each worker $w$ is matched in exactly one matching $\tilde\mu_i$, where we call $i$ the index of $w$, denoted $\text{index}(w)$. In other words, the \emph{index} of a worker is the index of the job she receives, that is, $\textrm{index}(w) = i$ if worker $w$ receives $a_j^{(i)}$ for some $j$.
\begin{defn}
Given a problem instance $(\mU, P_a)$, run Algorithm~\ref{alg:ism} with duplication number $m$ to generate the output distribution $D$. Then, for any stable matching $\mu$ with respect to $(\mU, P_a)$, we define a graph $G_\mu = (V_\mu, E_\mu)$ where
\begin{align*}
V_\mu &:= \{w\in \gW\;:\;\mU(w, \mu(w)) \geq m\cdot \mU_D(w)\},\\
E_\mu &:= \{(w,w')\in V_\mu^2\;:\;\mu(w) = \tilde\mu_j(w')\text{ where } j = \mathrm{index}(w') < \mathrm{index}(w)\}.
\end{align*}
\end{defn}
Informally, $G_\mu$ is the graph of workers who (weakly) prefer $\mu$ to their match in distribution $D$, where an edge $(w, w')$ means that $w'$ received a job that $w$ would have liked. Next, we show properties on the graph $G_\mu$, which we illustrate in Figure~\ref{fig:graph}.

\begin{figure}[!ht]
    \centering
    \begin{tikzpicture}[>=stealth]
        \tikzstyle{vertex}=[draw=nodecolor,fill=nodecolor,minimum size=5pt,inner sep=0pt]
        \tikzstyle{greenline} = [mygreen, line width = 0.5mm]
        \tikzstyle{blueline} = [myblue, line width = 0.5mm, densely dashed]
        \tikzstyle{pinkline} = [mypink, line width = 0.5mm]
        \draw[lightgray] (-3,-.2) -- (5.5,-.2);
        \draw[lightgray] (-3,.8) -- (5.5,.8);
        \draw[lightgray] (-3,1.8) -- (5.5,1.8);
        \draw[lightgray] (-3,2.8) -- (5.5,2.8);
        \node at (-2,3) {index};
        \node at (-2,2.3) {1};
        \node at (-2,1.3) {2};
        \node at (-2,0.3) {3};
        \node at (0,3) {workers};
        \node at (4,3) {jobs};
        \node at (4.2,2.3) {$a_1^{(1)} \dots a_K^{(1)}$};
        \node at (4.2,1.3) {$a_1^{(2)} \dots a_K^{(2)}$};
        \node at (4.2,.3) {$a_1^{(3)} \dots a_K^{(3)}$};
        \draw[pinkline] (7,2.3) edge (8,2.3);
        \node[anchor=south] at (7.5,2.3) {$\tilde\mu$};
        \draw[blueline] (7,1.3) edge (8,1.3);
        \node[anchor=south] at (7.5,1.3) {$\mu$};
        \draw[greenline] (7,0.3) edge[->] (8,0.3);
        \node[anchor=south] at (7.5,0.3) {$E_\mu$};
        \node[diamond,vertex] (a11) at (3,2.6) {$~\;$};
        \node[rectangle,vertex] (a21) at (3,2.3) {};
        \node[regular polygon,regular polygon sides=3,vertex] (a31) at (3,2) {$\;$};
        \node[diamond,vertex] (a12) at (3,1.6) {~\;};
        \node[rectangle,vertex] (a22) at (3,1.3) {};
        \node[regular polygon,regular polygon sides=3,vertex] (a32) at (3,1) {$\;$};
        \node[diamond,vertex] (a13) at (3,.6) {~\;};
        \node[rectangle,vertex] (a23) at (3,0.3) {};
        \node[regular polygon,regular polygon sides=3,vertex] (a33) at (3,0) {$\;$};
        \node[circle,vertex] (w) at (0,0.3) {};
        \node[circle,vertex] (w2) at (0,1) {};
        \node[circle,vertex] (w1) at (0,2) {};
        \node[circle,vertex] (w21) at (0,2.6) {};
        \draw[pinkline] (w) -- (a23);
        \draw[blueline] (w) -- (a32);
        \draw[blueline] (w) -- (a31);
        \draw[pinkline] (w2) -- (a32);
        \draw[pinkline] (w1) -- (a31);
        \draw[blueline] (w2) -- (a11);
        \draw[pinkline] (w21) -- (a11);
        \draw[greenline] (w) edge[->, bend left=40] (w1);
        \draw[greenline] (w) edge[->, bend left=20] (w2);
        \draw[greenline] (w2) edge[->, bend left=40] (w21);
    \end{tikzpicture}
    \caption{The graph $G_\mu = (V_\mu, E_\mu)$ is a directed forest. The matching $\tilde\mu$, computed in Algorithm~\ref{alg:ism} matches each worker to a single (copy of) job in $\tilde \mu$. In the stable matching $\mu$, each worker $w$ is connected to all copies of $\mu(w)$ which have lower index.}
    \label{fig:graph}
\end{figure}
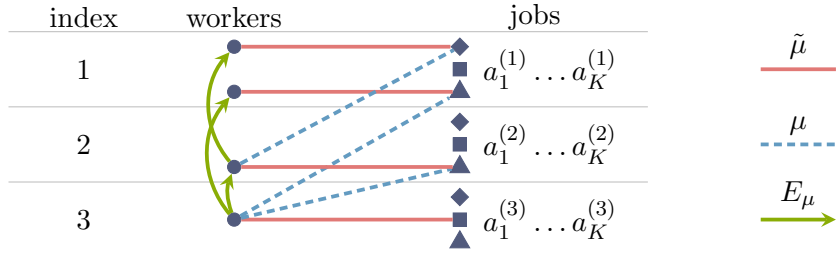

\begin{prop}\label{lem:forest}
For any stable matching $\mu$, the following holds
\begin{itemize}
    \item $G_\mu$ is a directed forest (there is no cycle and each vertex has at most one incoming edge),
    \item For every worker $w\in V_\mu$ with $i = \mathrm{index}(w)$, and for every $1 \leq j < i$, there is a worker $w'\in V_\mu$ with $j = \mathrm{index}(w')$ such that  $(w,w')\in E_\mu$.
\end{itemize}
\end{prop}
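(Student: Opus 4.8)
The plan is to prove the two bullets separately; both reduce to combining the injectivity of a matching with the stability of $\tilde\mu$ in the duplicated instance and of $\mu$ in the original one.

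\textbf{Forest structure.} Here I would need only two elementary observations. First, every edge $(w,w')\in E_\mu$ has $\mathrm{index}(w')<\mathrm{index}(w)$ by definition, so the index strictly decreases along any directed walk and $G_\mu$ has no (directed) cycle. Second, if $(w_1,w')$ and $(w_2,w')$ were both edges, then, writing $j=\mathrm{index}(w')$, we would get $\mu(w_1)=\tilde\mu_j(w')=\mu(w_2)$; since $\tilde\mu_j(w')$ is an actual job and $\mu$ sends distinct workers to distinct jobs, $w_1=w_2$, so every vertex has in-degree at most one. These are exactly the two defining properties of a directed forest (and they indeed make each weakly connected component a tree, since in-degree $\le 1$ plus acyclicity bounds the edge count).

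\textbf{Expansion.} Fix $w\in V_\mu$ with $i=\mathrm{index}(w)$ and any $1\le j<i$, and put $a:=\mu(w)$. Since $w$ has an index it is matched in $\tilde\mu$ to some copy $b^{(i)}$ with $b=\tilde\mu_i(w)$ and $\mU(w,b)>0$ (a worker never accepts a $0$-utility job), so $m\,\mU_D(w)=\mU(w,b)$; hence $w\in V_\mu$ reads $\mU(w,a)\ge \mU(w,b)>0$. The first key step is $a^{(j)}\succ_{P_w}\tilde\mu(w)$: this is immediate when $\mU(w,a)>\mU(w,b)$ (utility comes first in $P_w$), and when $\mU(w,a)=\mU(w,b)$ it follows from the tie-break favouring the smaller copy index, because $j<i$. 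As $\mU(w,a)>0$, the copy $a^{(j)}$ is acceptable to $w$, so stability of $\tilde\mu$ forbids $(w,a^{(j)})$ from blocking; therefore $a^{(j)}$ is matched in $\tilde\mu$, say to $w':=\tilde\mu(a^{(j)})$, and moreover $w'\succ_{a^{(j)}}w$, i.e.\ $w'\succ_a w$. By construction $\mathrm{index}(w')=j$ and $\tilde\mu_j(w')=a=\mu(w)$.

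\textbf{Closing the loop.} It remains to check $w'\in V_\mu$, which is the conceptual heart of the argument: since $\mu(a)=w$ and $w'\succ_a w$, weak stability of $\mu$ (Definition~\ref{def:weak_stable}) forbids $(w',a)$ from blocking $\mu$, forcing $\mU(w',a)\le\mU(w',\mu(w'))$; but $\mathrm{index}(w')=j$ gives $m\,\mU_D(w')=\mU(w',\tilde\mu_j(w'))=\mU(w',a)$, so $\mU(w',\mu(w'))\ge m\,\mU_D(w')$ and $w'\in V_\mu$. Hence $(w,w')\in E_\mu$ with $\mathrm{index}(w')=j$, proving the second bullet. The routine parts are the tie-break case analysis and the index bookkeeping; the step I expect to be the crux — and where I would be most careful — is exactly this last chain, ``stability of $\tilde\mu$ yields $w'\succ_a w$, and then stability of $\mu$ forces $w'$ to weakly prefer its own job,'' since it is the mechanism that makes the forest branch and, iterated over all $j<i$, produces the exponential growth underlying Theorem~\ref{thm:cr_ub}.
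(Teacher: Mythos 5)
Your proof is correct and follows essentially the same route as the paper's: acyclicity from the strictly decreasing index along edges, in-degree one from injectivity of $\mu$ and $\tilde\mu$, and for the expansion property the same two-step chain (stability of $\tilde\mu$ forces $a^{(j)}$ to be held by some $w'\succ_a w$ of index $j$, then stability of $\mu$ forces $\mU(w',\mu(w'))\geq \mU(w',a)=m\,\mU_D(w')$, so $w'\in V_\mu$). The only addition is your explicit check that $a^{(j)}$ is acceptable to $w$ (via $\mU(w,a)>0$), a detail the paper leaves implicit.
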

\begin{proof}
    The graph $G_\mu$ is a directed forest by construction. Indeed, it has no cycle because edges connect workers to lower index workers. And every node has at most one incoming edge because $\tilde\mu$ matches each worker to at most one job, and $\mu$ matches each job to at most one worker.

    Fix a worker $w\in V_\mu$ with $i = \mathrm{index}(w)$, let $1 \leq j < i$, and let $a = \mu(w)$. By definition of $V_\mu$, worker $w$ weakly prefers $\mu$ to $D$, that is $\mU(w, a) \geq m\cdot\mU_D(w) = \mU(w, \tilde\mu_i(w))$. By definition of $w$'s preference list $P_w$ in Algorithm~\ref{alg:ism}, the lexicographic ordering gives that
    \begin{align*}
        a^{(j)} \succ_{P_w} \tilde\mu(w).
    \end{align*}
    Because $\tilde\mu$ is a stable matching, it should not be blocked by the pair $(w, a^{(j)})$. Thus, there exists a worker $w'\in\gW$ such that $\tilde\mu(w') = a^{(j)}$ and 
    \begin{align*}
        w' \succ_{a} w.
    \end{align*}
    Finally, because $\mu$ is a stable matching, it should not be blocked by the pair $(w', a)$, thus
    $$
    \mU(w', \mu(w')) \geq \mU(w', a) = m\cdot\mU_D(w'),
    $$
    proving that $w' \in V_\mu$. Hence, there is an edge $(w, w')\in E_\mu$, which concludes the proof.
\end{proof}

\begin{prop}\label{lem:reachable}
    In the graph $G_\mu$, each node of index $i \geq 1$ can reach $2^{i-1}$ nodes (including itself).
\end{prop}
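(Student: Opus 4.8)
The plan is to argue by strong induction on the index $i$, with the inductive step combining the second bullet of Proposition~\ref{lem:forest} (which supplies, for a node $w$ of index $i$, an out-neighbour of \emph{every} smaller index $j<i$) with the forest structure of $G_\mu$ (which guarantees that the sets of nodes reachable from distinct out-neighbours of $w$ are disjoint). Summing the induction hypothesis over $j=1,\dots,i-1$ then yields exactly $1+\sum_{j=1}^{i-1}2^{j-1}=2^{i-1}$.

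First I would record two structural observations. (i) Along any edge $(w,w')\in E_\mu$ the index strictly decreases, so every node reachable from a node of index $j$ has index at most $j$; in particular no node of index $i$ is reachable from any node of strictly smaller index. (ii) Since $G_\mu$ is a directed forest in which every vertex has in-degree at most one (Proposition~\ref{lem:forest}), it decomposes into a disjoint union of out-trees rooted at the in-degree-zero vertices; hence, for any vertex $w$ and any two distinct out-neighbours $w'\neq w''$ of $w$, the subtrees rooted at $w'$ and $w''$ — that is, the sets of nodes reachable from $w'$ and from $w''$ — are disjoint. (A direct argument avoiding forest terminology: by the in-degree bound and acyclicity, the set of ancestors of a fixed node $u$ forms a single directed path, so $w$ has a uniquely determined first step toward $u$; therefore $u$ lies in the reachable set of at most one out-neighbour of $w$.)

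Then I would run the induction. For $i=1$ a node reaches at least itself, so the claim holds with $2^{0}=1$. For $i\ge 2$, fix $w\in V_\mu$ with $\mathrm{index}(w)=i$. By the second bullet of Proposition~\ref{lem:forest}, for each $j\in\{1,\dots,i-1\}$ there is a worker $w_j\in V_\mu$ with $\mathrm{index}(w_j)=j$ and $(w,w_j)\in E_\mu$; these $w_j$ are pairwise distinct because their indices differ. By the induction hypothesis, $w_j$ reaches at least $2^{j-1}$ nodes; by observation (ii) these reachable sets are pairwise disjoint, and by observation (i) none of them contains $w$ itself. Consequently $w$ reaches at least $1+\sum_{j=1}^{i-1}2^{j-1}=2^{i-1}$ nodes, which closes the induction.

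The only genuine subtlety is the disjointness claim in observation (ii); everything else is bookkeeping. It follows immediately from $G_\mu$ being a forest with in-degree at most one, but it deserves to be stated explicitly, since without it one double-counts descendants and the whole argument degrades to the far weaker bound $f(i)\ge i$.
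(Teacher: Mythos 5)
Your proof is correct and follows essentially the same route as the paper's: induction on the index, using the second bullet of Proposition~\ref{lem:forest} to supply an out-neighbour of every smaller index and the forest structure to guarantee disjointness of their reachable sets, then summing $1+\sum_{j=1}^{i-1}2^{j-1}=2^{i-1}$. Your explicit justification of the disjointness claim (via the in-degree bound and acyclicity) is a welcome elaboration of a step the paper states without detail, but it is not a different argument.
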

\begin{proof}
We show that the property holds by induction on $i$. The property trivially holds for $i = 1$. Let $i > 1$ such that there is a worker $w\in V_\mu$ with $\mathrm{index}(w) = i$. Using Proposition~\ref{lem:forest}, there is an edge $(w,w_j)\in E_\mu$ with $\mathrm{index}(w_j) = j$ for every $1 \leq j < i$. Because the graph is a directed forest, the set of nodes reachable from each $w_j$ are disjoint. Thus, the number of nodes reachable from $w$ (including itself) is $1 + \sum_{j=1}^{i-1} 2^{j-1} = 2^{i-1}$.
\end{proof}

Finally, we conclude with the proof that Algorithm~\ref{alg:ism} computes a distribution over internally stable matching which guarantees each worker a logarithmic fraction of their optimal stable share.

\begin{proof}[Proof of Theorem~\ref{thm:cr_ub}]
Algorithm~\ref{alg:ism} first computes a stable matching $\tilde\mu$ for the instance with duplicated jobs, then build $m$ matchings $\tilde\mu_1, \dots, \tilde\mu_m$. If there were a pair $(w,a)$ with $\mathrm{index}(w) = i$ which blocks matching $\tilde\mu_i$, that is $\mU(w, a) > \mU(w, \tilde\mu_i(w))$ and $w \succ_a \tilde\mu_i(a)$, then $(w, a^{(i)})$ would block $\tilde\mu$, which is a contradiction. Thus, each matching $\tilde\mu_i$ is internally stable.

Now, let us assume, that there is a stable matching $\mu$ in which a worker $w$ with $\mathrm{index}(w) = i$ receives $a = \mu(w)$ having utility $\mU(w, a) > \mU(w, \tilde\mu_i(w))$. In the matching $\tilde\mu$, job $a^{(m)}$ must be matched to some worker $w'$ such that $w' \succ_{a} w$, otherwise $(w, a)$ would block $\tilde\mu$. Moreover, we must have $\U(w', \mu(w')) \geq m\cdot\mU_D(w')$ otherwise $(w',a)$ would be blocking $\mu$. Thus, there is a node $w'\in V_\mu$ of index $m$, which proves that there exists at least $2^{m-1}$ nodes, and thus that $N \geq 2^{m-1}$. By contrapositive, if we set $m > 1+\log_2 N$, then we have $m\cdot \mU_D(w) \geq \oss{w}$ for every worker $w$, which concludes the proof.\footnote{Interestingly, we can show that $m > \log_2 N$ suffices because each $\mu(w)^{(i)}$ is matched in $\tilde\mu$ to a different worker $w_i\in V_\mu$, who can reach $2^{i-1}$ distinct workers in $G_\mu$, none of them being $w$ (as this would contradict $\tilde\mu$ being worker-optimal), which gives at least $2^m$ workers in total. However, for the sake of simplicity, we do not present this improved bound.}
\end{proof}

\subsection{Dominant Strategy Incentive Compatibility of Algorithm~\ref{alg:ism}}\label{subsec:proof_dsic}
\begin{proof}[Proof of Theorem~\ref{thm:dsic}]
	We will use the fact that when workers have strict preferences, Gale and Shapley's worker-proposing deferred acceptance procedure is dominant strategy incentive compatible, i.e., it is always optimal for workers to report their true preferences \citep{dubins1981machiavelli}.

    First, notice that for each worker $w$, the ranking $P_w$ used in Algorithm~\ref{alg:ism} aligns with her utility, ensuring that all copies of a higher-utility job are ranked above copies of lower-utility ones.
	
	To see that it is optimal for a worker $w$ to report her true vector of utility, we will give her more strategic power, and we will let her choose her ranking $P_w'$ over all the duplicated jobs. By the incentive compatibility property of the deferred acceptance procedure with strict preferences, she cannot obtain any job ranked above $\tilde \mu(w)$ in $P_{w}$. And because $P_w$ is consistent with $w$'s utility, it is optimal to report $P_w' = P_w$.
\end{proof}

    \section{$\epsilon$-Oracle for Approximated Worker Optimal Stable Matching}\label{sec:alg_epsm}
    \subsection{$\epsilon$-Oracle}\label{subsec:eps_oracle}
\begin{algorithm}[!htp]
	\caption{$\epsilon$-Oracle for Approximated Worker Optimal Stable Matching}\label{alg:epsm}
	\begin{algorithmic}[1]
		\Require $N$ workers, $K$ jobs, Utility matrix $\mU$ that encodes the preference of workers over jobs, strict preference profile $P_a$ of jobs, an integer $m \geq 1$, and the instability tolerance $\epsilon \geq 0$. 
		
		\State For each job $a \in \gA$, duplicate it $m$ times and denote the $i$-th copy as $a^{(i)}$.
		
		\State Each replica $a^{(i)}$ shares the same preference $P_a$ as the original job $a$.

        \State For every worker $w$ and job $a^{(i)}$, define the utility $$\mU(w, a^{(i)}) := \mU(w, a) - (i - 1) \epsilon$$ and use it to generate the workers' preference profile $P_w$ (breaking ties in favor of lower indices).
		
		\State Run Gale-Shapley algorithm on ${P}_w$ and $P_a$ to compute a worker-optimal stable matching $\tilde{\mu}$.
				
            \State For each $i\in [m]$, build a matching $\tilde\mu_i$, which matches each job $a$ with $\tilde\mu_i(a) := \tilde\mu(a^{(i)})$.
            
		\Ensure The distribution $D$ which selects each matching $\tilde\mu_i$ with probability $1/m$.
	\end{algorithmic}
\end{algorithm}

    \subsection{Proof of Theorem~\ref{thm:eps_oracle}}\label{subsec:proof_eps_oracle}
Similarly to the proof of Theorem~\ref{thm:cr_ub}, we run Algorithm~\ref{alg:epsm} and define the \emph{index} of a worker as the index of the job she receives in $\tilde\mu$, that is, $\textrm{index}(w) = i$ if worker $w$ receives $a_j^{(i)}$ for some $j$.
\begin{defn}
Given a problem instance $(\mU, P_a)$, run Algorithm~\ref{alg:epsm} with duplication number $m$ and instability tolerance $\epsilon$ to generate the output distribution $D$. Then, for any $\epsilon$-stable matching $\mu$ with respect to $(\mU, P_a)$, we define a graph $G_\mu = (V_\mu, E_\mu)$ where
\begin{align*}
V_\mu &:= \{w\in \gW\;|\;\mU(w, \mu(w)) \geq m\cdot \mU_D(w)-\epsilon\},\\
E_\mu &:= \{(w,w')\in V_\mu^2\;|\;\mu(w) = \tilde\mu_j(w')\text{ where } j = \mathrm{index}(w') < \mathrm{index}(w)\}.
\end{align*}
\end{defn}
Once again, $G_\mu$ is the graph of workers who prefer $\mu$ to their match in distribution $D$, where an edge $(w, w')$ means that $w'$ received a job that $w$ would have liked. Next, we show properties on the graph $G_\mu$.
\begin{prop}
For any stable matching $\mu$, we have that
\begin{itemize}
    \item $G_\mu$ is a directed forest (there is no cycle and each vertex has at most one incoming edge),
    \item For every worker $w\in V_\mu$ with $i = \mathrm{index}(w)$, and for every $1 \leq j < i$, there a worker $w'\in V_\mu$ with $j = \mathrm{index}(w')$ such that  $(w,w')\in E_\mu$.
\end{itemize}
\end{prop}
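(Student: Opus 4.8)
The plan is to reproduce the argument of Proposition~\ref{lem:forest}, threading the additive $\epsilon$-slacks introduced by the shifted utilities $\mU(w,a^{(i)}) = \mU(w,a) - (i-1)\epsilon$ through each inequality. The forest claim needs essentially no change: every edge $(w,w')\in E_\mu$ has $\mathrm{index}(w') < \mathrm{index}(w)$, so directed paths strictly decrease a positive integer and $G_\mu$ is acyclic; and an incoming edge at $w'$ forces $\mu(w) = \tilde\mu_{\mathrm{index}(w')}(w')$, a job determined by $w'$ alone, so $w$ must be that job's unique $\mu$-partner, whence each node has at most one incoming edge. (Here ``stable $\mu$'' should be read as ``$\epsilon$-stable $\mu$'', consistently with the definitions of $V_\mu$ and $E_\mu$ above.)

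For the second bullet, fix $w\in V_\mu$ with $i = \mathrm{index}(w)$, set $a = \mu(w)$, and let $b^{(i)} = \tilde\mu(w)$, so that $m\,\mU_D(w) = \mU(w,b)$ and membership $w\in V_\mu$ reads $\mU(w,a)\ge\mU(w,b)-\epsilon$. Fix $1\le j<i$. The first key step is to show $a^{(j)} \succ_{P_w}\tilde\mu(w)$ \emph{strictly}: from $\mU(w,a)\ge\mU(w,b)-\epsilon$ and $j\le i-1$,
\[
\mU(w,a^{(j)}) = \mU(w,a)-(j-1)\epsilon \ge \mU(w,b)-j\epsilon \ge \mU(w,b)-(i-1)\epsilon = \mU(w,b^{(i)}),
\]
and when this holds with equality (forcing $j=i-1$) the lower-duplication-index tie-break of Algorithm~\ref{alg:epsm} ranks $a^{(j)}$ above $b^{(i)}$, so the preference is strict in every case. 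The second step invokes stability of $\tilde\mu$ in the (effectively strict) duplicated instance: $(w,a^{(j)})$ cannot block $\tilde\mu$, so $a^{(j)}$ is matched in $\tilde\mu$ to some worker $w'$ with $w'\succ_a w$; since $\mathrm{index}(w)=i\ne j$ we get $w'\ne w$, $\mathrm{index}(w')=j$, and $\tilde\mu_j(w') = a = \mu(w)$, so $(w,w')$ is an edge of $E_\mu$ provided $w'\in V_\mu$. The third step checks exactly this: $m\,\mU_D(w') = \mU(w',\tilde\mu_j(w')) = \mU(w',a)$, and because $w'\succ_a w = \mu(a)$, the pair $(w',a)$ would $\epsilon$-block $\mu$ unless $\mU(w',\mu(w')) \ge \mU(w',a)-\epsilon = m\,\mU_D(w')-\epsilon$, which is precisely the defining condition $w'\in V_\mu$.

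The main obstacle is the strictness in the first key step: a blocking-pair argument needs a strict worker-side preference, but the $V_\mu$-slack $\mU(w,a)\ge\mU(w,b)-\epsilon$ only yields a weak inequality between the shifted utilities, with equality possible exactly when $j=i-1$. The plan there is to argue that in that boundary case the two copies $a^{(i-1)}$ and $b^{(i)}$ are genuinely distinct — if $a=b$ then $\epsilon>0$ contradicts the equality, while if $\epsilon=0$ the argument degenerates to that of Proposition~\ref{lem:forest} — so the lower-index tie-break applies and makes the preference strict. A minor side point, inherited verbatim from the tie-free proof, is that $a^{(j)}$ is matched in $\tilde\mu$; this follows from $w$'s strict preference for $a^{(j)}$ together with jobs ranking any acceptable worker above remaining unmatched.
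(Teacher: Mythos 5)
Your proposal is correct and follows essentially the same route as the paper's proof: the forest structure by construction, then the chain ``$w\in V_\mu$ implies $a^{(j)}\succ_{P_w}\tilde\mu(w)$, stability of $\tilde\mu$ produces $w'$ with $w'\succ_a w$, and $\epsilon$-stability of $\mu$ forces $w'\in V_\mu$.'' The only difference is that you spell out why the shifted utilities $\mU(w,a^{(j)})=\mU(w,a)-(j-1)\epsilon$ together with the lower-index tie-break yield a \emph{strict} preference $a^{(j)}\succ_{P_w}\tilde\mu(w)$, a step the paper asserts without the boundary-case analysis; your handling of it is sound.
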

\begin{proof}
    The proof is almost identical to that of Proposition~\ref{lem:forest}. Fix a worker $w\in V_\mu$ with $i = \mathrm{index}(w)$, let $1 \leq j < i$, and let $a = \mu(w)$. By definition of $V_\mu$, worker $w$ prefers $\mu$ to $D$, that is $\mU(w, a) \geq m\cdot\mU_D(w)-\epsilon = \mU(w, \tilde\mu_i(w))-\epsilon$. By definition of $w$'s preference list $P_w$ in Algorithm~\ref{alg:epsm}, the lexicographic ordering gives that
    $$
    a^{(j)} \succ_{P_w} \tilde\mu(w).
    $$
    Because $\tilde\mu$ is a $\epsilon$-stable matching, it should not be blocked by the pair $(w, a^{(j)})$. Thus, there exists a worker $w'\in\gW$ such that $\tilde\mu(w') = a^{(j)}$ and 
    $$
    w' \succ_{a} w.$$
    Finally, because $\mu$ is a stable matching, it should not be blocked by the pair $(w', a)$, thus
    $$
    \mU(w', \mu(w)) \geq \mU(w', a)-\epsilon = m\cdot\mU_D(w')-\epsilon,
    $$
    proving that $w' \in V_\mu$. Hence, there is an edge $(w, w')\in E_\mu$, which concludes the proof.
\end{proof}

We will once again use Proposition~\ref{lem:reachable} to give a lower on the number of nodes in the graph $G_\mu$. Finally, we conclude with the proof that Algorithm~\ref{alg:epsm} computes a distribution over internally $\epsilon$-stable matching which guarantees each worker a logarithmic fraction of their optimal stable share.

\begin{proof}[Proof of Theorem~\ref{thm:eps_oracle}]
Algorithm~\ref{alg:epsm} first computes a stable matching $\tilde\mu$ for the instance with duplicated jobs, then build $m$ matchings $\tilde\mu_1, \dots, \tilde\mu_m$. If there were a pair $(w,a)$ with $\mathrm{index}(w) = i$ which $\epsilon$-blocks matching $\tilde\mu_i$, that is $\mU(w, a) > \mU(w, \tilde\mu_i(w))+\epsilon$ and $w \succ_a \tilde\mu_i(a)$, then $(w, a^{(i)})$ would block $\tilde\mu$, which is a contradiction. Thus, each matching $\tilde\mu_i$ is internally stable.

Now, let us assume, that there is a stable matching $\mu$ in which a worker $w$ with $\mathrm{index}(w) = i$ receives $a = \mu(w)$ having utility $\mU(w, a) > \mU(w, \tilde\mu_i(w)) + m\epsilon$. In the matching $\tilde\mu$, job $a^{(m)}$ must be matched to some worker $w'$ such that $w' \succ_{a} w$, otherwise $(w, a)$ would block $\tilde\mu$. Moreover, we must have $\U(w', \mu(w')) \geq m\cdot\mU_D(w')-\epsilon$ otherwise $(w',a)$ would be blocking $\mu$. Using Proposition~\ref{lem:reachable}, there is a node $w'\in V_\mu$ of index $m$, which proves that there exists at least $2^{m-1}$ nodes, and thus that $N \geq 2^{m-1}$. By contrapositive, if we set $m > 1+\log_2 N$, then we have $\mU_D(w) \geq \oss{w}/m-\epsilon$ for every worker $w$, which concludes the proof.
\end{proof}

\subsection{Robustness of $\epsilon$-stable matching}\label{subsec:proof_eps_stable}
\begin{lem}\label{prop:eps_stable}
 Fix the preferences of jobs over workers. Given two utility matrices $\mU_1$ and $\mU_2$ such that\footnote{We recall that the max norm of a matrix $A=(A_{i,j})$, is defined by $\|A\|_{\max}=\max_{i,j} |A_{i,j}|$.} $\|\mU_1 - \mU_2\|_{\max} < \frac{\epsilon}{2}$, then any stable matching $\mu$ for $\mU_1$, is also  $\epsilon$-stable with respect to $\mU_2$.
\end{lem}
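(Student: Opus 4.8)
The plan is to argue by contraposition: I will show that if some pair $(w,a)$ is an $\epsilon$-blocking pair for $\mu$ with respect to $\mU_2$, then the \emph{same} pair is an ordinary blocking pair for $\mu$ with respect to $\mU_1$, which contradicts the assumed stability of $\mu$ under $\mU_1$. The key structural observation is that the job-side requirement $w \succ_a \mu(a)$ in the definition of a (weakly) blocking pair is identical in the two instances, because the preferences $P_a$ of jobs over workers are held fixed; only the worker-side utility comparison differs between $\mU_1$ and $\mU_2$, and that is exactly where the max-norm bound enters.

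Concretely, suppose $(w,a)$ $\epsilon$-blocks $\mu$ under $\mU_2$, i.e.\ $w \succ_a \mu(a)$ and $\mU_2(w,a) > \mU_2(w,\mu(w)) + \epsilon$. From $\|\mU_1 - \mU_2\|_{\max} < \epsilon/2$, applied entrywise, I get $\mU_1(w,a) \geq \mU_2(w,a) - \epsilon/2$ and $\mU_1(w,\mu(w)) \leq \mU_2(w,\mu(w)) + \epsilon/2$. Subtracting the second inequality from the first yields $\mU_1(w,a) - \mU_1(w,\mu(w)) > \big(\mU_2(w,a) - \mU_2(w,\mu(w))\big) - \epsilon > 0$, hence $\mU_1(w,a) > \mU_1(w,\mu(w))$. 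Combined with the unchanged relation $w \succ_a \mu(a)$, this shows $(w,a)$ is a blocking pair for $\mu$ under $\mU_1$, contradicting stability. Therefore no such $\epsilon$-blocking pair exists and $\mu$ is $\epsilon$-stable for $\mU_2$.

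The only point requiring a little care is the treatment of unmatched agents, i.e.\ when $\mu(w) = \bot$ or $\mu(a) = \bot$. On the worker side I apply the paper's convention for $\mU(w,\bot)$ consistently to both instances (so $\mU_1(w,\bot) = \mU_2(w,\bot)$, whether that common value is $0$ or the same infinitesimal), which keeps the entrywise closeness valid at $\bot$ and leaves the inequality chain above intact; on the job side the condition $w \succ_a \bot$ is again common to both instances. I expect this $\bot$-bookkeeping to be the only mild obstacle, and it is routine; the core step is the one-line inequality manipulation driven by the $\epsilon/2$ max-norm bound.
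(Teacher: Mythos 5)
Your proof is correct and is essentially the paper's argument read in contrapositive form: both rest on the same entrywise $\epsilon/2$ perturbation bound applied to $\mU(w,a)$ and $\mU(w,\mu(w))$, giving a total shift of at most $\epsilon$ in the blocking inequality, with the job-side relation $w \succ_a \mu(a)$ unchanged. The extra remark on $\bot$-bookkeeping is a harmless addition the paper leaves implicit.
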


\begin{proof}
    If a matching $\mu$ is stable with respect to $\mU_1$, then for any $(w, a)$ pair such that $w \succ_{a} \mu(a)$, we must have
    \begin{align}\label{eq:mu1_stable}
        \mU_1(w, a) \leq \mU_1(w, \mu(w)),
    \end{align}
    from the definition of stable matching (Definition~\ref{def:weak_stable}).

    Since $\|\mU_1 - \mU_2\|_{\max} \leq \frac{\epsilon}{2}$, we have that for any $(w, a)$ pair, $\lvert\mU_1(w, a) - \mU_2(w, a)\rvert \leq \frac{\epsilon}{2}$.
    Therefore, 
    \begin{align}\label{eq:mu2_stable}
        \mU_2(w, a) \leq \mU_1(w, a) + \frac{\epsilon}{2} \leq \mU_1(w, \mu(w)) + \frac{\epsilon}{2} \leq \mU_2(w, \mu(w)) + \epsilon,
    \end{align}
    where the first and the last inequality come from $\|\mU_1 - \mU_2\|_{\max} \leq \frac{\epsilon}{2}$, while the second inequality holds according to Eq.(\ref{eq:mu1_stable}).
    Therefore, combining $w \succ_a \mu(a)$ and Eq.(\ref{eq:mu2_stable}), we can conclude that matching $\mu$ is $\epsilon$-stable with respect to $\mU_2$.
\end{proof}

\subsection{Proof of Theorem~\ref{thm:batch_learn}}\label{subsec:proof_batch_learn}
For convenience, we denote $\gS^\mU$ (resp. $\gS^\mU_\epsilon$) the stable matchings ($\epsilon$-stable matchings) with respect to $\mU$.
\begin{proof}
    By running Algorithm~\ref{alg:epsm} with $\mU = \hat{\mU}, \epsilon = \epsilon, m = \lceil\log_2 N\rceil$, from Theorem~\ref{thm:eps_oracle}, we get
    \begin{align}\label{eq:eps_oracle_empirical}
        \mU_{D}(w) \geq \frac{\osseps[\hat]{w}}{m} - \epsilon, \quad \forall w \in \gW.
    \end{align}
    By construction, any utility matrix $\mU \in \gU$ satisfies $\|\mU - \hat{\mU}\|_{\max} \leq \frac{\epsilon}{2}$. From Lemma~\ref{prop:eps_stable}, we know that for any matching $\mu \in \gS^{\mU}$, $\forall \mU \in \gU$, we have $\mu \in \gS_{\epsilon}^{\hat{\mU}}$, that is
    $\bigcup_{\mU \in \gU} \gS^{\mU} \subseteq \gS^{\hat{\mU}}_{\epsilon}$.
    Therefore, for the optimal stable share, we have
    \begin{align}\label{eq:oss_relation}
        \osssup{w} \leq \osseps[\hat]{w}, \quad \forall w \in \gW.
    \end{align}
    Combining Eq.(\ref{eq:eps_oracle_empirical}) and (\ref{eq:oss_relation}), the conclusion holds.
\end{proof}

    \section{Explore-then-Choose-Oracle Algorithm}\label{sec:etco_alg}
    Algorithm~\ref{alg:etco_full} is the full version of the Explore-then-Choose-Oracle algorithm.

\begin{algorithm}[!htp]
	\caption{Explore-then-Choose-Oracle (Full version)}\label{alg:etco_full}
	\begin{algorithmic}[1]
		\Require $N$ workers, $K$ jobs, horizon $T$, exploration length $T_0 < T$, preference profile $P_{a}$ for all jobs $a \in \gK$, approximation stable-matching oracle $\sO$. 
		
		\State Initialize: $\hat{\mU}(i, j) = 0, T_{i, j} = 0, \forall i \in [N], j \in [K]$.

        \State Initialize: $F_{i} \gets \textrm{False}$. \Comment{Whether the CIs of the first $(N + 1)$-ranked jobs are disjoint.}

        \State Set $t=1, T_0\gets K\floor{T_0/K}, t_m = 0$ \Comment{To have full rounds of round-robin.}
            \While {$t \le T_0$ and $\exists \; i\in[N]$ s.t. $F_i==\textrm{False}$} \Comment{Phase 1, round-robin exploration.}

            \State Match $\mu_t(i) \gets a_{((t + i - 1) \mod K) + 1}, \forall \; i \in [N]$. 
    
            \State Observe $X_{i, \mu_t(i)}(t)$ and update $\hat{\mU}(i, \mu_t(i)), T_{i, \mu_t(i)}$ as follows:\label{alg_lin:emp_update}
            \begin{align*}
                \hat{\mU}(i, \mu_t(i)) = \frac{\hat{\mU}(i, \mu_t(i)) \cdot T_{i, \mu_t(i)} + X_{i, \mu_t(i)}(t)}{T_{i, \mu_t(i)} + 1}, T_{i, \mu_t(i)} = T_{i, \mu_t(i)} + 1.
            \end{align*}
    
            \State $t\gets t+1$

            \If{$t \mod K ==0$ } \Comment{Completed a full round of round-robin}
            \State $t_m \gets t_m + 1$.
            \State Compute $UCB_{i, j}$ and $LCB_{i, j}$ for all $i \in [N], j \in [K]$.\label{alg_lin:cb} \Comment{See \Cref{eq:ucb_lcb}}

            \For{$i = 1, 2, \cdots, N$}
            \State 
            $\hat{\mU}_{\text{sort}}(i, \cdot) \gets \text{Sort}(\hat{\mU}(i, \cdot), \text{decreasing})$

            \State $\Delta_{i, \min} \gets \min \left\{\hat{\mU}_{\text{sort}}(i, j) - \hat{\mU}_{\text{sort}}(i, j + 1), j \in [N]\right\}$

            \If {$\Delta_{i, \min} > 2 \sqrt{\frac{6 \ln T}{t_m}}$}
            \State $F_i \gets \textrm{True}$.\label{alg_lin:set_flag}
            \EndIf
            
            \EndFor

            \If{$F_i == \textrm{True}$, $\forall i \in [N]$,} \Comment{No ties -- standard oracle}
            \State Compute preference list $P_w$ for all $w\in\gW$ according to $\hat{\mU}$
    
            \State $\hat{\mu}^* \gets$ worker-optimal stable matching w.r.t $P_w$ and $P_a$ (using GS algorithm)
    
            \ElsIf{$t = T_0$} \Comment{Potential ties -- approximation oracle}
            \State $\hat{\mu}^* \gets \sO(\bar{\mU})$ for $\bar{\mU}$ s.t. $\bar{\mU}(i,j) = UCB_{i, j}$ for all $i\in[N],j\in[K]$
            \EndIf            
            \EndIf
            \EndWhile

            \While{$t \le T$} \Comment{Phase 2, exploitation with the chosen oracle.}
            \State Match $\mu_t(i) \gets \hat{\mu}^*(i), \forall i$.
            \State $t\gets t+1$
            \EndWhile
	\end{algorithmic}
\end{algorithm}

    \section{Technical Lemmas}\label{sec:tech_lemma}
    \begin{lem}[Corollary 5.5 in \citet{lattimore2020bandit}]\label{lem:conc_ineq}
    Assume that $X_1, X_2, \cdots, X_n$ are independent, $\sigma$-subgaussian random variables centered around $\mu$. Then for any $\varepsilon > 0$, 
    \begin{align*}
        \sP \left(\frac{1}{n} \sum_{i = 1}^n X_i \geq \mu + \varepsilon\right) \leq \exp \left(- \frac{n \varepsilon^2}{2 \sigma^2} \right), \quad 
        \sP \left(\frac{1}{n} \sum_{i = 1}^n X_i \leq \mu - \varepsilon\right)\leq \exp \left(- \frac{n \varepsilon^2}{2 \sigma^2}\right).
    \end{align*}
\end{lem}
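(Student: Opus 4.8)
The plan is to prove this by the Cramér--Chernoff method, which is the standard route and is essentially the content of the cited corollary. First I would reduce to the mean-zero case: replacing $X_i$ by $X_i - \mu$, it suffices to show that if $Y_1,\dots,Y_n$ are independent, $\sigma$-sub-Gaussian, and centered at $0$ (so that $\mathbb{E}[e^{\lambda Y_i}] \le e^{\lambda^2 \sigma^2/2}$ for every $\lambda \in \mathbb{R}$), then $\mathbb{P}\big(\sum_i Y_i \ge n\varepsilon\big) \le \exp\big(-n\varepsilon^2/(2\sigma^2)\big)$; the two displayed inequalities follow by substituting back.

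For the upper tail, fix $\lambda > 0$ and apply Markov's inequality to the nonnegative random variable $\exp(\lambda \sum_i Y_i)$:
\[
\mathbb{P}\Big(\sum_{i=1}^n Y_i \ge n\varepsilon\Big) \le e^{-\lambda n \varepsilon}\, \mathbb{E}\Big[e^{\lambda \sum_i Y_i}\Big] = e^{-\lambda n \varepsilon} \prod_{i=1}^n \mathbb{E}\big[e^{\lambda Y_i}\big] \le e^{-\lambda n \varepsilon}\, e^{n \lambda^2 \sigma^2/2},
\]
where the middle equality uses independence and the last step uses the sub-Gaussian moment generating function bound. Minimizing the exponent $n\big(\lambda^2\sigma^2/2 - \lambda\varepsilon\big)$ over $\lambda > 0$ is an elementary one-variable calculus step; the minimizer is $\lambda^\star = \varepsilon/\sigma^2 > 0$, which gives exponent $-n\varepsilon^2/(2\sigma^2)$ and hence the claim.

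For the lower tail I would simply note that $-Y_1,\dots,-Y_n$ are again independent, mean-zero, and $\sigma$-sub-Gaussian (the MGF condition is symmetric under $\lambda \mapsto -\lambda$), so applying the upper-tail bound just proved to $\sum_i(-Y_i)$ yields $\mathbb{P}\big(\sum_i Y_i \le -n\varepsilon\big) = \mathbb{P}\big(\sum_i(-Y_i) \ge n\varepsilon\big) \le \exp\big(-n\varepsilon^2/(2\sigma^2)\big)$. Undoing the centering gives both stated inequalities.

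There is no genuine obstacle here: the only points needing care are (i) ensuring the sub-Gaussian definition in force is the MGF bound $\mathbb{E}[e^{\lambda(X_i-\mu)}] \le e^{\lambda^2\sigma^2/2}$ (rather than a tail- or Orlicz-norm formulation, which would cost extra constants), so that the product-of-MGFs step and the final exponent are exact; and (ii) checking that the Chernoff optimizer $\lambda^\star$ is positive, which holds because $\varepsilon > 0$ so the Markov step with $\lambda > 0$ is legitimate. Everything else is routine, and since the statement is verbatim Corollary~5.5 of \citet{lattimore2020bandit}, one may alternatively just invoke it.
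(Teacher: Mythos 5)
Your proof is correct and is exactly the standard Cramér--Chernoff argument by which Corollary~5.5 of \citet{lattimore2020bandit} is established; the paper itself offers no proof of this lemma, simply citing that reference, so there is nothing to diverge from. Both caveats you flag (the MGF form of the sub-Gaussian definition and the positivity of the optimizer $\lambda^\star = \varepsilon/\sigma^2$) are handled appropriately, and the symmetry argument for the lower tail is fine.
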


\begin{lem}[Divergence Decomposition, Lemma 15.1 in \citet{lattimore2020bandit}]\label{lem:divergence_decomp}
    For two bandit instances $\nu = \left\{\nu_{ij}: i \in [N], j \in [K]\right\}$, and $\nu' = \left\{\nu'_{ij}: i \in [N], j \in [K]\right\}$, fix some policy $\pi$ and let $\sP_{\nu, \pi}$ and $\sP_{\nu', \pi}$ be the probability measures induced by the $T$-round interconnection of $\pi$ and $\nu$ (respectively, $\pi$ and $\nu'$), the following divergence decomposition holds,
    \begin{align}
        D\left(\sP_{\nu, \pi}, \sP_{\nu', \pi}\right) = \sum_{i = 1}^{N} \sum_{j = 1}^{K} \mathbb{E}_{\nu, \pi} N_{ij}(T) \cdot D\left(\nu_{ij}, \nu'_{ij}\right).
    \end{align}
\end{lem}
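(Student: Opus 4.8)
The plan is to prove the identity by the standard chain-rule argument for KL divergence applied to the stream of observations, using that the same (adaptive, possibly randomized) policy $\pi$ governs both worlds so that its contribution cancels. First I would fix the probability space: over $T$ rounds the interaction yields a history $H_T = (\mu_1, X_1, \dots, \mu_T, X_T)$, where $\mu_t$ is the matching played at round $t$ and $X_t = (X_{1,t},\dots,X_{N,t})$ collects the rewards on the matched pairs, with $X_{i,t}$ drawn conditionally independently (across $i$ and across $t$) from $\nu_{i,\mu_t(i)}$ under instance $\nu$. Assuming $\nu_{ij}\ll\nu'_{ij}$ for all $i,j$ --- otherwise both sides are $+\infty$ and there is nothing to prove; in the application of this lemma the reward laws are Gaussians with common variance, so this holds --- the law $\sP_{\nu,\pi}$ is absolutely continuous with respect to $\sP_{\nu',\pi}$, and its Radon--Nikodym derivative evaluated at $H_T$ equals
\[
  \prod_{t=1}^T\prod_{i=1}^N \frac{d\nu_{i,\mu_t(i)}}{d\nu'_{i,\mu_t(i)}}(X_{i,t}),
\]
since the policy factors $\pi_t(\mu_t\mid H_{t-1})$ appearing in the densities of $H_T$ under the two instances are identical and cancel.

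Next I would take the logarithm and the expectation under $\sP_{\nu,\pi}$, obtaining
\[
  D(\sP_{\nu,\pi},\sP_{\nu',\pi}) \;=\; \sum_{t=1}^T\sum_{i=1}^N \mathbb{E}_{\nu,\pi}\!\left[\log\frac{d\nu_{i,\mu_t(i)}}{d\nu'_{i,\mu_t(i)}}(X_{i,t})\right].
\]
Then, conditioning each summand on $H_{t-1}$ together with the (randomized) choice $\mu_t$ and noting that conditionally $X_{i,t}\sim\nu_{i,\mu_t(i)}$, the tower rule turns the inner expectation into the per-round conditional KL term $D(\nu_{i,\mu_t(i)},\nu'_{i,\mu_t(i)})$. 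Expanding $\mu_t(i)$ with indicators, $D(\nu_{i,\mu_t(i)},\nu'_{i,\mu_t(i)}) = \sum_{j=1}^K \mathds{1}\{\mu_t(i)=a_j\}\,D(\nu_{ij},\nu'_{ij})$, and summing over $t$ converts $\sum_{t=1}^T\sP_{\nu,\pi}(\mu_t(i)=a_j)$ into $\mathbb{E}_{\nu,\pi} N_{ij}(T)$, with $N_{ij}(T) = \sum_{t=1}^T \mathds{1}\{\mu_t(i)=a_j\}$ the number of rounds in which worker $w_i$ is matched to job $a_j$. Reordering the (nonnegative) terms yields exactly $\sum_{i=1}^N\sum_{j=1}^K \mathbb{E}_{\nu,\pi} N_{ij}(T)\,D(\nu_{ij},\nu'_{ij})$.

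The only genuinely delicate point is the measure-theoretic bookkeeping: justifying the factorization of the Radon--Nikodym derivative --- this is precisely where one uses that both worlds run the same policy and that the $N$ rewards within a round are conditionally independent given the matching --- and invoking the tower property against the filtration $H_0\subset H_1\subset\cdots$, noting that $\mu_t$ is measurable with respect to $H_{t-1}$ augmented by the policy's internal randomness. Once this is set up, the steps are routine and mirror the single-player proof in \citet{lattimore2020bandit}; the matching structure only contributes the extra inner sum over $i\in[N]$, which goes through verbatim by the conditional independence of the per-pair rewards. So I do not expect a real obstacle beyond stating the dominating-measure hypothesis carefully.
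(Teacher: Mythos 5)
Your argument is correct and is exactly the standard chain-rule/tower-property proof of the divergence decomposition; the paper itself offers no proof of this lemma, importing it verbatim from Lemma 15.1 of \citet{lattimore2020bandit}, whose proof your write-up reproduces. The only point specific to this setting --- that the $N$ per-round rewards are conditionally independent given the matching, so the Radon--Nikodym derivative factors over workers and the extra inner sum over $i\in[N]$ appears --- is handled correctly, so there is nothing to add.
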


\begin{lem}[Data-processing Inequality, Lemma 1 in \citet{garivier2019explore}]\label{lem:data_process_ineq}
    Consider a measurable space $(\Omega, \gF)$ equipped with two distributions $\sP_1$ and $\sP_2$, and any $\gF$-measurable random variable $Z: \Omega \rightarrow [0, 1]$. We denote respectively by $\mathbb{E}_1$ and $\mathbb{E}_2$ the expectations under $\sP_1$ and $\sP_2$. Then, 
    \begin{align*}
        KL \left(\sP_1, \sP_2\right) \geq kl(\mathbb{E}_1 [Z], \mathbb{E}_2 [Z]),
    \end{align*}
    where $kl$ denotes the KL divergence for Bernoulli distributions, i.e., $\forall p, q \in [0, 1]^2, kl(p, q) = p\ln \frac{p}{q} + (1 - p)\ln \frac{1 - p}{1 - q}$.
\end{lem}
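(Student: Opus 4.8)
The plan is to deduce the inequality from the data-processing property of relative entropy, after replacing the bounded statistic $Z$ by a Bernoulli one. First, if $\sP_1$ is not absolutely continuous with respect to $\sP_2$ then $KL(\sP_1,\sP_2)=+\infty$ and the bound is vacuous, so assume $\sP_1\ll\sP_2$. Next, enlarge the probability space to $\Omega'=\Omega\times[0,1]$ with the product $\sigma$-algebra and the measures $\sP_i'=\sP_i\otimes\lambda$, where $\lambda$ is Lebesgue measure on $[0,1]$; by additivity of relative entropy over independent coordinates and $KL(\lambda,\lambda)=0$ we get $KL(\sP_1',\sP_2')=KL(\sP_1,\sP_2)$.

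Now define the measurable map $B:\Omega'\to\{0,1\}$ by $B(\omega,u)=\mathds{1}\{u\le Z(\omega)\}$. By Tonelli, $\sP_i'(B=1)=\int_\Omega Z\,d\sP_i=\mathbb{E}_i[Z]$, so the pushforward $B_{\#}\sP_i'$ is the Bernoulli law with parameter $\mathbb{E}_i[Z]$. The data-processing inequality for relative entropy ($KL$ does not increase under a measurable map) then yields
\begin{align*}
kl\big(\mathbb{E}_1[Z],\mathbb{E}_2[Z]\big)=KL\big(B_{\#}\sP_1',B_{\#}\sP_2'\big)\le KL(\sP_1',\sP_2')=KL(\sP_1,\sP_2),
\end{align*}
which is exactly the claim.

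The main obstacle is justifying the data-processing step $KL(B_{\#}\sP_1',B_{\#}\sP_2')\le KL(\sP_1',\sP_2')$ itself. I would prove it by writing $\rho=d\sP_1'/d\sP_2'$, observing that the density of $B_{\#}\sP_1'$ with respect to $B_{\#}\sP_2'$ equals the $\sP_2'$-conditional expectation $\mathbb{E}_{\sP_2'}[\rho\mid B]$, and applying conditional Jensen to the convex function $\phi(t)=t\ln t$: $\mathbb{E}_{\sP_2'}[\phi(\mathbb{E}_{\sP_2'}[\rho\mid B])]\le\mathbb{E}_{\sP_2'}[\mathbb{E}_{\sP_2'}[\phi(\rho)\mid B]]=\mathbb{E}_{\sP_2'}[\phi(\rho)]=KL(\sP_1',\sP_2')$, while the left-hand side is $KL(B_{\#}\sP_1',B_{\#}\sP_2')$.

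Alternatively, one can sidestep the general machinery: by the layer-cake formula $Z=\int_0^1\mathds{1}\{Z>t\}\,dt$, hence $\mathbb{E}_i[Z]=\int_0^1\sP_i(Z>t)\,dt$; joint convexity of $(p,q)\mapsto kl(p,q)$ — which holds since $p\ln(p/q)$ is the perspective of the convex function $t\mapsto t\ln t$ — lets one pull the integral out via Jensen, and the two-cell log-sum inequality gives $kl\big(\sP_1(Z>t),\sP_2(Z>t)\big)\le KL(\sP_1,\sP_2)$ for every $t$; integrating over $t\in[0,1]$ finishes the proof. On this route the only nontrivial ingredient is the elementary two-cell log-sum inequality.
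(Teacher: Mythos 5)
Your argument is correct. Note that the paper itself offers no proof of this lemma: it is imported verbatim as Lemma~1 of \citet{garivier2019explore} and used as a black box in the proof of Theorem~\ref{thm:bandit_lower_bound}, so there is no in-paper proof to compare against. Your first route --- tensorizing with an independent uniform coordinate, defining the Bernoulli statistic $B(\omega,u)=\mathds{1}\{u\le Z(\omega)\}$ so that $B_{\#}\sP_i'$ is Bernoulli with parameter $\mathbb{E}_i[Z]$, and invoking contraction of relative entropy under a measurable map --- is essentially the proof given in the cited reference, and your justification of the contraction step via $d(B_{\#}\sP_1')/d(B_{\#}\sP_2')=\mathbb{E}_{\sP_2'}[\rho\mid B]$ together with conditional Jensen for $\phi(t)=t\ln t$ is the standard and complete way to close that step (you correctly dispose of the case $\sP_1\not\ll\sP_2$ up front). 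Your second route is a genuinely more elementary alternative: the layer-cake identity $\mathbb{E}_i[Z]=\int_0^1\sP_i(Z>t)\,dt$, joint convexity of $kl$ (as a sum of perspectives of $t\mapsto t\ln t$) to pull the integral inside, and the two-cell log-sum inequality $kl(\sP_1(A),\sP_2(A))\le KL(\sP_1,\sP_2)$ for each superlevel set $A=\{Z>t\}$; since the integration is over an interval of length one, this yields the claim without any auxiliary randomization or conditional expectations. The first route generalizes more readily (it is the full data-processing inequality), while the second needs only the finite-partition form of it; either would serve as a self-contained proof here.
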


    \section{Proof of Therorem~\ref{thm:bandit_upper_bound}}\label{sec:proof_bandit_upper_bound}
    For convenience, let $\hat{\mU}^{(t)}(i, j)$, $T^{(t)}_{i, j}$, $UCB^{(t)}_{i, j}$, $LCB^{(t)}_{i, j}$ be the value of 
$\hat{\mU}(i, j)$, $T_{i, j}$, $UCB_{i, j}$, $LCB_{i, j}$ at the end of round $t$. Define $\gF = \left\{\exists t \in [T], i \in [N], j \in [K]: \lvert\hat{\mU}^{(t)} (i, j) - \mU(i, j)\rvert > \sqrt{\frac{6 \ln T}{T^{(t)}_{i, j}}}\right\}$ as the bad event that some preference is not estimated well during the horizon.

\begin{lem}\label{lem:bad_event_prob}
    \begin{align*}
        \sP (\gF) \leq 2 NK / T.
    \end{align*}
\end{lem}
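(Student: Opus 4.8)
The plan is the standard two-level union bound: over the $NK$ worker--job pairs, and, for each pair, over how many times it has been matched so far.

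First I would decouple the reward randomness from the algorithm's adaptivity. For each pair $(i,j)$, let $Y_{i,j,1}, Y_{i,j,2},\dots$ be i.i.d.\ $1$-subgaussian random variables with mean $\mU(i,j)$, and couple the execution of ETCO with these sequences so that the reward observed the $n$-th time $w_i$ is matched to $a_j$ equals $Y_{i,j,n}$. This is legitimate because, although the stopping flags $F_i$ and the phase-1 length are data-dependent, the identity of the $n$-th sample of a fixed pair is not. With this coupling, whenever $T^{(t)}_{i,j}=n$ we have $\hat\mU^{(t)}(i,j)=\frac1n\sum_{s=1}^{n}Y_{i,j,s}$. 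Moreover $\hat\mU$ is updated only during the round-robin exploration phase, which consists of at most $\lfloor T_0/K\rfloor\le T$ full rounds, so $T^{(t)}_{i,j}\le T$ for every $t$. Hence the bad event satisfies
\[
\gF \;\subseteq\; \bigcup_{i\in[N]}\,\bigcup_{j\in[K]}\,\bigcup_{n=1}^{T}\Bigl\{\,\bigl|\tfrac1n\textstyle\sum_{s=1}^{n}Y_{i,j,s}-\mU(i,j)\bigr| > \sqrt{\tfrac{6\ln T}{n}}\,\Bigr\}.
\]

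Next, for a fixed triple $(i,j,n)$ I would invoke Lemma~\ref{lem:conc_ineq} with $\sigma=1$ and $\varepsilon=\sqrt{6\ln T/n}$, so that $n\varepsilon^2/2=3\ln T$ and each one-sided tail is at most $\exp(-3\ln T)=T^{-3}$; the two-sided deviation event then has probability at most $2T^{-3}$. Summing over $n\in[T]$ gives at most $2T^{-2}$ for each pair $(i,j)$, and summing over the $NK$ pairs gives $\sP(\gF)\le 2NK\,T^{-2}\le 2NK/T$, which is the claimed bound (in fact one obtains the slightly stronger $2NK/T^2$).

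I do not expect a substantial obstacle here; the only step deserving a careful sentence is the coupling in the first paragraph, i.e.\ verifying that the algorithm's data-dependent decisions do not disturb the i.i.d.\ structure of the per-pair reward samples. Once that is in place, the result follows from Lemma~\ref{lem:conc_ineq} and two union bounds.
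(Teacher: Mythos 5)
Your proof is correct and follows essentially the same route as the paper's: a union bound over worker--job pairs and sample counts followed by the subgaussian concentration bound of Lemma~\ref{lem:conc_ineq} with $\varepsilon=\sqrt{6\ln T/n}$, giving $2T^{-3}$ per term. The only differences are cosmetic: you make the decoupling of the adaptive counts $T^{(t)}_{i,j}$ from an i.i.d.\ reward sequence explicit (the paper does this implicitly when it bounds $\sP(T^{(t)}_{i,j}=s,\ \cdot\ )$), and by indexing the union directly by $(i,j,n)$ rather than by $(t,i,j,s)$ you drop a redundant factor of $T$ and obtain the slightly sharper $2NK/T^2$.
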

\begin{proof}
    \begin{align*}
        \sP(\gF) &= \sP \left(\exists 1 \leq t \leq T, i \in [N], j \in [K]: \lvert\hat{\mU}^{(t)}(i, j) - \mU(i, j) \rvert > \sqrt{\frac{6 \ln T}{T^{(t)}_{i, j}}}\right) \\
        &\leq \sum_{t = 1}^{T} \sum_{i \in [N]} \sum_{j \in [K]} \sP \left(\lvert \hat{\mU}^{(t)}(i, j) - \mU(i, j)\rvert > \sqrt{\frac{6 \ln T}{T^{(t)}_{i, j}}}\right) \\
        &\leq \sum_{t = 1}^{T} \sum_{i \in [N]} \sum_{j \in [K]} \sum_{s = 1}^t \sP \left(T^{(t)}_{i, j} = s, \lvert \hat{\mU}^{(t)}(i, j) - \mU(i, j)\rvert > \sqrt{\frac{6 \ln T}{s}}\right) \\
        &\leq \sum_{t = 1}^T \sum_{i \in [N]} \sum_{j \in [K]} t \cdot 2 \exp (-3\ln T) \\
        &\leq 2 NK / T,
    \end{align*}
    where the second last inequality results from Lemma~\ref{lem:conc_ineq}.
\end{proof}

\begin{lem}\label{lem:ucb_lcb}
    Conditional on $\urcorner \gF$, $UCB_{i, j}^{(t)} < LCB_{i, j'}^{(t)}$ implies $\mU(i, j) < \mU(i, j')$.
\end{lem}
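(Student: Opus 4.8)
The plan is to prove Lemma~\ref{lem:ucb_lcb} by a direct chain of inequalities, using only the definition of the confidence bounds in Eq.~(\ref{eq:ucb_lcb}) and the complement of the bad event $\gF$. The statement is a ``separation implies true ordering'' claim: if the confidence intervals for jobs $a_j$ and $a_{j'}$ (as seen by worker $w_i$ at round $t$) do not overlap in a particular direction, then the true means are ordered the same way.

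First I would unpack what $\urcorner\gF$ gives us. By definition of $\gF$, on the good event $\urcorner\gF$ we have, for \emph{every} $t\in[T]$, $i\in[N]$, $j\in[K]$, that $\lvert \hat\mU^{(t)}(i,j) - \mU(i,j)\rvert \le \sqrt{6\ln T / T^{(t)}_{i,j}}$. In particular this holds simultaneously for the pair $(i,j)$ and the pair $(i,j')$ at the same round $t$. Equivalently, $\mU(i,j) \le \hat\mU^{(t)}(i,j) + \sqrt{6\ln T/T^{(t)}_{i,j}} = UCB^{(t)}_{i,j}$ and $\mU(i,j') \ge \hat\mU^{(t)}(i,j') - \sqrt{6\ln T/T^{(t)}_{i,j'}} = LCB^{(t)}_{i,j'}$ — i.e., the true means lie inside their confidence intervals. (One minor caveat: the bounds in Eq.~(\ref{eq:ucb_lcb}) use $\max\{T_{i,j},1\}$ in the denominator, whereas $\gF$ is phrased with $T^{(t)}_{i,j}$; since in the exploitation-relevant rounds every entry has been sampled at least once in the round-robin phase, $\max\{T^{(t)}_{i,j},1\} = T^{(t)}_{i,j}$, so the two coincide and the confidence interval genuinely contains $\mU(i,j)$. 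I would state this in one line.)

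Then the argument is immediate: assume $UCB^{(t)}_{i,j} < LCB^{(t)}_{i,j'}$. Chaining,
\begin{align*}
\mU(i,j) \;\le\; UCB^{(t)}_{i,j} \;<\; LCB^{(t)}_{i,j'} \;\le\; \mU(i,j'),
\end{align*}
so $\mU(i,j) < \mU(i,j')$, which is exactly the conclusion. The middle strict inequality is the hypothesis; the outer two are the containment of the true mean in its confidence interval, valid on $\urcorner\gF$.

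There is essentially no main obstacle here — the lemma is a one-line consequence of the good-event definition. The only thing worth being careful about is the bookkeeping between $T^{(t)}_{i,j}$ and $\max\{T^{(t)}_{i,j},1\}$ noted above, and making sure the good event is quantified over all rounds $t$ (which it is, as a union over $t\in[T]$), so that it applies at the specific round $t$ appearing in the statement. I would present the proof in three or four lines: recall containment on $\urcorner\gF$, then the displayed chain of inequalities, then conclude.
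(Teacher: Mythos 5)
Your proof is correct and follows essentially the same argument as the paper's: on $\urcorner\gF$ the true mean lies in its confidence interval, and chaining $\mU(i,j)\le UCB^{(t)}_{i,j} < LCB^{(t)}_{i,j'}\le \mU(i,j')$ gives the claim. Your remark on reconciling $\max\{T_{i,j},1\}$ with $T^{(t)}_{i,j}$ is a minor bookkeeping point the paper glosses over, but it does not change the argument.
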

\begin{proof}
    According to the definition of LCB and UCB, we have that conditional on $\urcorner \gF$, 
    \begin{align*}
        LCB^{(t)}_{i, j} = \hat{\mU}_{i, j}^{(t)} - \sqrt{\frac{6 \ln T}{T_{i, j}^{(t)}}} \leq \mU(i, j) \leq \hat{\mU}^{(t)}(i, j) + \sqrt{\frac{6 \ln T}{T_{i, j}^{(t)}}} = UCB_{i, j}^{(t)}.
    \end{align*}
    Therefore, if $UCB_{i, j}^{(t)} < LCB_{i, j'}^{(t)}$, we have that 
    \begin{align*}
        \mU(i, j) \leq UCB_{i, j}^{(t)} \leq LCB^{(t)}_{i, j'} \leq \mU(i, j').
    \end{align*}
\end{proof}

\begin{lem}\label{lem:explore_time}
    In round $t$, let $T_i^{(t)} = \min_{j \in [K]} T_{i, j}^{(t)}$. Conditional on $\urcorner \gF$, if $T_i^{(t)} > 96 \ln T / \Delta_{\min}^2$, we have $LCB_{i, \rho_{i, k}}^{(t)} > UCB_{i, \rho_{i, k + 1}}^{(t)}$ for any $k \in [N]$, and $LCB_{i, \rho_{i, N}}^{(t)} > UCB_{i, \rho{i, k}}^{(t)}$ for any $N + 1 \leq k \leq K$.
\end{lem}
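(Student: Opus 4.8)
The plan is a routine concentration argument on the good event $\urcorner\gF$, so that the statement reduces to a deterministic implication once we restrict to that event ($\rho_{i,k}$ below denotes the $k$-th ranked job of $w_i$, written $r_{i,k}$ in Definition~\ref{def:min_gap}). First I would convert the hypothesis into a uniform bound on the confidence half-widths: since $T_i^{(t)}=\min_{j\in[K]}T_{i,j}^{(t)}$, the assumption $T_i^{(t)}>96\ln T/\Delta_{\min}^2$ forces $T_{i,j}^{(t)}>96\ln T/\Delta_{\min}^2$ for every job $j$, hence $\sqrt{6\ln T/T_{i,j}^{(t)}}<\Delta_{\min}/4$. (Because $\Delta_{\min}\le 1$, we have $96\ln T/\Delta_{\min}^2\ge 96\ln T\ge 1$ for $T\ge 2$, so the $\max\{\cdot,1\}$ in the definitions of $UCB$ and $LCB$ plays no role.) Combining this with the defining inequality of $\urcorner\gF$, namely $|\hat{\mU}^{(t)}(i,j)-\mU(i,j)|\le\sqrt{6\ln T/T_{i,j}^{(t)}}$, the triangle inequality yields the ``doubled'' bounds $LCB_{i,j}^{(t)}>\mU(i,j)-\Delta_{\min}/2$ and $UCB_{i,j}^{(t)}<\mU(i,j)+\Delta_{\min}/2$ for every $j$.

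Next I would invoke the definition of $\Delta_{\min}$ to separate consecutive utilities. For $k\in[N]$ the gap between the $k$-th and $(k+1)$-th ranked jobs of $w_i$ satisfies $\mU(i,\rho_{i,k})-\mU(i,\rho_{i,k+1})\ge\Delta_{\min}$, so chaining the doubled bounds gives $LCB_{i,\rho_{i,k}}^{(t)}>\mU(i,\rho_{i,k})-\Delta_{\min}/2\ge\mU(i,\rho_{i,k+1})+\Delta_{\min}/2>UCB_{i,\rho_{i,k+1}}^{(t)}$, which is the first claim. For the second claim, fix $N+1\le k\le K$; since $\rho_{i,k}$ is ranked no higher than $\rho_{i,N+1}$ we have $\mU(i,\rho_{i,k})\le\mU(i,\rho_{i,N+1})$, and applying the gap bound with index $N$ gives $\mU(i,\rho_{i,N})-\mU(i,\rho_{i,k})\ge\mU(i,\rho_{i,N})-\mU(i,\rho_{i,N+1})\ge\Delta_{\min}$; the same chaining then yields $LCB_{i,\rho_{i,N}}^{(t)}>UCB_{i,\rho_{i,k}}^{(t)}$.

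The work here is bookkeeping rather than conceptual. The points to watch are: (i) keeping the factor of two straight, i.e.\ distinguishing the half-width $\sqrt{6\ln T/T_{i,j}^{(t)}}$ from the full interval width and from the extra factor contributed by the estimation error on $\urcorner\gF$; (ii) keeping every inequality strict, so that the strict hypothesis on $T_i^{(t)}$ propagates to a strict separation of the intervals; and (iii) checking that the index range $k\in[N]$ in Definition~\ref{def:min_gap} really does include the pair $(\rho_{i,N},\rho_{i,N+1})$, which is precisely what the second claim needs. If $\Delta_{\min}=0$ (a tie among the top $N+1$ jobs of some worker) the hypothesis is vacuous and there is nothing to prove in that degenerate case.
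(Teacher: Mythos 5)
Your proof is correct and is essentially the paper's argument in contrapositive form: the paper assumes some pair of confidence intervals overlaps and derives $\Delta_{i,j,j'}\le 4\sqrt{6\ln T/T_i^{(t)}}$, contradicting $T_i^{(t)}>96\ln T/\Delta_{\min}^2$, while you run the same chain of inequalities forward to separate the intervals directly; the constants, the use of $\urcorner\gF$, and the reliance on $\Delta_{\min}$ covering the rank-$(N,N+1)$ gap are identical.
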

\begin{proof}
    We prove it by contradiction, suppose that there exists $k \in [N]$ such that $LCB_{i, \rho_{i, k}}^{(t)} \leq UCB_{i, \rho_{i, k + 1}}^{(t)}$ or there exists $N + 1 \leq k \leq K$ such that $LCB_{i, \rho_{i, N}}^{(t)} \leq UCB_{i, \rho{i, k}}^{(t)}$. Without loss of generality, denote $j$ as the arm on the LHS and $j'$ as the arm on the RHS.

    Conditional on $\urcorner \gF$ and by the definition of $LCB$ and $UCB$, we have that
    \begin{align*}
        \mU(i, j) - 2 \sqrt{\frac{6 \ln T}{T_i^{(t)}}} \leq LCB_{i, j}^{(t)} \leq UCB_{i, j'}^{(t)} \leq \mU(i, j') + 2 \sqrt{\frac{6 \ln T}{T_{i}^{(t)}}}.
    \end{align*}
    Therefore, $\Delta_{i, j, j'} = \mU(i, j) - \mU_{i, j'} \leq 4 \sqrt{\frac{6 \ln T}{T_{i}^{(t)}}}$, which implies that $T_{i}^{(t)} \leq \frac{96 \ln T}{\Delta_{i, j, j'}^2} \leq \frac{96 \ln T}{\Delta_{\min}^2}$, which is a contradiction.
\end{proof}

\begin{lem}\label{lem:stop_time}
    Conditional on $\urcorner \gF$, if $\Delta_{\min} > \sqrt{\frac{96 K \ln T}{T_0}}$, Algorithm~\ref{alg:etco_full} would enter the exploitation phase and choose the Gale-Shapley oracle at some $t \leq T_0$.
\end{lem}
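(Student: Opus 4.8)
The plan is to show that, conditional on $\urcorner\gF$, the flag-setting test in Phase~1 of Algorithm~\ref{alg:etco_full} fires for every worker at some round-robin checkpoint occurring no later than round $T_0$; at that checkpoint all $F_i$ are set to True, the algorithm computes the worker-optimal stable matching $\hat\mu^*$ by Gale-Shapley, and enters Phase~2. \emph{Step 1 (bookkeeping on the round robin):} in Phase~1 each worker is matched to every job exactly once per block of $K$ consecutive rounds, so after $t_m$ completed blocks every pair $(i,j)$ has been sampled $t_m$ times, $T_{i,j}^{(t)} = t_m$, and in particular $T_i^{(t)} := \min_j T_{i,j}^{(t)} = t_m$; moreover the confidence radius $\sqrt{6\ln T/t_m}$ used in the flag test then equals the radius $\sqrt{6\ln T/T_{i,j}^{(t)}}$ appearing in $UCB_{i,j}^{(t)}$ and $LCB_{i,j}^{(t)}$.

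\emph{Step 2 (locating the decisive block):} put $m^\star := \lfloor 96\ln T/\Delta_{\min}^2\rfloor + 1$. Since $T_0$ is a multiple of $K$ after the rounding $T_0 \gets K\lfloor T_0/K\rfloor$ in the initialization, and the hypothesis gives $96\ln T/\Delta_{\min}^2 < T_0/K$, we get $m^\star \le T_0/K$, so the $m^\star$-th checkpoint occurs at a round at most $T_0$, with $T_i^{(t)} = m^\star > 96\ln T/\Delta_{\min}^2$ there. Unless all flags were already set at an earlier checkpoint — in which case the conclusion already holds — the while loop reaches this checkpoint, and, conditional on $\urcorner\gF$, Lemma~\ref{lem:explore_time} applies: for every $i$ the confidence intervals of worker $i$'s true top-$(N{+}1)$ jobs are pairwise disjoint and correctly ordered, i.e.\ $LCB_{i,\rho_{i,k}}^{(t)} > UCB_{i,\rho_{i,k+1}}^{(t)}$ for $k\in[N]$ and $LCB_{i,\rho_{i,N}}^{(t)} > UCB_{i,\rho_{i,k}}^{(t)}$ for $N{+}1\le k\le K$, where $\rho_{i,k}$ denotes worker $i$'s $k$-th most preferred job.

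\emph{Step 3 (disjoint CIs trigger the flag test):} write $r = \sqrt{6\ln T/m^\star}$. Unpacking $LCB > UCB$ gives $\hat\mU^{(t)}(i,\rho_{i,k}) - \hat\mU^{(t)}(i,\rho_{i,k+1}) > 2r$ for $k\in[N-1]$, which forces the decreasing sort of $\hat\mU^{(t)}(i,\cdot)$ to place $\rho_{i,1},\dots,\rho_{i,N}$ in the top $N$ positions in their true order; and $\hat\mU^{(t)}(i,\rho_{i,N}) - \hat\mU^{(t)}(i,\rho_{i,k}) > 2r$ for all $k\ge N{+}1$, so $\rho_{i,N}$'s empirical mean beats the empirical mean of every lower-ranked job, hence beats their maximum — which is precisely the $(N{+}1)$-st sorted value — by more than $2r$. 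Therefore every consecutive gap among the first $N{+}1$ sorted utilities exceeds $2r$, i.e.\ $\Delta_{i,\min} > 2\sqrt{6\ln T/m^\star}$, so the test fires, $F_i\gets$ True for every $i$, and the algorithm switches to Gale-Shapley exploitation at a round $\le T_0$.

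The main obstacle is the bookkeeping rather than any conceptual difficulty: one must check that the sample counts are exactly $t_m$ at block boundaries so that the radii in the flag test and in the confidence bounds line up, handle the floor in $m^\star$ together with the rounding $T_0 \gets K\lfloor T_0/K\rfloor$ so that the decisive checkpoint genuinely happens within $T_0$ rounds, and treat the $(N{+}1)$-st slot of the empirical sort — which Lemma~\ref{lem:explore_time} controls only through the pointwise comparisons $LCB_{i,\rho_{i,N}} > UCB_{i,\rho_{i,k}}$, $k\ge N{+}1$, rather than a single consecutive-interval disjointness.
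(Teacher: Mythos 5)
Your proof is correct and follows essentially the same route as the paper's: the hypothesis plus the round-robin schedule guarantee $T_i^{(t)} > 96\ln T/\Delta_{\min}^2$ at a checkpoint occurring no later than round $T_0$, at which point Lemma~\ref{lem:explore_time} (conditional on $\urcorner\gF$) yields the disjoint confidence intervals that set every flag $F_i$ and trigger the switch to the Gale--Shapley oracle. Your Step~3, which explicitly converts the pairwise CI disjointness into the empirical sorted-gap condition $\Delta_{i,\min} > 2\sqrt{6\ln T/t_m}$ actually tested in Line~\ref{alg_lin:set_flag}, spells out a detail the paper's proof leaves implicit.
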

\begin{proof}
    If $\Delta_{\min} > \sqrt{\frac{96 K \ln T}{T_0}}$, we have $T_0 > \frac{96 K \ln T}{\Delta_{\min}^2}$. Since for every worker, Algorithm~\ref{alg:etco_full} allocates jobs in a round-robin fashion, we have that $T_{i}^{(t)} > \frac{96 \ln T}{\Delta_{\min}^2}$.
    
    By Lemma~\ref{lem:explore_time}, we know that for any worker $w_i$, $LCB_{i, \rho_{i, k}}^{(t)} > UCB_{i, \rho_{i, k + 1}}^{(t)}$ for any $k \in [N]$, and $LCB_{i, \rho_{i, N}}^{(t)} > UCB_{i, \rho{i, k}}^{(t)}$ for any $N + 1 \leq k \leq K$, i.e., the preference utility for the first $N$-ranked jobs for every worker has been estimated well enough with the confidence intervals disjoint. The flag $F_i$ would be set as $\mathrm{True}$ as in Line~\ref{alg_lin:set_flag} in Algorithm~\ref{alg:etco_full} and we would enter Phase 2 at some time $t \leq T_0$.
\end{proof}

\begin{lem}\label{lem:gs_oracle}
    Given a utility matrix $\mU_{N \times K}$ without ties, the worker-optimal stable matching job of each worker must be its first $N$-ranked.
\end{lem}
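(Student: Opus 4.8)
The plan is a short pigeonhole argument, and it in fact establishes the slightly stronger fact that in \emph{every} stable matching (not only the worker-optimal one) each worker receives one of her top-$N$ jobs. Fix a worker-optimal stable matching $\mu^*$ and a worker $w_i$. If $w_i$ is unmatched in $\mu^*$ there is nothing to prove, so assume $w_i$ is matched; then $\mU(w_i,\mu^*(w_i))>0$, since a worker refuses any job of utility $0$. Suppose, toward a contradiction, that $\mu^*(w_i)$ is ranked strictly below position $N$ in $w_i$'s (strict, tie-free) preference order, and let $S$ be the set of jobs that $w_i$ strictly prefers to $\mu^*(w_i)$. By assumption $|S|\ge N$, and every $a\in S$ satisfies $\mU(w_i,a)>\mU(w_i,\mu^*(w_i))>0$, so $w_i$ is willing to be matched with any job in $S$.

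First I would show that every job in $S$ is matched under $\mu^*$: if some $a\in S$ were unmatched, then $w_i$ strictly prefers $a$ to $\mu^*(w_i)$ and $a$ prefers being matched to any worker over remaining unmatched, so $(w_i,a)$ would be a blocking pair, contradicting weak stability of $\mu^*$. Next I would count. The assignment $a\mapsto\mu^*(a)$ is injective on $S$ (a matching sends distinct jobs to distinct workers) and never outputs $w_i$, because $\mu^*(w_i)\notin S$. Hence $\{\mu^*(a):a\in S\}$ is a subset of $\gW\setminus\{w_i\}$ of cardinality $|S|\ge N$, whereas $|\gW\setminus\{w_i\}|=N-1$ --- a contradiction. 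Therefore $\mu^*(w_i)$ lies among $w_i$'s first $N$ jobs, and since $w_i$ was arbitrary the claim follows.

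The argument is essentially immediate, so there is no real obstacle; the only point requiring care is the blocking-pair step for an unmatched job, which relies on the modelling convention that jobs are acceptable to every worker and prefer being matched (to any worker) over remaining unmatched, and on $w_i$ being matched (vacuous otherwise). Note that the no-ties hypothesis is used only so that ``first $N$-ranked'' is well defined, and that worker-optimality of $\mu^*$ is not actually needed --- stability alone suffices --- so one could equivalently state the lemma for an arbitrary stable matching.
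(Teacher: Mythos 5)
Your proof is correct, but it takes a genuinely different route from the paper's. The paper argues via the dynamics of the deferred-acceptance procedure: it observes that once $N$ distinct jobs have received proposals, $N$ distinct workers are tentatively held, so Gale--Shapley terminates before any worker ever proposes to a job ranked below position $N$. Your argument instead works directly from the definition of weak stability: if a matched worker's assignment sat below rank $N$, the $\ge N$ strictly preferred jobs would all have to be matched (else one of them forms a blocking pair with that worker), to distinct workers other than herself, overflowing the $N-1$ available workers. Your version buys two things the paper's does not: it applies to \emph{every} stable matching rather than only to the worker-optimal one produced by Gale--Shapley, and it makes explicit the modelling conventions being used (a matched worker has strictly positive utility, and a job prefers any acceptable worker to remaining unmatched) --- conventions the paper's proof also relies on implicitly when it assumes every proposed-to job ``has a temporary worker'' and that all workers end up allocated. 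The paper's version, in exchange, is the natural one in context, since the lemma is invoked precisely to justify that the Gale--Shapley oracle in the exploitation phase only needs accurate estimates of each worker's top-$(N+1)$ jobs. Either proof is acceptable; yours is the cleaner and more general statement.
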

\begin{proof}
    We implement the Gale-Shapley algorithm with the workers as the proposing side. Once a job is proposed, it has a temporary worker. By contradiction, once $N$ jobs have been proposed, we have $N$ workers occupied. Therefore, each worker would be allocated with a job and the Gale-Shapley algorithm would stop. Since in the deferred-acceptance procedure, workers propose to jobs one by one according to their preference list, then the worker-optimal stable matching job of each worker must be its first $N$-ranked.
\end{proof}

\begin{proof}[Proof of Theorem~\ref{thm:bandit_upper_bound}]
    We consider the two cases separately.
    
    \textbf{Case 1.} $\Delta_{\min} > \sqrt{\frac{96 K \ln T}{T_0}}$.

    Let $\Delta_{i, \max} = \max_{j \in [K]} \left[\oss{w_i} - \mU(i, j)\right]$ be the maximum worker-optimal stable regret that may be suffered by $w_i$ in all rounds, we have $\Delta_{i, \max} \leq 1$.
    The worker-optimal stable regret for each worker $w_i$ by following Algorithm~\ref{alg:etco_full} satisfies
    \begin{align}
        Reg_i(T) &= \mathbb{E} \left[\sum_{t = 1}^{T} \left(\oss{w_i} - X_i(t)\right)\right] \nonumber \\
        &\leq \mathbb{E} \left[\sum_{t = 1}^T \ind{\mu_t(i) \neq \mu^*(i)} \cdot \Delta_{i, \max}\right] \label{eq:unique_optimal_match}\\
        &\leq \mathbb{E} \left[\sum_{t = 1}^T \ind{\mu_t(i) \neq \mu^*(i)} \mid \urcorner \gF\right]  \cdot \Delta_{i, \max} + \sP(\gF) \cdot T \cdot \Delta_{i, \max} \nonumber \\
        &\leq \mathbb{E} \left[\sum_{t = 1}^T \ind{\mu_t(i) \neq \mu^*(i)} \mid \urcorner \gF\right]  \cdot \Delta_{i, max} + 2NK \Delta_{i, \max} \label{eq:bad_event_prob}\\
        &\leq \bigg\lceil \frac{96K \ln T}{\Delta_{\min}^2} \bigg\rceil \cdot \Delta_{i, max} + 2NK \Delta_{i, \max} \label{eq:stop_time}\\
        &= O\left(\frac{K \ln T}{\Delta_{\min}^2}\right), \nonumber
    \end{align}
    where Eq.(\ref{eq:unique_optimal_match}) comes from the fact that in a matching market without ties, there is a unique worker-optimal stable matching and hence a unique optimal stable match $\mu^*(i)$ for worker $i$, Eq.(\ref{eq:bad_event_prob}) holds based on Lemma~\ref{lem:bad_event_prob}, Eq.(\ref{eq:stop_time}) holds according to Lemma~\ref{lem:stop_time} and $\ref{lem:gs_oracle}$ and the fact that Gale-Shapley algorithm could always output the worker-optimal stable matching with respect to the given utility matrix by treating worker as the proposing side.

    \textbf{Case 2.} $\Delta_{\min} \leq \sqrt{\frac{96 K \ln T}{T_0}}$.
    
    The objective function is the approximation regret $Reg_{i}^{\alpha}(T)$. Denote $\gF_{d}^{(t)}$ as the event that $LCB_{i, \rho_{i, k}}^{(t)} > UCB_{i, \rho_{i, k + 1}}^{(t)}$ for all $k \in [N]$, and $LCB_{i, \rho_{i, N}}^{(t)} > UCB_{i, \rho{i, k}}^{(t)}$ for all $N + 1 \leq k \leq K$. We have 
    \begin{align}
        Reg_i^{\alpha}(T) &= \mathbb{E} \left[\alpha T \cdot \oss{w_i} - \sum_{t = 1}^T X_i(t) \mid \gF\right] \cdot \sP(\gF) \nonumber\\
        &\quad + \mathbb{E} \left[\alpha T \cdot \oss{w_i} - \sum_{t = 1}^T X_i(t) \mid \urcorner\gF\right] \cdot \sP(\urcorner\gF) \nonumber\\
        &\leq \alpha T \cdot \sP(\gF) + \mathbb{E} \left[\alpha T \cdot \oss{w_i} - \sum_{t = 1}^T X_i(t) \mid \urcorner\gF\right] \label{eq:approx_1}\\
        &\leq 2 \alpha NK + \mathbb{E}\left[\alpha T \cdot \oss{w_i} - \sum_{t = 1}^T X_i(t) \mid \urcorner\gF\right] \label{eq:approx_bad_event}\\
        &\leq 2 \alpha NK + \mathbb{E} \left[\left(\alpha T \cdot \oss{w_i} - \sum_{t = 1}^T X_i(t)\right) \ind{\gF_{d}^{(T_0)}}\mid \urcorner \gF\right] \nonumber\\
        &\quad + \mathbb{E} \left[\left(\alpha T \cdot \oss{w_i} - \sum_{t = 1}^T X_i(t) \right) \ind{\urcorner\gF_{d}^{(T_0)}} \mid \urcorner \gF\right] \nonumber \\
        &\leq 2 \alpha NK + \alpha T_0 + \mathbb{E} \left[\left(\alpha T \cdot \oss{w_i} - \sum_{t = 1}^T X_i(t) \right) \ind{\urcorner\gF_{d}^{(T_0)}} \mid \urcorner \gF\right], \label{eq:good_event_separate}
    \end{align}
    where Eq.(\ref{eq:approx_1}) comes from the fact that $\oss{w_i} \leq 1$ and $X_{i}(t) \geq 0$. 
    Eq.(\ref{eq:approx_bad_event}) holds according to Lemma~\ref{lem:bad_event_prob}. Eq.(\ref{eq:good_event_separate}) comes from the fact that when the good event $\urcorner \gF$ that all utilities are well estimated and the top $(N + 1)$-ranked CIs are disjoint before $T_0$, the Gale-Shapley algorithm would give us the OSS in the exploitation phase, and since $\alpha \in (0, 1]$, the approximation regret would be no larger than $\alpha T_0 + (\alpha - 1) \cdot (T - T_0) \leq \alpha T_0$.
    
    Moreover, conditional on $\urcorner \gF$, we have the ground-truth utility matrix $\mU$ lies in the uncertainty set constructed by the empirical mean utility matrix $\hat{\mU}^{(T_0)}$ and the $UCB^{(T_0)}$ and $LCB^{(T_0)}$, i.e., for any $(i, j) \in [N] \times [K]$, $\lvert\hat{\mU}^{(T_0)}(i, j) - \mU(i, j)\rvert \leq \sqrt{\frac{6 K\ln T}{T_0}}$. 
    If we implement an $(\boldsymbol{\alpha}, \epsilon)$-oracle, with $\epsilon$ being $2 \sqrt{\frac{6K\ln T}{T_0}}$, follow a similar proof as that for Theorem~\ref{thm:batch_learn}, in each round $t$ in the exploitation phase, we have that 
    \begin{align}\label{eq:exploit_regret_clean_event}
        \alpha \oss{w_i} - \mathbb{E} X_i(t) \leq 2 \sqrt{\frac{6 K \ln T}{T_0}}.
    \end{align}

    Since there are in total $T - T_0$ rounds of exploitation, we have that 
    \begin{align}\label{eq:regret_clean_event}
        \mathbb{E} \left[\left(\alpha T \cdot \oss{w_i} - \sum_{t = 1}^T X_i(t) \right) \ind{\urcorner\gF_{d}^{(T_0)}} \mid \urcorner \gF\right] \leq \alpha T_0 + 2 \sqrt{\frac{6 K \ln T}{T_0}} (T - T_0).
    \end{align}

    Therefore, combining Eq.(\ref{eq:good_event_separate}) and (\ref{eq:regret_clean_event}), we have 
    \begin{align*}
        Reg_i^{\alpha}(T) \leq 2\alpha NK + 2\alpha T_0 + 2 \sqrt{\frac{6 K \ln T}{T_0}} (T - T_0).
    \end{align*}
\end{proof}

    \section{Proof of Theorem~\ref{thm:bandit_lower_bound}}\label{sec:proof_bandit_lower_bound}
    \begin{proof}
    Let $\gW = \left\{w_1, w_2, w_3, w_4\right\}$ and $\gA = \left\{a_1, a_2, a_3, a_4\right\}$ and $w_1 \succ w_2 \succ w_3 \succ w_4$ for all the jobs. Throughout the proof, we assume that all observations are Gaussian of unit variance, that is, when matching $w_i$ to $a_j$ at round $t$, we observe $X_i(t)\sim \gN(\mU(i, j), 1)$. 
    Consider two instances $\boldsymbol{\nu}$ and $\boldsymbol{\nu'}$ with the following mean utility matrices $\mU$ and $\mU'$, respectively.
    \begin{align*}
        \mU = \begin{bmatrix}
			\frac{1}{2} & \frac{1}{2} & 0 & 0 \\
            \frac{1}{2} & 0 & \frac{1}{2} & 0 \\
            \frac{1}{2} & 0 & 0 & \frac{1}{4} \\
            0 & 0 & \frac{1}{2} & 0
		\end{bmatrix}, 
        \quad
        \mU' = \begin{bmatrix}
			\frac{1}{2} + \gamma & \frac{1}{2} & 0 & 0 \\
            \frac{1}{2} & 0 & \frac{1}{2} & 0 \\
            \frac{1}{2} & 0 & 0 & \frac{1}{4} \\
            0 & 0 & \frac{1}{2} & 0
		\end{bmatrix},
    \end{align*}
    where $\gamma < \frac{1}{4}$.

    \begin{lem}[Properties of Instances $\boldsymbol{\nu}$ and $\boldsymbol{\nu'}$]\label{lem:instance_property}
        Based on the utility matrices $\mU$ and $\mU'$, we have the following properties of $\boldsymbol{\nu}$ and $\boldsymbol{\nu'}$:
        \begin{enumerate}
            \item Under $\boldsymbol{\nu}$, the optimal stable shares are $\oss{w} = \frac{1}{2}, \forall w \in \gW$; Under $\boldsymbol{\nu'}$, the optimal stable shares are $(\mU')^*(w_1) = \frac{1}{2} + \gamma$, $(\mU')^*(w_2) = \frac{1}{2}$, $(\mU')^*(w_3) = \frac{1}{4}$, and $(\mU')^*(w_4) = 0$.

            \item The relevant utility gaps for the two instances are $\Delta_{\rel}^{\boldsymbol{\nu}} = 0$, and $\Delta_{\rel}^{\boldsymbol{\nu'}} = \gamma$.

            \item Given an offline oracle that could compute the best approximation ratio, the benchmark utilities for the four workers (after multiplying the approximation ratio) are $(\frac{1}{2}, \frac{3}{8}, \frac{3}{8}, \frac{3}{8})$ under $\boldsymbol{\nu}$ and $(\frac{1}{2} + \gamma, \frac{1}{2}, \frac{1}{4}, 0)$ under $\boldsymbol{\nu'}$.
        \end{enumerate}
    \end{lem}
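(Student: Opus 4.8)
The plan is to exploit the serial-dictatorship structure: since all jobs share the ranking $w_1\succ w_2\succ w_3\succ w_4$, a matching is weakly stable if and only if it arises from the greedy process in which $w_1$ first takes any job maximizing $\mU(w_1,\cdot)$, then $w_2$ takes any job maximizing $\mU(w_2,\cdot)$ among those still free, and so on (a worker whose best remaining utility is $0$ stays unmatched). I would state and briefly justify this fact, then instantiate it. For \emph{Property~1} with $\mU$: worker $w_1$ is indifferent between $a_1$ and $a_2$, so there are two greedy branches, and tracing the forced choices of $w_2,w_3,w_4$ in each gives exactly $\mu_A=\{(w_1,a_1),(w_2,a_3),(w_3,a_4)\}$, $\mu_{B1}=\{(w_1,a_2),(w_2,a_1),(w_3,a_4),(w_4,a_3)\}$, $\mu_{B2}=\{(w_1,a_2),(w_2,a_3),(w_3,a_1)\}$, with utility profiles $(\tfrac12,\tfrac12,\tfrac14,0)$, $(\tfrac12,\tfrac12,\tfrac14,\tfrac12)$, $(\tfrac12,\tfrac12,\tfrac12,0)$; the coordinatewise maximum is $\oss{w}=\tfrac12$ for all $w$. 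With $\mU'$ the only change is that $w_1$ strictly prefers $a_1$, so the greedy process is deterministic and yields the unique stable matching $\mu^{\star}=\{(w_1,a_1),(w_2,a_3),(w_3,a_4)\}$ with profile $(\tfrac12+\gamma,\tfrac12,\tfrac14,0)$, which is the claimed OSS vector.

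For \emph{Property~2} I would compute $\gS_{opt}$ in each instance. Under $\boldsymbol{\nu}$: $\mu_A$ is Pareto-dominated by $\mu_{B1}$, whereas $\mu_{B1}$ and $\mu_{B2}$ are Pareto-incomparable (they trade $w_3$'s utility against $w_4$'s), so $\gS_{opt}^{\mU}=\{\mu_{B1},\mu_{B2}\}$. Having two Pareto-optimal stable matchings forces $\Delta_{\rel}^{\boldsymbol{\nu}}=0$, which one also sees concretely: raising $\mU(w_1,a_1)$ by any $\varepsilon>0$ makes $w_1\to a_1$ strict and collapses $\gS_{opt}$ to $\{\mu_A\}$. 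Under $\boldsymbol{\nu'}$: $\gS_{opt}^{\mU'}=\{\mu^{\star}\}$ is a singleton; lowering $\mU'(w_1,a_1)$ by $\gamma$ restores $\mU$ and changes $\gS_{opt}$, so $\Delta_{\rel}^{\boldsymbol{\nu'}}\le\gamma$; conversely a short case analysis over the entries shows that perturbing any single entry by strictly less than $\gamma$ cannot alter the set of stable matchings — the only comparison of gap $\le\gamma$ on which the greedy process depends is $w_1$'s strict preference of $a_1$ over $a_2$, whose gap is exactly $\gamma$, while every other relevant gap is $\ge\tfrac14>\gamma$ — so $\gS_{opt}$ is unchanged and $\Delta_{\rel}^{\boldsymbol{\nu'}}=\gamma$.

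For \emph{Property~3} the crux is evaluating $R_{\gM}^{\mU}$. Since $w_1$ can be assigned $a_1$ or $a_2$ in every useful matching, I would reduce to maximizing $\min\{\mU_D(w_2),\mU_D(w_3),\mU_D(w_4)\}$ over $D\in\Delta(\gM)$, and identify the Pareto frontier of achievable $(w_2,w_3,w_4)$ profiles as $\{(\tfrac12,\tfrac12,0),(0,\tfrac12,\tfrac12),(\tfrac12,\tfrac14,\tfrac12)\}$, realized by $\mu_{B2}$, by the non-stable (but internally stable) matching $\{(w_1,a_2),(w_3,a_1),(w_4,a_3)\}$, and by $\mu_{B1}$ respectively. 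The resulting $3$-point LP is optimized by the mixture with weights $(\tfrac14,\tfrac14,\tfrac12)$, giving profile $(\tfrac12,\tfrac38,\tfrac38,\tfrac38)$, so $\max_D\min_w\mU_D(w)=\tfrac38$ and $R_{\gM}^{\mU}=\frac{1/2}{3/8}=\frac43$, i.e. $1/R_{\gM}^{\mU}=\tfrac34$. Since this optimum is tight, any $R_{\gM}$-fair distribution (one with $\mU_D(w)\ge\tfrac38$ for all $w$) must satisfy $\mU_D(w_i)=\tfrac38$ for $i=2,3,4$, while $w_1$ can still be given $\tfrac12$; hence $\boldsymbol{\alpha}^{*}=(1,\tfrac34,\tfrac34,\tfrac34)$ and the benchmark vector $\big(\boldsymbol{\alpha}^{*}(w_i)\cdot\oss{w_i}\big)_i=(\tfrac12,\tfrac38,\tfrac38,\tfrac38)$. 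For $\boldsymbol{\nu'}$, the deterministic matching $\mu^{\star}$ attains every worker's OSS, so $R_{\gM}^{\mU'}=1$, which forces $\boldsymbol{\alpha}^{*}\equiv1$ and benchmark $(\tfrac12+\gamma,\tfrac12,\tfrac14,0)$.

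The step I expect to be the main obstacle is this last one. It is easy to overlook the non-stable matching $\{(w_1,a_2),(w_3,a_1),(w_4,a_3)\}$ when listing candidates, yet it is exactly what lowers $R_{\gM}^{\mU}$ from $\tfrac32$ (the value obtained using only stable matchings) to $\tfrac43$; one then has to argue that the stated Pareto frontier is complete and that the feasible-and-fair set collapses, in the $(w_2,w_3,w_4)$ coordinates, to the single point $(\tfrac38,\tfrac38,\tfrac38)$, so that the benchmark — and hence the approximation regret the lower bound is stated against — is unambiguously defined.
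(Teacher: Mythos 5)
Your proposal is correct and follows essentially the same route as the paper: enumerate the stable matchings via the serial-dictatorship structure to get the OSS vectors, read off $\Delta_{\rel}$ from the multiplicity (resp.\ uniqueness) of Pareto-optimal stable matchings and the single relevant gap $\gamma$, and compute the benchmark under $\boldsymbol{\nu}$ by mixing the same three matchings (including the non-stable one that leaves $w_2$ without a useful job) with weights $\tfrac12,\tfrac14,\tfrac14$ to equalize $w_2,w_3,w_4$ at $\tfrac38$. Your explicit identification of the non-stable matching's role (dropping the ratio from $\tfrac32$ to $\tfrac43$) and of the Pareto frontier is a slightly more careful rendering of the optimality step that the paper states more informally, but it is the same argument.
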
    

    We provide the proof of Lemma~\ref{lem:instance_property} in Appendix~\ref{sec:proof_lemma_lower_bound}.
    
    For a worker $w_i, i \in [N]$, job $a_j, j \in [K]$ and time slot $t \in [T]$, denote $N_{ij}(t) \in \sN \cup \left\{0\right\}$ as the number of times worker $w_i$ is matched to job $a_j$, up to and including time $t$, 
    and denote the past information as $I_t := (\mu_1, \boldsymbol{X}(1), \mu_2, \boldsymbol{X}(2), \cdots, \mu_{t - 1}, \boldsymbol{X}(t - 1))$, where $\boldsymbol{X}(t) = (X_1(t), X_2(t), \cdots, X_N(t))$ is the realized reward vector for all $N$ workers in round $t$.
    Finally, let $\sP_{\boldsymbol{\nu}, \pi}$ be the joint probability measure over the history 
    and $\mathbb{E}_{\boldsymbol{\nu}, \pi}$ be the expectation induced by instance $\boldsymbol{\nu}$ and policy $\pi$, and $\sP_{\boldsymbol{\nu'}, \pi}$, $\mathbb{E}_{\boldsymbol{\nu'}, \pi}$ be defined similarly. By divergence decomposition theorem~\citep[][restated in Lemma~\ref{lem:divergence_decomp}]{lattimore2020bandit}, 
    we have that 
    \begin{align*}
        \KL\left(\sP_{\boldsymbol{\nu}, \pi}, \sP_{\boldsymbol{\nu'}, \pi}\right) = \sum_{i = 1}^{N} \sum_{j = 1}^{K} \mathbb{E}_{\boldsymbol{\nu}, \pi} N_{ij}(T) \cdot \KL\left(\boldsymbol{\nu}_{ij}, \boldsymbol{\nu'}_{ij}\right),
    \end{align*}
    where $\boldsymbol{\nu}_{ij}$ is the distribution of utilities obtained when worker $w_i$ is matched to job $a_j$ in the environment $\boldsymbol{\nu}$.

    Since the only change in utility distribution happens in $(w_1, a_1)$ pair, we have that 
    \begin{align}\label{eq:divergence_decomp}
        \KL\left(\sP_{\boldsymbol{\nu}, \pi}, \sP_{\boldsymbol{\nu'}, \pi}\right) = \KL(\boldsymbol{\nu}_{11}, \boldsymbol{\nu'}_{11}) \mathbb{E}_{\boldsymbol{\nu}, \pi} \left[N_{11}(T)\right] = \mathbb{E}_{\boldsymbol{\nu}, \pi} \left[N_{11}(T)\right] \cdot \frac{\gamma^2}{2},
    \end{align}
    where the second equality comes from the fact that for two Gaussian distributions with means $\frac{1}{2}$ and $\frac{1}{2} + \gamma$ and variance $1$, the KL divergence is $\frac{\gamma^2}{2}$.

    By data-processing inequality \citep[][restated in Lemma~\ref{lem:data_process_ineq}]{lattimore2020bandit}, we know that for all $\sigma(I_T)$-measurable random variable $Z \in [0, 1]$, we have that 
    \begin{align}\label{eq:data_process}
        \KL\left(\sP_{\boldsymbol{\nu}, \pi}, \sP_{\boldsymbol{\nu'}, \pi}\right) \geq \kl \left(\mathbb{E}_{\boldsymbol{\nu}, \pi}(Z), \mathbb{E}_{\boldsymbol{\nu'}, \pi}(Z)\right).
    \end{align}
     where $\kl$ denotes the KL divergence between two Bernoulli distributions, i.e., $\forall p, q \in [0, 1]^2, \kl(p, q) = p\ln \frac{p}{q} + (1 - p)\ln \frac{1 - p}{1 - q}$.

    Let $Z = \frac{N_{21}(T) + N_{23}(T)}{T}$, then $Z \in [0, 1]$, by Pinsker's inequality, we have that 
    \begin{align}\label{eq:pinsker}
        \kl \left(\mathbb{E}_{\boldsymbol{\nu}, \pi}(Z), \mathbb{E}_{\boldsymbol{\nu'}, \pi}(Z)\right) \geq 2 \cdot \left[\mathbb{E}_{\boldsymbol{\nu}, \pi} (Z) - \mathbb{E}_{\boldsymbol{\nu'}, \pi} (Z)\right]^2.
    \end{align}    
    Combining Eq.(\ref{eq:divergence_decomp}), (\ref{eq:data_process}), (\ref{eq:pinsker}), we have that 
    \begin{align} \label{eq:count to Z final}
        \mathbb{E}_{\boldsymbol{\nu}, \pi} \left[N_{11}(T)\right] \cdot \frac{\gamma^2}{2} \geq 2 \cdot \left[\mathbb{E}_{\boldsymbol{\nu}, \pi} (Z) - \mathbb{E}_{\boldsymbol{\nu'}, \pi} (Z)\right]^2.
    \end{align}
    We now divide into two cases, depending on the asymptotic number of matches $\mathbb{E}_{\boldsymbol{\nu}, \pi} \brs*{N_{11}(T)}$.

    \paragraph{Case I:} $\liminf_{T\to\infty} \frac{\mathbb{E}_{\boldsymbol{\nu}, \pi} \brs*{N_{11}(T)}}{T^{1 - 2\delta}} =0$. We assume that both $Reg_i(T; \boldsymbol{\nu'})$ and $Reg_i^{\boldsymbol{\alpha}^*(w_i)}(T; \boldsymbol{\nu})$ are sublinear for all workers and show that we have a contradiction.

    Since $\gamma = cT^{-\frac{1}{2} + \delta}$, by Eq.~(\ref{eq:count to Z final}), we have
    \begin{align}\label{eq:p2_num}
       \liminf_{T \to \infty} \frac{\abs*{\mathbb{E}_{\boldsymbol{\nu}, \pi} \brs*{N_{21}(T) + N_{23}(T)} - \mathbb{E}_{\boldsymbol{\nu'}, \pi} \brs*{N_{21}(T) + N_{23}(T)}}}{T} = 0.
   \end{align}

   In $\boldsymbol{\nu'}$, if the ground-truth utility matrix $\mU'$ is known and we would like to achieve the benchmark utility for $w_2$, we need $N_{21}(T) + N_{23}(T) = T$, since worker $w_2$ could only get positive utilities from jobs $a_1$ and $a_3$ and her benchmark utility is $\frac{1}{2}$. In particular, the regret for this worker is 
   $$Reg_i(T; \boldsymbol{\nu}') = \frac{1}{2}\br*{T - \mathbb{E}_{\boldsymbol{\nu'}, \pi}\brs*{N_{21}(T) + N_{23}(T)}},$$ 
   and to guarantee sublinear regret for $w_2$ for any large enough $T$, we must have
   \begin{align*}
       \liminf_{T \to \infty} \frac{\mathbb{E}_{\boldsymbol{\nu'}, \pi} \left[N_{21}(T) + N_{23}(T)\right]}{T} = 1.
   \end{align*}
   Therefore, to satisfy Eq.(\ref{eq:p2_num}), we must also have 
   \begin{align}\label{eq:n21_n23}
       \liminf_{T \to \infty} \frac{\mathbb{E}_{\boldsymbol{\nu}, \pi} \left[N_{21}(T) + N_{23}(T)\right]}{T} = 1.
   \end{align}
    On the other hand, since $w_4$ only gets positive utilities in $\mU(4, 3)$, to achieve the benchmark utility in $\boldsymbol{\nu}$, we need $N_{43}(T) = \frac{3}{4}T$. Therefore, to guarantee sublinear approximation regret for $w_4$, we must have
    \begin{align*}
        \liminf_{T \to \infty} \frac{\mathbb{E}_{\boldsymbol{\nu}, \pi} \left[N_{43}(T)\right]}{T} \geq \frac{3}{4},
    \end{align*}
    which implies $\limsup_{T \to \infty} \frac{\mathbb{E}_{\boldsymbol{\nu}, \pi}\left[N_{23}(T)\right]}{T} \leq \frac{1}{4}$, since the total number of times that jobs $a_3$ being allocated cannot be more than the horizon $T$. Therefore, $\liminf_{T \to \infty} \frac{\mathbb{E}_{\boldsymbol{\nu}, \pi} \left[N_{21}(T)\right]}{T} \geq \frac{3}{4}$ according to Eq.(\ref{eq:n21_n23}). Using again the fact that a job can be allocated no more than $T$ times, we get 
    \begin{align}
        \limsup_{T \to \infty} \frac{\mathbb{E}_{\boldsymbol{\nu}, \pi} \left[N_{31}(T)\right]}{T} \leq \frac{1}{4}. \label{eq: bound w3 plays}
    \end{align}
   Finally, we write the regret of $w_3$ as 
   \begin{align*}
       Reg_3^{\alpha}(T) 
       &= \frac{3T}{8} - \frac{1}{2}\mathbb{E}_{\boldsymbol{\nu}, \pi} \brs*{N_{31}(T)} - \frac{1}{4}\mathbb{E}_{\boldsymbol{\nu}, \pi} \brs*{N_{34}(T)} \\
       &= \frac{3T}{8} - \frac{1}{4}\mathbb{E}_{\boldsymbol{\nu}, \pi} \brs*{N_{31}(T)} - \frac{1}{4}\br*{\mathbb{E}_{\boldsymbol{\nu}, \pi} \brs*{N_{34}(T)} +\mathbb{E}_{\boldsymbol{\nu}, \pi} \brs*{N_{31}(T)}} \\
       &\geq \frac{T}{8} - \frac{1}{4}\mathbb{E}_{\boldsymbol{\nu}, \pi} \brs*{N_{31}(T)},
   \end{align*}
   where the inequality is since $w_3$ is matched at most $T$ times, namely, $\mathbb{E}_{\boldsymbol{\nu}, \pi} \brs*{N_{34}(T)} +\mathbb{E}_{\boldsymbol{\nu}, \pi} \brs*{N_{31}(T)}\leq T$. Combining with Eq.~(\ref{eq: bound w3 plays}), we have $\liminf_{T \to \infty} \frac{Reg_{3}^{\alpha}(T; \boldsymbol{\nu}, \pi)}{T} \geq \frac{1}{8}  - \frac{1}{4} \cdot \frac{1}{4} = \frac{1}{16}$,
   which implies worker $w_3$ suffers linear approximation regret in $\boldsymbol{\nu}$. 

   Thus, to summarize, assuming that all workers in both problem exhibit sublinear regret for this case leads to a contradiction.

   \paragraph{Case II:} $\liminf_{T\to\infty} \frac{\mathbb{E}_{\boldsymbol{\nu}, \pi} \brs*{N_{11}(T)}}{T^{1 - 2\delta}} >0$. Our goal is to prove a lower bound on the regret in $\boldsymbol{\nu}$ for some worker.

    For a fixed $T$, denote for brevity $N=\mathbb{E}_{\boldsymbol{\nu}, \pi} \brs*{N_{11}(T)}$, and assume with contradiction that all workers $w_2,w_3,w_4$ suffer a regret smaller than $N/32$. Denote the cumulative allocation given by the algorithm by $D\in [0,T]^{4\times4}$, namely $D(i,j) = \sum_{t=1}^T \ind{(w_i,a_j)\in\mu_t}$. In particular, we know that $D(1,1) = N$ and that for all $i\in\brc*{2,3,4}$, it holds that 
    \begin{align}
        \label{eq: regret assump w contradiction}
        \forall i\in\brc*{2,3,4}, \quad \sum_{j=1}^4 \mU(i,j)D(i,j)\ge \frac{3T}{8}-\frac{N}{32}
    \end{align}    

    We now state a set of assumptions on the matching that the policy outputs at each round. Each of these assumptions never decreases the worker utility. Thus, and since we want to prove a contradiction in the utility lower bound of Eq.~(\ref{eq: regret assump w contradiction}), they could be assumed without loss of generality. 
    \begin{enumerate}
        \item When $w_1$ is not matched to $a_1$, it is always matched to $a_2$ ($D(1,1)+D(1,2)=T$) -- so that it suffers zero regret. 
        \item $a_3$ is always matched to either $w_2$ or $w_4$ ($D(2,3)+D(4,3)=T$). 
        \item $a_1$ is always assigned to one of the first three workers ($D(1,1)+D(2,1)+D(3,1)=T$).
        \item If $a_1$ is not assigned to $w_3$, then $a_4$ is assigned to $w_3$ ($D(3,1)+D(3,4)=T$).
    \end{enumerate}
    Notice that changing each individual allocation to follow this condition can only require unmatching a worker from a job that yields her no utility and matching all conditions is feasible (e.g., by $\mu=\brc*{(w_1,a_1),(w_3,a_4),(w_4,a_3)}$). 
    
    We now modify the matching allocation $D$ to allocation $\bar{D}$ while maintaining the above properties as follows:
    \begin{itemize}
        \item We initialize $\bar{D}=D$.
        \item If $D(4,3)\leq \frac{3T}{4}+\frac{N}{16}$, we set $\bar{D}(4,3) = \frac{3T}{4}+\frac{N}{16}$ and  $\bar{D}(2,3) = \frac{T}{4}-\frac{N}{16}$; otherwise, we leave $\bar{D}(4,3)=D(4,3)$. By Eq.~(\ref{eq: regret assump w contradiction}), we know that $D(4,3)\geq \frac{3T}{4}-\frac{N}{16}$, and combined with Assumption 2, this change can only decrease the allocation to worker $w_2$ by
        $$D(2,3) - \bar{D}(2,3) = \bar{D}(4,3) - D(4,3) \leq \frac{N}{8}$$
        This decreases the utility of worker $w_2$ by $\frac{1}{2}\br*{D(2,3) - \bar{D}(2,3)}\leq \frac{N}{16}$, and after this modification, the cumulative utility of worker $w_4$ is $\frac{3T}{8}+\frac{N}{32}$.
        \item By Assumption 1, we know that $D(1,1)=N$ and $D(1,2)=T-N$. We also know by assumption 4 that in all rounds where $a_1$ was assigned to $w_1$, $a_4$ was assigned to $w_3$, and therefore, $D(3,4)\ge N$. In $\bar{D}$, we move all the $N$ assignments of $(w_1,a_1)$ to $(w_1,a_2)$, so that $\bar{D}(1,1)=0$ and $\bar{D}(1,2)=T$; in particular, $w_1$ still gets its OSS. We split the allocation of $a_1$ evenly between $w_2$ and $w_3$ by letting:
        \begin{enumerate}
            \item $\bar{D}(2,1) = \min\brc*{T-\bar{D}(2,3), D(2,1)+N/2}$, thus making sure that the utility of $w_2$ is at least either $T/2\ge 3T/8+N/16$ or 
            $$\frac{1}{2}\br*{\bar{D}(2,1)+\bar{D}(2,3)} \geq \frac{1}{2}\br*{D(2,1)+\frac{N}{2} + D(2,3)-\frac{N}{8}} \geq \frac{3T}{8}-\frac{N}{32} + \frac{3N}{16}
            = \frac{3T}{8} + \frac{5N}{32},$$
            where in the last inequality we again used the assumption on the regret of $w_2$ in Eq.(\ref{eq: regret assump w contradiction}).
            \item We move the matches from $(w_1,a_1)$ that were not allocated to $D(2,1)$ as follows:
            \begin{align*}
                &\bar{D}(3,1) = D(3,1)+N - \br*{\bar{D}(2,1) - D(2,1)} \geq  D(3,1)+N/2,\quad \textrm{and,}\\
                & \bar{D}(3,4) = D(3,4) - \br*{N - \br*{\bar{D}(2,1) - D(2,1)}} \geq D(3,4)-N/2.
            \end{align*}
            Both allocations are valid since $D(3,4)\ge N$ due to Assumption 4. In particular, this shift from $a_4$ to $a_1$ increases the utility of $w_3$ by at least $\br*{\frac{1}{2} -\frac{1}{4}}\frac{N}{2} =\frac{N}{8}$, ensuring it a total utility of at least $3T/8+3N/32$. 
        \end{enumerate}
    \end{itemize}
    Notice that all changes either kept $\bar{D}$ a doubly-stochastic matrix or decreased the sum of a row - we can w.l.o.g increase another element of $\bar{D}$ or use partial matchings.

    Importantly, at the end of this process, the utility of $w_1$ under the matching distribution $\bar{D}$ remained $T/2$, while the utility of all other workers increased by at least $N/16$ - contradicting the fact that no matching distribution can collect more than $3T/8$ to all workers $w_2,w_3,w_4$. Thus, Eq.~(\ref{eq: regret assump w contradiction}) cannot hold and at least one worker must suffer a regret of at least $N/32$. Finally, since $\liminf_{T\to\infty} \frac{\mathbb{E}_{\boldsymbol{\nu}, \pi} \brs*{N_{11}(T)}}{T^{1 - 2\delta}} >0$, we get the same for the regret of one of the workers, concluding the proof.
\end{proof}
    
    \section{Proof of Lemma~\ref{lem:instance_property}}\label{sec:proof_lemma_lower_bound}
    \begin{proof}
We proof the three properties as follows.

\textbf{1. Optimal Stable Share}

    Under $\boldsymbol{\nu}$, consider the following stable matchings: $\mu_1 = \left\{(w_1, a_2), (w_2, a_1), (w_3, a_4), (w_4, a_3)\right\}$ and $\mu_2 = \left\{(w_1, a_2), (w_2, a_3), (w_3, a_1), (w_4, a_4)\right\}$. The optimal stable share is $\oss{w} = \frac{1}{2}, \forall w \in \gW$ with $w_1$ and $w_2$ receives it in both matchings, $w_3$ receives it in matching $\mu_2$ and $w_4$ receives it in matching $\mu_1$.
Under $\boldsymbol{\nu'}$, the optimal stable shares are $(u')^*(w_1) = \frac{1}{2} + \gamma$, $(u')^*(w_2) = \frac{1}{2}$, $(u')^*(w_3) = \frac{1}{4}$, and $(u')^*(w_4) = 0$, achieved through the stable matching $\mu' = \left\{(w_1, a_1), (w_2, a_3), (w_3, a_4), (w_4, a_2)\right\}$.

\textbf{2. Relevant Utility Gap}
    
Besides, the relevant utility gap for $\boldsymbol{\nu}$ is $\Delta_{\rel}^{\boldsymbol{\nu}} = 0$ since both $\mu_1$ and $\mu_2$ belongs to the Pareto-optimal stable matching set $\gS_{opt}^{\mU}$.
On the other hand, since the jobs have global preference rankings over the workers, i.e., serial dictatorship, $\mu'$ is the unique stable matching with respect to $\boldsymbol{\nu'}$, i.e., $\gS_{opt}^{\mU'} = \left\{\mu'\right\}$. 
A perturbation of $-\gamma$ in $\mU(1, 1)$ or $\gamma$ in $\mU(1, 2)$ brings ties for worker $w_1$, and hence would change $\gS_{opt}^{\mU'}$ tp $\gS_{opt}^{\mU}$. On the other hand, a perturbation of $\frac{1}{4}$ in $\mU(3, 2)$ or $-\frac{1}{4}$ in $\mU(3, 4)$ would make $\gS_{opt}^{\mU'}$ include both $\mu'$ and $\mu'_2 = \left\{(w_1, a_1), (w_2, a_3), (w_3, a_2), (w_4, a_4)\right\}$.
All other entries need a perturbation of scale larger than $\frac{1}{4}$ to change the Pareto-optimal stable matching set.
Since $\gamma < \frac{1}{4}$, we have that $\Delta_{\rel}^{\boldsymbol{\nu'}} = \gamma$.
    
\textbf{3. Benchmark Utility}

Let $\gamma = cT^{-1/2 + \delta}$ for some $\delta \in (0, \frac{1}{2})$. Then, under $\boldsymbol{\nu}$, we aim to minimize the approximation regret $Reg_{i}^{\alpha}(T)$ for every worker $w_i$, while under $\boldsymbol{\nu'}$, the objective is to minimize regret $Reg_i(T)$ for each worker $w_i$. 
Given an offline oracle that could compute the best possible approximation ratio, the benchmark utilities for the four workers (after multiplying the approximation ratio) are $\left(\frac{1}{2}, \frac{3}{8}, \frac{3}{8}, \frac{3}{8}\right)$ under $\boldsymbol{\nu}$ and $\left(\frac{1}{2} + \gamma, \frac{1}{2}, \frac{1}{4}, 0\right)$ under $\boldsymbol{\nu'}$.
For $\boldsymbol{\nu'}$, by serial dictatorship, the allocation scheme is equivalent to letting the workers choose their favorite jobs one by one. Doing so leads to the matching $\mu=\left\{(w_1, a_2), (w_2, a_3), (w_3, a_1), (w_4, a_4)\right\}$, which is unique since for all workers, when they choose their jobs, only a single unallocated job maximizes their utility. Hence, the benchmark utilities are immediately determined by the utility under this matching.
For $\boldsymbol{\nu}$, since all workers but $w_1$ have no utility from job $a_2$, and $\oss{w_1} = \mU(1, 2)$, it is always optimal to assign $a_2$ to $w_1$ deterministically and the benchmark utility for $w_1$ is $1/2$. 
For the other players, if we match $w_2$ to $a_1$, then for $w_3$ and $w_4$, there are two possible matchings, i.e., $\left\{(w_3, a_4), (w_4, a_3)\right\}$ and $\left\{(w_3, a_3), (w_4, a_4)\right\}$, but the first one is always better since it gives both players higher utilities. Similarly, if we match $w_2$ to $a_3$, it is always better to select the matching $\left\{(w_3, a_1), (w_4, a_4)\right\}$rather than $\left\{(w_3, a_4), (w_4, a_3)\right\}$, and if $w_2$ is matched to $a_4$, then we choosing $\left\{(w_3, a_1), (w_4, a_3)\right\}$ yields higher utilities than choosing $\left\{(w_3, a_3), (w_4, a_1)\right\}$.
Since all three three players have an OSS of 1/2, to maximize the OSS ratio, we need to compute a distribution $D$ over the three matchings $\mu_a = \left\{(w_2, a_1), (w_3, a_4), (w_4, a_3)\right\}$, $\mu_b = \left\{(w_2, a_3), (w_3, a_1), (w_4, a_4)\right\}$ and $\mu_c = \left\{(w_2, a_4), (w_3, a_1), (w_4, a_3)\right\}$, such that $\min\left\{u_D(w_2), u_D(w_3), u_D(w_4)\right\}$ is maximized. Noticing that each of the three matchings yields the OSS to two players and a lower utility for the third one, we can conclude that the optimal balance would be $u_D(w_2) = u_D(w_3) = u_D(w_4)$ -- otherwise, we could increase the OSS-ratio by moving utility from the highest rewarded player to the lowest rewarded one. This condition is satisfied iff 
\begin{align*}
    \sP(\mu_a) = \frac{1}{2}, \quad \sP(\mu_b) = \frac{1}{4}, \quad \sP(\mu_c) = \frac{1}{4},
\end{align*}
and the benchmark utility for any of these three players is $\frac{3}{8}$.
\end{proof}

    \section{Discussion Regarding Stable Matching with One-sided Ties and Job Utilities}\label{sec:discuss}
    In this paper, we consider a matching market where one side has possibly tied cardinal preferences and the other side has strict ordinal preferences.
It directly generalizes to the setting that both sides have cardinal preferences but only one side admits ties, by recovering an ordinal preference list from the utility, 
and we can define the OSS-ratio for the jobs in a similar fashion, denoted as $R_{\gM}^{a}$; in the following, we rename the OSS-ratio for the workers as $R_{\gM}^{w}$ for distinguishment.
We claim that the setting with one-sided ties is not only practical in reality, but also important for our theoretical results, if we want to consider the OSS-ratio for both sides of the market.

\paragraph{Stable Matching Without Ties}
The distributive lattice structure is a striking feature for a matching market without ties~\citep{knuth1976mariages}, which reveals that all workers could be optimally matched simultaneously, by simply running the deferred-acceptance algorithm with worker proposing, denoted as $\mu_{w}$. Conversely, job-proposing deferred-acceptance algorithm, denoted as $\mu_a$, gives every job its corresponding optimal stable match. 
Therefore, construct the distribution $D$ as follows:
\begin{align*}
	\mathbb{P}(D = \mu_w) = \frac{1}{2}, \quad \mathbb{P}(D = \mu_a) = \frac{1}{2}.
\end{align*}
Since $\mu_w$ and $\mu_a$ are both stable matchings, with distribution $D$, we know that $R_{\gM}^{w} \leq R_{\gS}^{w} \leq \frac{1}{2}$, and the same result holds for $R_{\gM}^{a}$.
This implies that both $R_{\gM}^{w}$ and $R_{\gM}^{a}$ are $\Theta(1)$, and $\Omega(1)$ is a trivial lower bound for the two ratios.

\paragraph{Stable Matching With One-sided Ties}
The lattice structure is absent when ties exist in the preference profiles~\citep{manlove2002structure}. When only one side of the market admits ties, we have proved that $R_{\gM}^{w} = \Theta(\log N)$. On the other hand, for $R_{\gM}^{a}$, we have the following result.
\begin{thm}\label{thm:cr_lb_assign}
	For a matching market where the workers have ties while the jobs have strict preferences,  there exists an instance such that $R_{\gM}^{a} = \Omega(N)$.
\end{thm}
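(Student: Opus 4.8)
For each $N\ge 2$ I would construct a single "central" worker whose preferences are maximally tied, surrounded by dummy workers no job values. Take $N$ workers $w_0,w_1,\dots,w_{N-1}$ and $N$ jobs $a_1,\dots,a_N$. On the worker side (which carries the ties) set $\mU(w_0,a_j)=1$ for every $j\in[N]$ — an $N$-way tie for $w_0$ — and $\mU(w_i,a_j)=0$ for every $i\ge1$ and every $j$, so $w_1,\dots,w_{N-1}$ refuse all jobs. On the job side (strict preferences), let every job $a_j$ value $w_0$ at $\mV(a_j,w_0)=1$ and value each $w_i$ ($i\ge1$) at a distinct positive amount below $\tfrac1N$, e.g.\ $\mV(a_j,w_i)=\tfrac{i}{N^2}$, so no job has ties. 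As in the main text I write $\mathrm{OSS}^a(a):=\max_{\mu\in\gS}\mV(a,\mu(a))$ for the job-side optimal stable share and $\mV_D(a):=\mathbb{E}_{\mu\sim D}[\mV(a,\mu(a))]$, so that $R_{\gM}^{a}=\min_{D\in\Delta(\gM)}\max_{a\in\gA}\mathrm{OSS}^a(a)/\mV_D(a)$.

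\textbf{Step 1: every job has $\mathrm{OSS}^a(a_j)=1$.} Fix $j$ and consider the matching $\mu=\{(w_0,a_j)\}$ with all other agents unmatched. I would check it is weakly stable: a pair $(w_0,a_{j'})$ with $j'\ne j$ does not block since $\mU(w_0,a_{j'})=1=\mU(w_0,\mu(w_0))$ is not a strict improvement; a pair $(w_i,a_{j'})$ with $i\ge1$ does not block since $\mU(w_i,a_{j'})=0$ and, by the stated $\bot$-convention, $w_i$ does not strictly prefer $a_{j'}$ to being unmatched. Since $\mV(a_j,\cdot)\le1$, job $a_j$ attains its maximum conceivable value $1$ in this stable matching, so $\mathrm{OSS}^a(a_j)=1$.

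\textbf{Step 2: a counting bound over all matchings.} Fix an arbitrary $D\in\Delta(\gM)$. In every matching $w_0$ is matched to at most one job, so $\sum_{j=1}^N\mathbb{P}_{\mu\sim D}[\mu(w_0)=a_j]\le1$. For each $j$, the event $\{\mu(a_j)=w_0\}$ is the same as $\{\mu(w_0)=a_j\}$, and off this event $a_j$ earns at most $\tfrac1N$; hence $\mV_D(a_j)\le\mathbb{P}_{\mu\sim D}[\mu(w_0)=a_j]+\tfrac1N$. Summing, $\sum_{j=1}^N\mV_D(a_j)\le1+N\cdot\tfrac1N=2$, so some job $a_{j^\star}$ has $\mV_D(a_{j^\star})\le\tfrac2N$, giving $\max_{a\in\gA}\dfrac{\mathrm{OSS}^a(a)}{\mV_D(a)}\ge\dfrac{\mathrm{OSS}^a(a_{j^\star})}{\mV_D(a_{j^\star})}\ge\dfrac N2$ (read as $+\infty$ if $\mV_D(a_{j^\star})=0$). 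Since $D$ was arbitrary, $R_{\gM}^{a}\ge\tfrac N2=\Omega(N)$.

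\textbf{Expected obstacle.} I do not anticipate a substantive difficulty; the one point to get exactly right is the weak-stability verification in Step 1, because this is precisely where the one-sided ties are used: the $N$-way tie of $w_0$ is what prevents any $(w_0,a_{j'})$ from being a blocking pair and so lets every job be $w_0$'s partner in some stable matching, inflating every job's OSS to $1$. (If $w_0$ had strict preferences, only its top job would have $\mathrm{OSS}^a=1$ and $R_{\gM}^{a}$ would collapse to $O(1)$, consistent with the tie-free discussion.) The remaining care — bookkeeping for unmatched/refusing agents and the $\bot$-convention — is routine and is made automatic by assigning $w_1,\dots,w_{N-1}$ utility $0$; indeed the bound is robust whether or not $\gM$ is taken to exclude matchings that pair a worker with a job it refuses.
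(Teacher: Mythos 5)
Your proof is correct and follows essentially the same route as the paper's: a single top-priority worker with an $N$-way tie whom every job covets, so that every job has job-side OSS equal to $1$, yet any single matching can give that worker to only one job. The only cosmetic difference is that you fix the remaining workers' job-side values below $1/N$ and close with an averaging argument giving a concrete $N/2$ bound, whereas the paper assigns them values $\varepsilon_1 > \varepsilon_2 > \cdots$ and takes the limit $\varepsilon \to 0$; both yield $R_{\gM}^{a} = \Omega(N)$.
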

\begin{proof}
	Consider the following utility matrix for a market with $N$ workers and $N$ jobs. The matrix encodes the preferences of the workers and the jobs simultaneously.
	\begin{align*}
		\mU = \begin{bmatrix}
			1 & 1 & \cdots & 1 \\
			\varepsilon_1 & \varepsilon_1 & \cdots & \varepsilon_1 \\
			\varepsilon_2 & \varepsilon_2 & \cdots & \varepsilon_2 \\
			\vdots & \vdots & \ddots & \vdots \\
			0 & 0 & \cdots & 0
		\end{bmatrix},
	\end{align*}
	where $\varepsilon_1 > \varepsilon_2 > \cdots > 0$. $u_{i, j} \geq u_{i, j'}$ implies $a_j \succsim_{w_i} a_{j'}$ and $u_{i, j} \geq u_{i', j}$ implies $w_i \succsim_{a_j} w_{i'}$.
	In this example, every worker is indifferent among all the jobs, while every job has the preference profile that $w_1 \succ w_2 \succ \cdots \succ w_N$. The optimal stable share for each job is 1, and it is achieved by an appropriate tie-breaking of $w_1$. However, in any matching, exactly one job would receive a utility of 1. When  $\varepsilon_1$, $\varepsilon_2$, $\cdots$ approach 0, we have that $\max_a \frac{\oss{a}}{\mU_D(a)} \geq N$, with the equality achieved when we consider the  distribution $D$ that assigns probability $1 / N$ on matching $\mu_j$, where $\mu_j$ refers to a matching in which job $a_j$ gets a utility of 1. Therefore, $\lim_{\varepsilon \to 0} R_{\gM}^{a} = \Omega(N)$.
\end{proof}

\paragraph{Stable Matching with Two-sided Ties}
When both sides of the market have ties, by symmetry, the OSS-ratio for the worker side and the job side would be of the same order. 
\begin{thm}\label{thm:cr_lb_two}
	For a matching market where both sides admit ties, there exists an instance, such that $R_{\gM}^w = R_{\gM}^a = \Omega(N)$. 
\end{thm}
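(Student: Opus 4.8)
The plan is to produce a \emph{single} instance whose utility matrix is symmetric, so that exchanging the roles of workers and jobs is an isomorphism of the instance; it then suffices to lower bound the worker-side ratio $R_{\gM}^{w}$, and $R_{\gM}^{a}=R_{\gM}^{w}$ follows for free. I would take $\gW=\{w_1,\dots,w_N\}$, $\gA=\{a_1,\dots,a_N\}$, and for a small parameter $\varepsilon\in(0,1)$ set
$$
\mU(w_i,a_j)=
\begin{cases}
1 & \text{if } i=1 \text{ or } j=1,\\
\varepsilon & \text{otherwise.}
\end{cases}
$$
Thus $w_1$ is indifferent among all jobs and is a (weakly) most preferred worker for every job; $a_1$ is indifferent among all workers and is a (weakly) most preferred job for every worker; every other worker--job pair has the small value $\varepsilon$. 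This is exactly the two-sided analogue of the instance in Theorem~\ref{thm:cr_lb_assign}, and at the end I let $\varepsilon\to 0$ (one could instead just take $\varepsilon=0$, so that $w_2,\dots,w_N$ accept only $a_1$ and $a_2,\dots,a_N$ accept only $w_1$).

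The first step is to compute the optimal stable shares. Plainly $\oss{w_1}=1$. For $i\ge 2$ I would exhibit $\mu_i=\{(w_i,a_1)\}\cup M$, where $M$ is any perfect matching of $\gW\setminus\{w_i\}$ onto $\gA\setminus\{a_1\}$, and verify it is weakly stable: a blocking pair needs \emph{both} sides to strictly prefer each other, but (i) no worker values $a_1$ above $1=\mU(w_i,a_1)$, so $a_1$ is never in a blocking pair; (ii) for $a_j$ with $j\ge 2$, the only worker valuing $a_j$ above $\varepsilon$ is $w_1$, which is indifferent among $a_2,\dots,a_N$ and is already matched to one of them, hence does not strictly prefer $a_j$; (iii) among the ``cheap'' pairs every utility equals $\varepsilon$, so no strict improvement exists there. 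Hence $\oss{w_i}=1$ for all $i$, and by the side-swapping symmetry $\oss{a_j}=1$ for all $j$.

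The second step is an averaging argument. Fix any $D\in\Delta(\gM)$. For $i\ge 2$, worker $w_i$ gets utility $1$ exactly when matched to $a_1$, and at most $\varepsilon$ otherwise (including when unmatched), so $\mU_D(w_i)\le \mathbb{P}_{\mu\sim D}(\mu(w_i)=a_1)+\varepsilon$. Since at most one worker is matched to $a_1$ in any matching, $\sum_{i=2}^{N}\mathbb{P}_{\mu\sim D}(\mu(w_i)=a_1)\le 1$, so $\sum_{i=2}^{N}\mU_D(w_i)\le 1+(N-1)\varepsilon$ and some $w_i$ ($i\ge 2$) has $\mU_D(w_i)\le \frac{1+(N-1)\varepsilon}{N-1}$. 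As $\oss{w_i}=1$, this yields $\max_{w\in\gW}\frac{\oss{w}}{\mU_D(w)}\ge \frac{N-1}{1+(N-1)\varepsilon}$ for every $D$, hence $R_{\gM}^{w}\ge \frac{N-1}{1+(N-1)\varepsilon}$; letting $\varepsilon\to 0$ gives $R_{\gM}^{w}=\Omega(N)$, and by symmetry $R_{\gM}^{a}=R_{\gM}^{w}=\Omega(N)$.

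The only delicate point is the weak-stability check in the first step: one must confirm that the ties are placed precisely so that $w_1$'s (resp. $a_1$'s) indifference among the cheap options destroys the otherwise-obvious blocking pairs. That is exactly what makes $a_1$ simultaneously a valid stable match of \emph{every} worker (so all OSS values equal $1$) while still being reachable by only one worker per matching --- the tension that forces the $\Omega(N)$ loss. Everything else is routine counting.
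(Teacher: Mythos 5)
Your proof is correct and follows essentially the same route as the paper's: the paper uses the same symmetric instance (with the cheap entries set exactly to $0$ rather than $\varepsilon$), observes that every matching of all workers and jobs is stable so all OSS values equal $1$, and applies the same one-worker-gets-$a_1$-per-matching counting argument, concluding by the worker/job symmetry. Your write-up just makes the weak-stability check and the averaging step more explicit; the only nitpick is that in your point (i) the operative fact is that $a_1$ is indifferent among all workers (so $a_1$ never strictly prefers a deviator), which coincides numerically with the statement you wrote only because the matrix is symmetric.
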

\begin{proof}
	Consider the following $N \times N$ utility matrix which simultaneously encodes the preferences of the workers and the jobs.
	\begin{align*}
		\mU = \begin{bmatrix}
			1 & 1 & \cdots & 1 \\
			1 & 0 & \cdots & 0 \\
			\vdots & \vdots & \ddots & \vdots \\
			1 & 0 & \cdots & 0
		\end{bmatrix},
	\end{align*}
	where the entries of this utility matrix share the same correspondence to the preference profile as those in the example in Theorem~\ref{thm:cr_lb_assign}.
	In this example, worker $w_1$ is indifferent among all the jobs, while all the other workers share the same preference over the jobs, that is, $a_1 \succ a_2 \sim a_3 \sim \cdots \sim a_N$. The preference of jobs over workers is symmetrically derived. Every matching that involves all the workers and jobs is a stable matching, which gives a utility of 1 to worker $w_1$ and job $a_1$, while for the remaining workers and jobs, at most one worker and one job would receive a utility of 1, and all the other workers get 0. For every worker and every job, there exists a tie-breaking mechanism such that it gets a utility of 1. Therefore, the optimal stable share is $\oss{w} = \oss{a} = 1$ for any $w$ and $a$. And any distribution $D$ over matchings gives $\frac{\oss{w_1}}{\mU_D(w_1)} = \frac{\oss{a_1}}{\mU_D(a_1)} = 1$, $\max_{w \in \left\{w_2, w_3, \cdots, w_N\right\}} \frac{\oss{w}}{\mU_D(w)} \geq N - 1$, and $\max_{a \in \left\{a_2, a_3, \cdots, a_N\right\}} \frac{\oss{a}}{\mU_D(a)} \geq N - 1$, with the equality achieved when we adopt the random allocation that assigns probability $1 / (N - 1)$ on matching $\mu_i$, in which workers $w_1$ and $w_i$, jobs $a_1$ and $a_i$ receive a utility of 1, while all the other workers and jobs get 0.
\end{proof}

    \section{Discusssion Regarding Two-sided Bandit Learning in Matching Markets}\label{sec:discuss_two_side}
    In this paper, we consider bandit learning for one side of the market, where the preferences of jobs over workers are assumed to be known. A possible future direction would be to consider two-sided bandit learning for matching markets.

For a matching market without ties, a fundamental result indicates that the worker-optimal stable matching is necessarily job-pessimal, and no stable matching simultaneously maximizes utility for both sides. Therefore, a reasonable benchmark would be to consider the optimal stable matching for workers and the pessimal stable matching for jobs, leading to the following regret definition:
\begin{align*}
    Reg_i(T) &= T \cdot \bar{\mU}(w_i) - \mathbb{E} \left[\sum_{t = 1}^T X_i(t)\right], \quad \forall w_i \in \gW, \\
    Reg_j(T) &= T \cdot \underline{\mU}(a_j) - \mathbb{E} \left[\sum_{t = 1}^T X_j(t)\right], \quad \forall a_j \in \gA,
\end{align*}
where $\bar{\mU}(w_i)$ represents the utility from the worker-optimal stable matching and $\underline{\mU}(a_j)$ denotes the utility from the job-pessimal stable matching.

Previous work~\citep{zhang2024decentralized} studied regret minimization for both sides of the market in a setting without ties, adopting the regret definitions above. \citet{zhang2024decentralized} establishes an $\gO\left(K \log T / \Delta^2\right)$ regret bound for every agent, measured against their respective benchmark (worker-optimal and job-pessimal). Our algorithm directly generalizes to the same setting and matches the theoretical guarantees of \citet{zhang2024decentralized} when there are no ties. This is because the initial exploration phase will find the strict preference list for both sides simultaneously w.h.p. and thus, after committing, the resulting Gale-Shapley output will be the worker-optimal / job-pessimal stable matching.

In contrast, introducing ties to the market significantly complicates the analysis. The set of stable matchings expands substantially, and no single matching simultaneously satisfies the benchmarks defined for both sides. To extend our results to markets with ties, we propose using the Optimal Stable Share (OSS) as the benchmark for workers as defined in our paper. It is thus natural to introduce the equivalent Pessimal Stable Share (PSS) -- the minimum utility a job can receive across all stable matchings. However, whether efficient algorithms can achieve sublinear regret for all agents in markets with ties remains open and would be interesting to explore.

\end{document}